\newcommand{\R}{\mathbb{R}}
\newcommand{\ddr}{\mathrm{d}}
\newtheorem{thm}{Theorem}[section]
\newtheorem{lem}[thm]{Lemma}
\newtheorem{defi}[thm]{Definition}
\newtheorem{prp}[thm]{Proposition}
\newtheorem{cor}[thm]{Corollary}
\newtheorem{remark}[thm]{Remark}
\providecommand{\keywords}[1]
{
  \small	
  \textbf{\textit{Keywords:}} #1
}
\begin{document}

\title{Quasi-Bayesian sequential deconvolution}


\author[1]{Stefano Favaro\thanks{stefano.favaro@unito.it}}
\author[2]{Sandra Fortini\thanks{sandra.fortini@unibocconi.it}}
\affil[1]{\small{Department of Economics and Statistics, University of Torino and Collegio Carlo Alberto, Italy}}
\affil[2]{\small{Department of Decision Sciences, Bocconi University, Italy}}

\maketitle

\begin{abstract}
Density deconvolution is the inverse problem of estimating a probability density from observations contaminated by additive noise. Traditionally studied in static or batch settings, it increasingly arises with streaming data, where existing frequentist and Bayesian  procedures face substantial computational bottlenecks. We develop a quasi-Bayesian nonparametric method for sequential density deconvolution based on Newton's recursive algorithm. The resulting estimate is straightforward to evaluate and scalable to massive datasets, as its per-observation computational cost remains constant as new data arrive. The quasi-Bayesian interpretation enables uncertainty quantification: local and uniform central limit theorems yield asymptotic credible intervals and bands, respectively. Under a frequentist data-generating model, we establish $L^1$-consistency for the proposed estimate and show that it asymptotically agrees with the estimate that would be obtained if the uncontaminated variables were directly observed. Further, under additional regularity conditions, we derive an $L^1$-Wasserstein convergence rate and establish merging, at an explicit rate, with the Bayesian nonparametric posterior mean estimate under a Dirichlet process mixture model. Synthetic-data experiments, together with an acquisition-ordered flow-cytometry application, demonstrate accuracy comparable to Bayesian nonparametric and Fourier deconvolution methods, while offering a substantial computational advantage.
\end{abstract}

\keywords{Credible bands; Newton's algorithm; quasi-Bayesian learning; sequential density deconvolution; Wasserstein distance}


\section{Introduction}
\subsection{Background and motivations}
Density deconvolution, or simply deconvolution, is the inverse problem of estimating a probability density from observations contaminated by additive noise. More precisely, one observes $n\geq1$ real-valued random variables $Y_1,\ldots,Y_n$ generated according to the convolution model
\begin{displaymath}
Y_i=X_i+Z_i,\qquad i=1,\ldots,n,
\end{displaymath}
where the $X_i$'s are i.i.d. unobserved signal variables with unknown density $f_X$, and the $Z_i$'s are i.i.d. noise variables with known density $f_Z$, independent of the $X_i$'s. The objective is to estimate $f_X$ from $Y_1,\ldots,Y_n$, whose density satisfies the convolution identity $f_Y=f_X\ast f_Z$. The frequentist literature is extensive, encompassing parametric and nonparametric methods with strong theoretical guarantees \citep{Car(88),Ste(90),Dig(93),Fan(91),Fan(93),Pen(99),Del(02),Hes(04),But(08a),But(08b)}; see \citet[Chapters~10--14]{Yi(21)} for an overview. By comparison, the Bayesian literature is more recent and limited, focusing primarily on theoretical guarantees under Dirichlet process mixture models for the signal, with methodological development remaining less mature \citep{Gao(16),Scr(18),Su(20),Rou(23)}.

Density deconvolution arises when observations are blurred or distorted, often by measurement error in the underlying signal \citep{Buo(10),Yi(21)}. Traditionally studied in static or batch settings, it has applications across the natural and social sciences, including medical imaging, bioinformatics, genomics, chemistry, spectroscopy, astronomy, and econometrics. The increasing availability of streaming data in these areas, together with emerging applications in social media analytics, remote sensing, e-commerce, and privacy-preserving data analysis, creates a growing need for online or sequential methods that enable timely inference and rapid adaptation as new observations arrive. Although existing frequentist and Bayesian procedures can, in principle, be adapted to streaming data, they typically incur substantial computational costs due to repeated optimization or posterior simulation. Moreover, no principled sequential method for density deconvolution currently combines theoretical guarantees with demonstrated empirical effectiveness. We fill this gap with a scalable and theoretically grounded framework specifically designed for streaming data.

\subsection{Preview of our contributions}
We propose a nonparametric method for sequential density deconvolution, assuming that the unknown density $f_X$ of the unobserved $X_i$'s admits a mixture representation of the form
\begin{equation}\label{eq:mixture}
f_G^{(X)}(x)=\int_{\Theta}k(x\mid\theta)G(\ddr\theta),
\qquad x\in\mathbb{R},
\end{equation}
where $k(\cdot\mid\theta)$ is a known positive kernel, $\theta\in\Theta\subset\mathbb{R}$, and $G$ is an unknown mixing distribution on $\Theta$. Since $f_Y=f_X\ast f_Z$, the density of the observed $Y_i$'s has a mixture representation over $\Theta$ with kernel $k\ast f_Z$ and the same mixing distribution $G$. Starting from an initial guess $\widetilde G_0$, our method updates $\widetilde G_{n-1}$ recursively upon receiving each new observation $Y_n$, according to the procedure of \citet{Smi(78)}, known as Newton's algorithm \citep{New(98),Mar(08)}. A sequential estimate of $f_G^{(X)}$ is then obtained from \eqref{eq:mixture} by replacing $G$ with  $\widetilde G_{n}$. This recursive procedure admits a quasi-Bayesian interpretation, providing a predictive construction of a Bayesian model that is asymptotically equivalent \citep{For(20)}. That is, for large $n$, it implicitly induces a prior on $G$ under which $f_{\widetilde G_n}^{(X)}$ is the posterior mean of $f_G^{(X)}$. Moreover, $f_{\widetilde G_n}^{(X)}$ is straightforward to evaluate and scalable to massive datasets, as its per-observation computational cost remains constant as new data arrive.

Interpreting Newton's algorithm as a quasi-Bayesian learning model makes it possible to quantify uncertainty in the estimate $f_{\widetilde{G}_n}^{(X)}$ of $f_G^{(X)}$. For density deconvolution at a fixed point $x\in\mathbb{R}$, we establish a local central limit theorem that enables the construction of a sequential asymptotic credible interval for $f_G^{(X)}(x)$. We then extend this result to density deconvolution over a bounded interval $I\subset\mathbb{R}$ by proving a uniform central limit theorem, which yields a sequential asymptotic credible band for $f_G^{(X)}$ on $I$. These new central limit theorems also apply to the direct density estimation problem, in which the $X_i$'s are observed, thus extending the local uncertainty quantification results of \citet{For(20)} for the mixing distribution $G$ to the induced density, both locally and uniformly.

We establish asymptotic guarantees for the estimate $f_{\widetilde G_n}^{(X)}$ under the frequentist modeling assumption that the $X_i$'s are i.i.d. from a mixture model with kernel $k$ and a ``true'' mixing distribution $G^{\ast}$. This frequentist framework was first considered by \citet{Tok(09)} for the direct density estimation problem; see also \citet{MarTok(09),Mar(12),Dix(23)}. First, under $G^{\ast}$, we prove that $f_{\widetilde G_n}^{(X)}$ is a consistent estimate of $f_{G^{\ast}}^{(X)}$ in $L^1$ and asymptotically agrees with the estimate that would be obtained if the $X_i$'s were directly observed. Second, still under $G^{\ast}$ and under additional regularity conditions, we derive a rate of convergence, in the $L^1$-Wasserstein distance, of the probability measure induced by $f_{\widetilde G_n}^{(X)}$ to that induced by $f_{G^{\ast}}^{(X)}$. We also establish that the associated distribution function $F_{\widetilde G_n}^{(X)}$ asymptotically merges, at an explicit rate, with the Bayesian nonparametric posterior mean estimate under a Dirichlet process mixture model \citep{Fer(73),Lo(84)}.

We illustrate the proposed methodology through synthetic- and real-data analyses. In the synthetic-data experiments, we consider Gaussian location-scale mixtures under Laplace (ordinary-smooth) and Gaussian (super-smooth) noise distributions. We compare the resulting estimates with those obtained by a Bayesian nonparametric procedure based on the Dirichlet process mixture model, implemented using sequential Monte Carlo and batch algorithms, finding comparable empirical performance but a substantial computational advantage over both. We then analyze flow-cytometry data on reporter fluorescence in differentiating mouse embryonic stem cells. By retaining the acquisition order and considering nested prefixes of the data stream, this application illustrates sequential learning as observations become available. The resulting estimates are compared with the aforementioned sequential Bayesian nonparametric estimates and batch ridge-regularized Fourier deconvolution estimates.

\subsection{Related work}

Our work is connected to a growing literature on predictive or recursive inference, referred to as the ``post-Bayes'' framework. This literature includes recent developments on predictive distributions and martingale posteriors \citep{Cap(18),Han(18),Fon(23),Fon(24),Hol(23),Agn(25),Bat(25),For(25),Yun(25)}, as well as on generalized Bayesian updates \citep{Bis(16),Kno(22)}. Although these contributions provide important insights and bring a broad range of procedures under a common quasi-Bayesian perspective, they have so far focused on direct problems, in which the $X_i$'s are observed. By contrast, our work addresses the inverse setting, in which only the contaminated $Y_i$'s are available, thereby extending the scope of predictive inference to a new class of problems.

\subsection{Organization of the paper}

Section~\ref{sec2} introduces the quasi-Bayesian learning model for density deconvolution, Section~\ref{sec3} develops sequential estimation and uncertainty quantification, and Section~\ref{sec5} establishes asymptotic frequentist guarantees for the resulting estimates. Sections~\ref{sec6} and~\ref{sec61} present, respectively, the synthetic- and real-data analyses. Section~\ref{sec7} concludes with a discussion of directions for future research. The Supplementary Material contains proofs and additional numerical illustrations.


\section{Quasi-Bayesian density deconvolution}\label{sec2}

\subsection{A learning process for density deconvolution}

In the Bayesian framework, a likelihood and a prior specify a model, with the resulting predictive distributions determining the learning process, that is, how inference is updated as data become available. The predictive framework reverses this perspective by specifying the learning process directly through a sequence of predictive distributions; see \citet{For(25)} and the references therein. Under suitable conditions, this sequence implicitly determines both a prior and the corresponding (random) parameter of interest. The predictive distributions describe a learning mechanism rather than a data-generating process, so the observations need not be independent. We adopt this framework for density deconvolution.

Consider a stream of real-valued random variables $(Y_{n})_{n\geq 1}$ that are generated according to
\begin{equation}\label{eq:Y}
Y_{n}=X_{n}+Z_{n}\qquad n\geq1,
\end{equation}
where the unobserved variables $(X_{n})_{n\geq1}$ are independent of the noise variables $(Z_{n})_{n\geq1}$, with the $Z_{n}$'s being i.i.d. with known density $f_{Z}$ with respect to a $\sigma$-finite measure $\lambda$. Denoting by $(\mathcal G_n)_{n\geq 0}$ the natural filtration generated by $(Y_n)_{n\geq1}$, for each $n\geq0$: i) $X_{n+1}$ is conditionally independent of $(X_{1:n},Z_{1:n})$, given $\mathcal G_n$; ii) the conditional distribution of $X_{n+1}$ given $\mathcal G_n$ is a mixture of a known positive kernel $k(\cdot\mid\theta)$, with $\theta\in\Theta\subset\mathbb{R}$, whose density with respect to $\lambda$ is
\begin{equation}\label{eq:predX}
f^{(X)}_{\widetilde{G}_{n}}(x)=\int_{\Theta}k(x\mid\theta)\widetilde{G}_{n}(\ddr\theta)\qquad x\in\mathbb{R},
\end{equation}
where the mixing distribution $\widetilde{G}_{n}$ on $\Theta$ is defined recursively by means of Newton's algorithm \citep{New(98)}, i.e.,
\begin{equation}\label{eq:newton}
\widetilde{G}_{n+1}(\ddr\theta)=(1-\widetilde{\alpha}_{n+1})\widetilde{G}_{n}(\ddr\theta)+\widetilde{\alpha}_{n+1}\frac{(f_{Z}\ast k(\cdot\mid\theta))(Y_{n+1})\widetilde{G}_{n}(\ddr\theta)}{\int_{\Theta}(f_{Z}\ast k(\cdot\mid\theta^{'}))(Y_{n+1})\widetilde{G}_{n}(\ddr\theta^{'})},
\end{equation}
given an initial guess $\widetilde{G}_{0}$ and a sequence $(\widetilde{\alpha}_{n})_{n\geq1}$ of real numbers in $(0,1)$ with $\sum_{n\geq1}\widetilde{\alpha}_{n}=+\infty$ and $\sum_{n\geq1}\widetilde{\alpha}_{n}^{2}<+\infty$.  The sequence $(\widetilde{\alpha}_{n})_{n\geq1}$ is referred to as the learning rate, for which a standard choice is $\widetilde{\alpha}_{n}=(\alpha+n)^{-\gamma}$ for $\alpha>0$ and $\gamma\in(1/2,1]$. See \citet{For(20)} for details.

From \eqref{eq:Y}--\eqref{eq:newton}, for  each $n\geq0$ the density of the conditional distribution of $Y_{n+1}$ given $\mathcal G_n$, with respect to $\lambda$, is
\begin{equation}\label{eq:predY}
f^{(Y)}_{\widetilde{G}_{n}}(y)=\int_{\Theta}\widetilde{k}(y\mid\theta)\widetilde{G}_{n}(\ddr\theta)\qquad y\in\mathbb{R},
\end{equation}
where
\begin{displaymath}
\widetilde{k}(y\mid\theta)=(f_{Z}\ast k(\cdot\mid\theta))(y).
\end{displaymath}
Since $f_{Z}$ is known, $f^{(Y)}_{\widetilde{G}_{n}}$ is itself a mixture over $\Theta$, with known convolution kernel $\widetilde{k}(y\mid\theta)$ and the same mixing distribution $\widetilde{G}_{n}$ defined in \eqref{eq:newton}. In particular, $\widetilde{G}_{n}$ is updated as the new observation $Y_{n+1}$ becomes available, by taking a weighted average, with respect to $\widetilde{\alpha}_{n+1}$, of  $\widetilde{G}_n$ itself and of its posterior distribution, based on $Y_{n+1}$. The sequence of predictive densities $p_n(y):=f_{\tilde G_n}(y)=\int_\Theta\tilde k(y\mid\theta)\tilde G_n(\ddr\theta)$ characterizes the predictive learning mechanism for the observable $Y_{n}$'s. The next proposition shows that this predictive specification induces a well-defined learning process for the $(X_n,Z_n,Y_n)$'s, and hence for the $(X_n)$'s; see \ref{sec:appprop1} for the proof.

\begin{prp}\label{prop:lawX}
There exists a probability space $(\Omega,\mathcal F,P)$ and three sequences of random variables $(X_n)_{n\geq1}$, $(Y_n)_{n\geq1}$ and $(Z_n)_{n\geq1}$, which are defined on $(\Omega,\mathcal F,P)$, such that the following statements hold: i) \eqref{eq:Y}, \eqref{eq:newton} and \eqref{eq:predY} hold true; ii) the $Z_n$'s are independent of the $X_n$'s and i.i.d. with known density $f_Z$ with respect to $\lambda$; iii) for every $n\geq1$, $X_{n+1}$ is conditionally independent of $(X_{1:n},Z_{1:n})$, given $\mathcal G_n$, and the conditional distribution of $X_{n+1}$, given $\mathcal G_n$ has density \eqref{eq:predX} with respect to $\lambda$; iv) the conditional distribution of $X_{n+1}$ given  $\{X_i,Y_i,Z_i:i\leq n\}$ is uniquely defined by \eqref{eq:Y}, \eqref{eq:newton} and \eqref{eq:predY}, as well as by the assumptions on $(Z_n)_{n\geq1}$. 
\end{prp}

\subsection{Quasi-Bayesian properties of the learning process \eqref{eq:Y}--\eqref{eq:newton}}

The next theorem, building on \citet{For(20)}, shows that the learning process \eqref{eq:Y}--\eqref{eq:newton} is quasi-Bayesian, providing a predictive construction of a Bayesian model that is asymptotically equivalent as $n\to+\infty$. Whereas \citet{For(20)} considered direct density estimation with the $X_n$'s observed, deconvolution requires an additional argument because only the contaminated $Y_n$'s are observed. The predictive structure of $(Y_n)_{n\geq1}$ fits the framework of \citet{For(20)}, since the conditional distribution of $Y_{n+1}$ given $\mathcal G_n$ is a mixture with mixing distribution $\widetilde G_n$. It remains to characterize the induced probabilistic structure of $(X_n)_{n\geq1}$ and establish the required properties relative to the filtration $(\mathcal G_n)_{n\geq0}$.

\begin{thm}\label{teo_cid}
Consider the learning process  \eqref{eq:Y}--\eqref{eq:newton}, and assume that $k(x\mid\theta)$ is bounded and continuous in $\theta\in\Theta$ for every $x\in\mathbb R$ and that $\int\sup_{\theta\in\Theta}k(x\mid\theta)\lambda(dx)<+\infty$. As $n\rightarrow+\infty$, $\widetilde G_n$ converges weakly $P$-a.s. to a random probability distribution $\widetilde G$ on $\Theta$,  such that, by defining
\begin{equation}\label{parx}
f^{(X)}_{\widetilde{G}}(x)=\int_{\Theta}{k}(x\mid\theta)\widetilde{G}(\ddr\theta)\qquad x\in\mathbb{R},
\end{equation}
the following statements hold: i) $P$-a.s. $f_{\widetilde G}^{(X)}$ is the limit pointwise and in $L^1(\mathbb R,\mathcal B(\mathbb R),\lambda)$ of $f_{\widetilde G_n}^{(X)}$; ii) $\int_A f^{(X)}_{\widetilde G}(x)\lambda(dx)=\lim_{n\rightarrow+\infty}\frac 1 n  \sum_{k=1}^{n}I(X_k\in A)$; iii) for every $n\geq 0$,  $k\geq 1$ and $A\in\mathcal B(\mathbb R)$, $P(X_{n+k}\in A\mid \mathcal G_n)=\int_A f^{(X)}_{\widetilde G_n}(x)\lambda(dx)=E\left( \int_A f^{(X)}_{\widetilde G}(x)\lambda(dx)\middle|\, \mathcal G_n\right)$ 
with $f_{\widetilde G_n}^{(X)}$ as in \eqref{eq:predX}; iv) for any $n\geq 1$, $P(X_n\in A)=\iint_A f^{(X)}_{G}(x)\lambda(dx)\pi(dG)$, where $\pi$ is the distribution of $\widetilde G$.
\end{thm}

See \ref{app4_0} for the proof of Theorem~\ref{teo_cid}. It follows from \eqref{eq:Y}--\eqref{eq:newton} that, for every $n\geq0$, the random variables $(X_{n+k})_{k\geq1}$ are conditionally identically distributed given $\mathcal G_n$, the information available after the $n$-th step. Since $(X_{n+k})_{k\geq1}$ and $(X_i,Z_i)_{i\leq n}$ are conditionally independent given $\mathcal G_n$, the sequence $(X_n)_{n\geq1}$ is conditionally identically distributed in the sense of \citet{Berti(04)}, with directing random probability measure $f_{\widetilde G}^{(X)}(x)\lambda(dx)$. Consequently, by \citet[Theorem~2.2]{Berti(04)}, conditionally on the random limit $\widetilde G$, the sequence $(X_n)_{n\geq1}$ is asymptotically i.i.d. with density $f_{\widetilde G}^{(X)}$ relative to $\lambda$, while its empirical distribution converges weakly, $P$-almost surely, to the random probability measure with density $f_{\widetilde G}^{(X)}(x)=\int_\Theta k(x\mid\theta)\widetilde G(\ddr\theta)$. This is precisely the quasi-Bayesian framework of \citet{For(20)}, in which $f_{\widetilde G}^{(X)}$ is the random mixture density assigned in the infinite-sample limit, making $\widetilde G$ the natural parameter of interest in the density deconvolution problem.


\section{Sequential estimation and uncertainty quantification}\label{sec3}
\subsection{Preliminaries}\label{sec3:preliminaries}

By Theorem~\ref{teo_cid}, the density $f_{\widetilde G}^{(X)}$ in \eqref{parx} is the inferential target of the quasi-Bayesian learning process \eqref{eq:Y}--\eqref{eq:newton}. Theorem~\ref{teo_cid} also shows that $f_{\widetilde G_n}^{(X)}$ in \eqref{eq:predX}, is a natural sequential estimate of $f_{\widetilde G}^{(X)}$. Unlike in a Bayesian model, however, uncertainty about  $f_{\widetilde G}^{(X)}$ cannot be quantified analytically, since $f_{\widetilde G}^{(X)}$ is defined implicitly through the limiting mixing distribution $\widetilde G$, rather than through a prior specification.  Within the quasi-Bayesian learning process \eqref{eq:Y}--\eqref{eq:newton}, the conditional distribution of $f_{\widetilde G}^{(X)}$, given $Y_{1:n}$, plays the role of the posterior distribution, with $f_{\widetilde G_n}^{(X)}$ as its center. We establish local and uniform approximations of this conditional distribution, leading to asymptotic credible intervals and bands for $f_{\widetilde G}^{(X)}$. Approximations arise from central limit theorems under the following assumptions: A1) $\mbox{Supp}(\widetilde G_0)=\Theta$, with $\mbox{Supp}$ denoting closed support; A2) $\lambda$ is the Lebesgue measure on $\mathbb R$; A3) $k(x\mid\cdot )>0$, bounded and continuous for every $x\in\mathbb R$, and  $\int\sup_{\theta\in\Theta}k(x\mid\theta)\lambda(dx)<+\infty$; A4) $\int_\Theta k(\cdot\mid\theta)G(\ddr\theta)$ and $\int_{\Theta} \widetilde k(y\mid\theta)G(\ddr\theta)$ identify $G$; A5) $f_Z$ is strictly positive on $\mathbb R$ and bounded.

The assumptions A1)--A5) imply that $\widetilde k(y\mid\theta)>0$ for every $y\in\mathbb R$ and $\theta\in\Theta$. For the kernel $\widetilde{k}(y\mid\theta)=(f_{Z}\ast k(\cdot\mid\theta))(y)$,  A3)--A5) hold, for instance, for the Gaussian kernel $k$ with bounded expectation and variance bounded away from zero, and with $f_Z$ the density of either a Laplace (ordinary-smooth) or Gaussian (super-smooth) distribution. A closed form for $\widetilde k$ is not required but improves computational efficiency; otherwise, it can be evaluated numerically once, before the sequential updates, so that no additional integration is required during the recursion.

\begin{remark}
The assumption $k(x\mid\theta)>0$ for all $x$ and $\theta$ can be relaxed to allow $\theta$-dependent supports. In this case, \eqref{eq:predX} and its local and uniform asymptotic properties remain valid on regions where $f_{\widetilde G}^{(X)}$ is strictly positive. We retain the positivity assumption to simplify the statements.
\end{remark}

The central limit theorems below exploit the martingale structure associated with the $X_{n}$'s being conditionally identically distributed, as established in Theorem~\ref{teo_cid}, together with a suitable almost-sure conditional convergence result for martingales; we refer to \ref{appB} for details.

 \subsection{Asymptotic credible intervals}
We establish an asymptotic Gaussian approximation of the conditional distribution of $f^{(X)}_{\widetilde{G}}$,  given $Y_{1:n}$, at a fixed point $x\in\mathbb{R}$, i.e. local  approximation. According to  Theorem~\ref{teo_cid}, under A1)--A5), $f_{\widetilde G_n}^{(X)}(x)$ converges $P$-a.s. to  $f_{\widetilde G}^{(X)}(x)$, and, for every $y\in\mathbb R$, $f_{\widetilde G_n}^{(Y)}(y)$ converges $P$-a.s. to 
\begin{equation}\label{eq:y}
f_{\widetilde{G}}^{(Y)}(y)=\int_\Theta \widetilde k(y\mid\theta)\widetilde G(\ddr\theta).
\end{equation}
Define the limiting conditional density of $(X_n)_{n\geq1}$, given $(Y_n)_{n\geq1}$, as
\begin{equation}\label{eq:xgiveny}
f_{\widetilde{G}}^{(X)}(x\mid y)=\int_\Theta k(x\mid\theta)\widetilde G(\ddr\theta\mid y),
\end{equation}
with
\begin{equation}\label{eq:gthetay}
\widetilde G(\ddr\theta\mid y)=\frac{\widetilde k(y\mid\theta)\widetilde G(\ddr \theta)}{f_{\widetilde{G}}^{(Y)}(y)}.
\end{equation}
By A3) and A5), $f_{\widetilde{G}}^{(X)}(x)$, $f_{\widetilde{G}}^{(Y)}(y)$ and $f_{\widetilde{G}}^{(X)}(x\mid y)$ are strictly positive for every $x$ and $y$. The next theorem provides a local central limit theorem, at a fixed point $x\in\mathbb{R}$, for $f^{(X)}_{\widetilde{G}}$ centered at $f_{\widetilde G_n}^{(X)}$.

\begin{thm}\label{th:clt}
For $x\in \mathbb R$, let $f^{(X)}_{{\widetilde{G}_{n}}}(x)$ and $f_{\widetilde{G}}^{(X)}(x)$ be defined as in \eqref{eq:predX} and \eqref{parx}, respectively. Under A1)--A5), if $(b_n)_{n\geq1}$ is a sequence of strictly positive numbers such that $b_n\uparrow+\infty$, $\sum_{n\geq1} \widetilde\alpha_n^4b_n^{2}<+\infty$, $b_n\sup_{k\geq n}\widetilde\alpha_k^2\rightarrow 0$ and $ b_n\sum_{k\geq n}\widetilde \alpha_k^2\rightarrow 1$, then, for $\lambda$-almost every $x\in\mathbb R$,  as $n\rightarrow+\infty$ the conditional distribution of $b_n^{1/2}( f_{\widetilde{G}}^{(X)}(x)- f^{(X)}_{\widetilde{G}_{n}}(x) )$, given $\mathcal G_n$, converges $P$-a.s.  to a centered Gaussian kernel with random variance function $v(x)$. In particular, there holds
\begin{equation}\label{eq:vx}
v(x)=\int \left(f_{\widetilde{G}}^{(X)}(x\mid y)-f_{\widetilde{G}}^{(X)}(x)\right)^2f_{\widetilde{G}}^{(Y)}(y)\lambda(\ddr y),
\end{equation}
with $f_{\widetilde{G}}^{(X)}$, $f_{\widetilde{G}}^{(Y)}$ and $f_{\widetilde{G}}^{(X)}(\cdot\mid y)$ defined in \eqref{parx}, \eqref{eq:y} and \eqref{eq:xgiveny}, respectively. In particular, if the learning rate is $\widetilde{\alpha}_n=(\alpha+n)^{-\gamma}$,  with $\alpha>0$ and $\gamma\in(1/2,1]$, then $b_n=(2\gamma-1)n^{2\gamma-1}$, for $n\geq1$.
\end{thm}

See \ref{app:prrofth2} for the proof of Theorem~\ref{th:clt}. If $(\widetilde\alpha_n)_{n\geq1}$ is non-increasing, then $b_n\sup_{k\geq n}\widetilde\alpha_k^2\rightarrow 0$ follows from $\sum_{n\geq1}\widetilde\alpha_n^4b_n^2<+\infty$. A typical choice for $(\widetilde\alpha_n)_{n\geq1}$ is given by $\widetilde\alpha_n=1/(1+n)$, for which $b_n=n$.

Let $\Phi_{0,\sigma^2}$ denote the cumulative distribution function of a Gaussian random variable with mean $0$ and variance $\sigma^2$, with the proviso that $\Phi_{0,0}$ corresponds to the case of a distribution degenerate at zero. Under the assumptions of Theorem~\ref{th:clt}, for every $z\neq 0$, as $n\rightarrow+\infty$
\begin{equation}\label{clt_unk}
P\left(b_n^{1/2}(f_{\widetilde{G}}^{(X)}(x)-f_{\widetilde{G}_{n}}^{(X)}(x))\leq z\, \middle|\, \mathcal G_n\right)\rightarrow\Phi_{0,v(x)}(z),\qquad P\text{-a.s.}
\end{equation}
Since the variance $v(x)$ depends on the unknown mixing distribution $\widetilde G$, \eqref{clt_unk} cannot be applied to constructing credible intervals for $f_{\widetilde{G}}^{(X)}(x)$. Then, we introduce the empirical variance $v_n(\cdot)$, which is defined as $v(\cdot)$ in \eqref{eq:vx} with $\widetilde{G}_{n}$ in place of $\widetilde{G}$; accordingly we set $f^{(X)}_{\widetilde{G}_{n}}(\cdot\mid y)$ and $\widetilde G_n(\cdot\mid y)$ as $f^{(X)}_{\widetilde{G}}(\cdot\mid y)$ in \eqref{eq:xgiveny} and $\widetilde G(\cdot\mid y)$ in \eqref{eq:gthetay}, respectively, with $\widetilde{G}_{n}$ in place of $\widetilde{G}$. In particular, as $n\rightarrow+\infty$, $v_{n}(x)$ converges to $v(x)$, $P$-a.s.; we refer to Lemma~\ref{lem:vn} in \ref{appE} for details.

Let $(b_n)_{n\geq1}$ be a sequence of strictly positive numbers as in Theorem~\ref{th:clt}. Then, for every $t_1,t_2\in\mathbb R$, as $n\rightarrow+\infty$
\begin{align*}
&E\left(e^{it_1 b_n^{1/2}\left(f_{\widetilde{G}}^{(X)}(x)-f_{\widetilde G_{n}}^{(X)}(x)\right)+it_2v_n(x)}\,\middle|\, \mathcal G_n\right)\rightarrow e^{it_1^2v(x)+it_2v(x)},\qquad P\text{-a.s.}
\end{align*}
Therefore, by an application of the Portmanteau theorem, for every $z>0$ and every $\epsilon>0$, it holds that
\begin{align*}
&\liminf_{n\rightarrow+\infty} P\left(b_n^{1/2}\left|f_{\widetilde{G}}^{(X)}(x)-f_{\widetilde G_{n}}^{(X)}(x)\right|<z (\max(v_n(x),\epsilon)
)^{1/2}\,\middle| \, \mathcal G_n\right)
\geq 2\Phi_{0,1}(z)-1.
\end{align*}
If $z_{1-\beta/2}$ is the $(1-\beta/2)$-quantile of the standard Gaussian distribution, with $\beta\in(0,1)$, then, as $n\rightarrow+\infty$
\begin{displaymath}
\liminf_{n\rightarrow+\infty}P\left(\left|f_{\widetilde G}^{(X)}(x)-f_{\widetilde G_n}^{(X)}(x)\right|\leq b_n^{-1/2}z_{1-\beta/2}\bigl(\max\{v_n(x),\epsilon\}\bigr)^{1/2}\,\middle|\,\mathcal G_n\right)\geq 1-\beta,
\end{displaymath}
i.e., for every $\epsilon>0$, a sequential asymptotic credible interval for $f_{\widetilde{G}}^{(X)}(x)$, with level at least $(1-\beta)$ is
\begin{equation}\label{asym_intervals}
f^{(X)}_{\widetilde G_n}(x)\pm b_n^{-1/2}z_{1-\beta/2}\bigl(\max\{v_n(x),\epsilon\}\bigr)^{1/2}.
\end{equation}

\subsection{Asymptotic credible bands}

We establish a Gaussian approximation of the conditional distribution of $f^{(X)}_{\widetilde{G}}$, given $Y_{1:n}$, on a bounded interval $I\subset\mathbb{R}$, i.e. uniform approximation. In addition to A1)--A5), we assume: A6) $k(x\mid\theta)$ is continuously differentiable with respect to $x$, with $\sup_{\theta\in\Theta,\,x\in \mathbb R}|k(x\mid\theta)|<+\infty$ and $\sup_{\theta\in\Theta,\,x\in \mathbb R}|\frac{\ddr}{\ddr x}k(x\mid\theta)|<+\infty$. Let $C(\mathbb R)$ be the space of continuous functions on $\mathbb{R}$, endowed with the topology of uniform convergence on compact sets. Under A1)--A6), $f_{\widetilde G_{n}}^{(X)}(\omega)$ is continuous for every $\omega\in\Omega$. Then, as $n\rightarrow+\infty$, $f_{\widetilde G_{n}}^{(X)}\rightarrow f_{\widetilde{G}}^{(X)}$ $P$-a.s., i.e. as $n\rightarrow+\infty$
\begin{displaymath}
\sup_{x\in I}\left|f_{\widetilde G_{n}}^{(X)}(x)-f_{\widetilde{G}}^{(X)}(x)\right|\rightarrow 0,\qquad P\text{-a.s.}
\end{displaymath}
The next theorem provides a uniform central limit theorem, on $ I \subset \mathbb{R} $, for $ f_{\widetilde G}^{(X)} $ centered at $ f_{\widetilde G_n}^{(X)} $. The limiting Gaussian process is characterized by a centered Gaussian process kernel.

\begin{thm}\label{th:uniformclt}
Let $f^{(X)}_{\widetilde{G_{n}}}$ and $f_{\widetilde{G}}^{(X)}$ be defined as in \eqref{eq:predX} and \eqref{parx}, respectively. Under A1)--A6), if $(b_n)_{n\geq1}$ is a sequence of strictly positive numbers such that $b_n\uparrow+\infty$, $\sum_{n\geq1} \widetilde\alpha_n^4b_n^{2}<+\infty$, $b_n\sup_{k\geq n}\widetilde\alpha_k^2\rightarrow 0$ and $ b_n\sum_{k\geq n}\widetilde \alpha_k^2\rightarrow 1$, then, as $n\rightarrow+\infty$ the conditional distribution of  $b_n^{1/2}(f_{\widetilde{G}}^{(X)}-f_{\widetilde G_{n}}^{(X)})$, given $\mathcal G_n$, converges $P$-a.s. in $C(\mathbb R)$ to a centered Gaussian process kernel with random covariance function $R(x_1,x_2)$, for every $(x_{1},x_{2})\in\mathbb{R}^{2}$. In particular, there holds
\begin{equation}\label{eq:covariance}
R(x_1,x_2)=\int (f_{\widetilde{G}}^{(X)}(x_1\mid y)-f_{\widetilde{G}}^{(X)}(x_1))(f_{\widetilde{G}}^{(X)}(x_2\mid y)-f_{\widetilde{G}}^{(X)}(x_2))f_{\widetilde{G}}^{(Y)}(y)\lambda(\ddr y),
\end{equation}
with $f_{\widetilde{G}}^{(X)}$, $f_{\widetilde{G}}^{(Y)}$ and $f_{\widetilde{G}}^{(X)}(\cdot\mid y)$ defined in \eqref{parx}, \eqref{eq:y} and \eqref{eq:xgiveny},  respectively. In particular, if the learning rate is $\widetilde\alpha_n=(\alpha+n)^{-\gamma}$, with $\alpha>0$ and $\gamma\in(1/2,1]$, then $b_n=(2\gamma-1)n^{2\gamma-1}$, for $n\geq1$.
\end{thm}

See \ref{app:proofth3} for the proof of Theorem~\ref{th:uniformclt}. The random covariance function $R$ in \eqref{eq:covariance} is measurable with respect to the $\sigma$-algebra generated by $\widetilde G$, and it is positive semi-definite for every $\widetilde G$. Indeed, for every $J\geq 1$ and for every $x_1,\dots,x_J\in\mathbb R$ and $u_1,\dots,u_J\in \mathbb R$, there holds
\begin{align*}
&\sum_{i,j=1}^J u_i u_j \int_{\mathbb{R}} \left(f_{\widetilde{G}}^{(X)}(x_i\mid y)-f_{\widetilde{G}}^{(X)}(x_i)\right)\left(f_{\widetilde{G}}^{(X)}(x_j\mid y)-f_{\widetilde{G}}^{(X)}(x_j)\right)f_{\widetilde{G}}^{(Y)}(y)\lambda(\ddr y)\\
&\quad=\int_{\mathbb{R}} \left(\sum_{i=1}^J u_i(f_{\widetilde{G}}^{(X)}(x_i\mid y)-f_{\widetilde{G}}^{(X)}(x_i))\right)^2f_{\widetilde{G}}^{(Y)}(y)\lambda(\ddr y)\geq 0.
\end{align*}
For each possible realization $G$ of $\widetilde G$, let $\mathbb G_G$ be a  Gaussian process in $C(\mathbb R)$, independent of $\widetilde G$, and with covariance function as in \eqref{eq:covariance} with $G$ in the place of $\widetilde G$. The next theorem provides an upper bound  for the fluctuations of $|\mathbb G_{\widetilde G}|$ on a bounded interval $I$.  Let $\sigma(I)$ and $\psi(z)$ be defined as:
\begin{equation} \label{eq:sigmaI}
\sigma(I)=\left(\sup_{x\in I}\int (f_{\widetilde{G}}^{(X)}(x\mid y)-f_{\widetilde{G}}^{(X)}(x))^2f_{\widetilde{G}}^{(Y)}(y)\lambda(\ddr y)\right)^{1/2}
\end{equation}
and
\begin{equation}\label{eq:psi}
\psi(z)=\!\!\!\!\!\!\!\sup_{\small{\begin{array}{l}x_1,x_2\in I,\\|x_1-x_2|<z\end{array}}}\!\!\!\!\!\!\!\!\left(\int (f_{\widetilde{G}}^{(X)}(x_1\mid y)-f_{\widetilde{G}}^{(X)}(x_2\mid y))^2f_{\widetilde{G}}^{(Y)}(y)\lambda(\ddr y)-(f_{\widetilde{G}}^{(X)}(x_1)-f_{\widetilde{G}}^{(X)}(x_2))^2\right)^{1/2}\!\!\!\!. 
\end{equation}
The upper bound for the fluctuations of $|\mathbb G_{\widetilde G}|$ follows by bounding $E(\sup_{x\in I} \mathbb G_{\widetilde G}(x)\mid \widetilde G)$ and then by applying a concentration inequality for Gaussian processes to bound $\mathbb G_{\widetilde G}(x)$, uniformly in $x\in I$, with respect to its conditional distribution, given $\widetilde G$. We refer to \ref{appC} for details.

\begin{thm}\label{th:sup}
Let $\psi^{-1}(t)=\inf\{z:\psi(z)>t\}$ denote the inverse of the function $\psi$ in \eqref{eq:psi}. For every $\beta\in (0,1)$,
\begin{displaymath}
P\left(\sup_{x\in I}|\mathbb G_{\widetilde G}(x)|\leq s_{\widetilde G}(I,\beta)\,\middle|\, \widetilde G\right)\geq 1-\beta,
\end{displaymath}
with
\begin{equation}\label{eq_v}
s_{\widetilde G}(I,\beta)=12\int_0^{\sigma(I)}\left(\log\left(1+\frac{\lambda(I)}{2\psi^{-1}(z/2)}\right)\right)^{1/2}dz+\sigma(I)\sqrt{2|\log(\beta/2)|}.
\end{equation}
\end{thm}

See \ref{app:proofth4} for the proof of Theorem~\ref{th:sup}. We now  provide an asymptotic credible band for $f_{\widetilde G}^{(X)}$. From Theorem~\ref{th:uniformclt}, 
\begin{align*}
&P\left(\sup_{x\in I}b_n^{1/2}\left|f_{\widetilde{G}}^{(X)}(x)-f_{\widetilde{G}_{n}}^{(X)}(x)\right|\in \cdot\,\,\middle| \, \,\mathcal G_n\right)\stackrel{w}{\longrightarrow} P\left(\sup_{x\in I}|\mathbb G_{\widetilde G}(x)| \in\cdot \,\,\middle|\, \widetilde G\right)\qquad P\text{-a.s.},
\end{align*}
where, from Theorem~\ref{th:sup}, $P\left(\sup_{x\in I}|\mathbb G_{\widetilde G}(x)|\leq s_{\widetilde G}(I,\beta)\, \, \middle| \,\,\widetilde G\right)\geq 1-\beta$ for every $\beta\in (0,1)$, with $s_{\widetilde G}(I,\beta)$ as in \eqref{eq_v}.
Now, consider the empirical version of $s_{\widetilde G}(I,\beta)$, which is defined as 
\begin{equation}\label{eq_vn}
s_n(I,\beta)=12\int_0^{\sigma_n(I)}\left(\log\left(1+\frac{\lambda(I)}{2\psi_n^{-1}(z/2)}\right)\right)^{1/2}dz+\sigma_n(I)\sqrt{2|\log(\beta/2)|},
\end{equation}
where we set $\sigma_n(\cdot)$ and $\psi_{n}(\cdot)$ as $\sigma(\cdot)$ in \eqref{eq:sigmaI} and $\psi(\cdot)$ in \eqref{eq:psi}, respectively, with $\widetilde{G}_{n}$ in place of $\widetilde{G}$.

\begin{thm}\label{th:convv}
If $s_{\widetilde G}(I,\beta)$ and $s_n(I,\beta)$ are defined as in \eqref{eq_v} and \eqref{eq_vn}, respectively, then as $n\rightarrow+\infty$ 
\begin{displaymath}
s_n(I,\beta)\rightarrow s_{\widetilde G}(I,\beta)\qquad P\text{-a.s.}
\end{displaymath}
\end{thm}

See \ref{app:proofth5} for the proof of Theorem~\ref{th:convv}. Let $(b_n)_{n\geq1}$ be a sequence of numbers as defined in Theorem~\ref{th:uniformclt}. Because of the limiting behaviour of $s_n(I,\beta)$ in  Theorem~\ref{th:convv}, for every  $\epsilon >0$ and every $\beta\in (0,1)$,
\begin{equation}\label{th:confband}
\liminf_{n\rightarrow+\infty}P\left(\sup_{x\in I}\left|f_{\widetilde{G}}^{(X)}(x)-f_{\widetilde{G}_{n}}^{(X)}(x)\right|< b_n^{-1/2}\max(s_n(I,\beta),\epsilon)\,\middle|\, \mathcal G_n\right)\geq 1-\beta.
\end{equation}
See \ref{app:proofeq} for details on Equation \eqref{th:confband}. It follows by an application of \eqref{th:confband} that, for every $\epsilon>0$ and every $\beta\in (0,1)$, a sequential asymptotic credible band for $f_{\widetilde{G}}^{(X)}$, on $I$, with level at least $(1-\beta)$ is
\begin{equation}\label{asym_bands}
f_{\widetilde{G}_{n}}^{(X)}\pm b_n^{-1/2}\max(s_n(I,\beta),\epsilon).
\end{equation}


\section{Frequentist  properties}\label{sec5}

\subsection{Preliminaries}

We establish asymptotic guarantees for the estimate $f_{\widetilde G_n}^{(X)}$ under the frequentist modeling assumption  that the $X_{n}$'s are i.i.d. from a mixture model with kernel $k$ and a ``true'' mixing distribution $G^{\ast}$, i.e.,
\begin{equation}\label{eq:fg*}
f_{G^{\ast}}^{(X)}(x)=\int_{\Theta} k(x\mid\theta)G^{\ast}(\ddr\theta)\qquad x\in\mathbb{R}.
\end{equation}
To underline that the $X_n$'s are i.i.d., we denote by $P^*$ the  probability measure on $(\Omega,\mathcal F)$ under which the $X_{n}$'s are defined. To establish asymptotic frequentist guarantees we assume  A1)--A5) from Section~\ref{sec3:preliminaries}, together with these assumptions: F1) $\widetilde G_0$ is absolutely continuous with respect to a $\sigma$-finite measure $\mu$ and $G^{\ast}$ belongs to the weak closure of the class of probability measures on $\Theta$ that are absolutely continuous with respect to $\mu$; F2) $\Theta$ is compact; F3) $\int\sup_{\theta_1,\theta_2 \in \Theta} \left( \frac{\widetilde{k}(y \mid \theta_1)}{\widetilde{k}(y\mid \theta_2)} \right)^2f_{G^{\ast}}^{(Y)}(y)\lambda(\ddr y)<+\infty$, with $f_{G^{\ast}}^{(Y)}=f_Z*f_{G^{\ast}}^{(X)}$ and $f_{G^{\ast}}^{(X)}$ defined as in \eqref{eq:fg*}.

\subsection{Consistency and agreement with the direct density estimation problem}\label{sec:consistency}

The next proposition provides consistency of $f_{\widetilde G_n}^{(X)}$ in $L^1$, under the ``true'' $G^{\ast}$. This result is an immediate consequence of \citet[Corollary~4.7]{MarTok(09)}, where consistency  was first established for the direct density estimation problem, in which the $X_{n}$'s are observed.

\begin{prp}\label{teo_cons}
Let $(X_n)_{n \geq 1}$, $(Y_n)_{n \geq 1}$, and $(Z_n)_{n \geq 1}$ be three sequences of random variables defined on a probability space $(\Omega, \mathcal{F}, P^*)$, such that the $X_n$'s are i.i.d. with density $f_{G^{\ast}}^{(X)}$ in \eqref{eq:fg*}, the $Z_n$'s are i.i.d. with density $f_Z$ and independent of the $X_n$, and $Y_n = X_n + Z_n$ for all $n \geq 1$. Let $f_{\widetilde G_n}^{(X)}$ be defined as in \eqref{eq:predX},  where $\widetilde G_n$ satisfies Newton's recursion \eqref{eq:newton}. If A1)--A5) and F1)--F3) hold true, then $\widetilde G_n$ converges weakly to $G^{\ast}$, and $f_{\widetilde G_n}^{(X)}$ converges to $f_{G^{\ast}}^{(X)}$ in $L^1$, $P^*$-a.s.
\end{prp}

See \ref{app:cons} for the proof of Proposition~\ref{teo_cons}. \citet{MarTok(09)} also consider model misspecification, proving that when the ``true'' density lies outside the mixture model with kernel $k$, the recursive estimate converges almost surely in $L^1(\mathbb R)$ to the mixture minimizing its Kullback--Leibler divergence from that density. Extending this result to deconvolution would require relating the Kullback--Leibler divergences between observed convolution densities and their latent mixture densities. Such a relationship is generally unavailable because convolution is a smoothing operator: substantially different latent densities may induce very similar convolution densities. Accordingly, we restrict our analysis to the correctly specified setting.

Proposition~\ref{teo_cons} makes it possible to establish that the estimates obtained from the observed $Y_{n}$'s agree asymptotically with those that would be obtained from the unobserved $X_{n}$'s. In particular, the quasi-Bayesian learning process on the $X_n$'s is formulated, for $n\geq0$, as
\begin{equation}\label{eq:X}
X_{n+1} \mid \mathcal G_n^a \sim \int_\Theta k(\cdot\mid\theta)\,G_n(\ddr\theta),
\end{equation}
where $\mathcal G_n^a$ is the \lq\lq augmented\rq\rq  $\sigma$-algebra generated by $((X_i,Y_i,Z_i))_{i\leq n}$, and $G_{n}$ is updated recursively by Newton's algorithm, i.e., 
\begin{equation}\label{eq:gn}
G_{n+1}(\ddr\theta) = (1 - \alpha_{n+1}) G_n(\ddr\theta) + \alpha_{n+1} \frac{k(X_{n+1} \mid \theta)\,G_n(\ddr\theta)}{\int_\Theta k(X_{n+1} \mid \theta')\,G_n(\ddr\theta')},
\end{equation}
given an initial guess $G_{0}$ and a learning rate $(\alpha_{n})_{n\geq1}$ with $\sum_{n \geq 1} \alpha_n = +\infty$ and $\sum_{n \geq 1} \alpha_n^2 < +\infty$. Note that $\alpha_{n}$ in \eqref{eq:gn} may differ from $\widetilde{\alpha}_{n}$ in \eqref{eq:newton}, and that $k$ is used in place of the convolution kernel $\widetilde{k}$.

Equations \eqref{eq:X} and \eqref{eq:gn} imply that, for every $n\geq0$, $X_{n+1}$ is conditionally independent of $((Z_i,Y_i))_{i\leq n}$ given $(X_i)_{i\leq n}$. This reflects the natural assumption that, once $X_1,\ldots,X_n$ are observed, the corresponding noise-contaminated observations provide no additional information for learning. Under the ``true'' $G^{\ast}$, the next corollary shows that the estimate $f_{\widetilde G_n}^{(X)}$ asymptotically agrees with $f_{G_n}^{(X)}$, obtained from the quasi-Bayesian learning process \eqref{eq:X}--\eqref{eq:gn}.

\begin{cor}\label{teo_merg}
Consider the quasi-Bayesian learning processes in \eqref{eq:predX}–\eqref{eq:newton} and \eqref{eq:X}–\eqref{eq:gn}. Let the assumptions of Proposition~\ref{teo_cons} hold true, let $\mbox{\rm Supp}(G_0)=\Theta$ and $G_0$ be absolutely continuous with respect to $\mu$, and let $\int \sup_{\theta_1,\theta_2 \in \Theta} \left( \frac{k(x \mid \theta_1)}{k(x \mid \theta_2)} \right)^2 f_{G^{\ast}}^{(X)}(x)\, \lambda(\mathrm{d}x) < +\infty$. Then, as $n\rightarrow+\infty$
\begin{equation*}
\int \left| f_{G_n}^{(X)}(x) - f_{\widetilde G_n}^{(X)}(x) \right| \lambda(\mathrm{d}x) \rightarrow 0 \qquad P^*\text{-a.s.}
\end{equation*}
\end{cor}
See \ref{app:direct} for the proof of Corollary~\ref{teo_merg}. Corollary~\ref{teo_merg} may also be used to motivate an optional Monte
Carlo simulation-based calibration of the learning rate $\widetilde\alpha_n$; we refer to \ref{sec:lrates} for details.

\subsection{Merging with the Dirichlet process mixture model}\label{sec:merg}

Under the ``true'' $G^{\ast}$, and assuming a Laplace noise distribution, the next proposition provides a rate of convergence of the distribution induced by the estimate $f_{\widetilde G_n}^{(X)}$, to the one induced by $f_{G^{\ast}}^{(X)}$, in the $L^1$-Wasserstein metric. This metric is motivated by the inversion inequality of \citet[Theorem~3.1]{Rou(23)}, which relates the $L^1$ error between $f_{\widetilde G_n}^{(Y)}$ and $f_{G^{\ast}}^{(Y)}$ to the $L^1$-Wasserstein distance between the distributions induced by $f_{\widetilde G_n}^{(X)}$ and $f_{G^{\ast}}^{(X)}$. We write $W_1(F_1,F_2)$ for the Wasserstein distance between the probability measures on $\mathcal B(\mathbb R)$ associated with the distributions $F_1$ and $F_2$; it holds  $W_1(F_1,F_2)=\int_{\mathbb R}|F_1(x)-F_2(x)|\,dx$. 

\begin{prp}\label{th:rateNewton}
Let $(X_n)_{n \geq 1}$, $(Y_n)_{n \geq 1}$, and $(Z_n)_{n \geq 1}$ be three sequences of random variables defined on a probability space $(\Omega, \mathcal{F}, P^*)$, such that the $X_n$ are i.i.d. with density $f_{G^{\ast}}^{(X)}$ in \eqref{eq:fg*}, the $Z_n$ are i.i.d. with  Laplace distribution and independent of the $X_n$, and $Y_n = X_n + Z_n$ for all $n \geq 1$. Let $f_{\widetilde G_n}^{(X)}$ be defined as in \eqref{eq:predX},  where $\widetilde G_n$ satisfies Newton's recursion \eqref{eq:newton} with $\widetilde \alpha_n\sim n^{-\gamma}$ such that $\gamma\in (2/3,1)$. Let A1)--A5) and F1)--F3) hold true and, in addition, assume that:
\begin{enumerate}
\item[(i)] there exists $\kappa>0$ and $M_\kappa\in L^1(\mathbb R)$ such that
\begin{equation}
\sup_{\theta\in\Theta}\left|\partial_x^\ell k(x+\delta\mid\theta)-\partial_x^\ell k(x\mid\theta)\right|\leq M_\kappa(x)|\delta|^{\kappa-\ell},\qquad x,\delta\in\mathbb R,\label{eq:M},
\end{equation}
where $\ell:=\max\{j\in\mathbb N_0:j<\kappa\}$;
\item[(ii)] there exists $a\in(0,1)$ such that
\begin{equation}\label{eq:theorem-exp-moment}
\sup_{\theta\in\Theta}\int_{\mathbb R}e^{a|x|}k(x\mid\theta)\,dx<+\infty.
\end{equation}
\end{enumerate}
Then, denoting by $F_{\widetilde G_n}^{(X)}$ and  $F_{G^{\ast}}^{(X)}$ the distribution functions induced by $f_{\widetilde G_n}^{(X)}$ and $f_{G^{\ast}}^{(X)}$ respectively, it holds
\begin{equation}\label{eq:inverse-W1-rate}
W_1\!\left(F_{\widetilde G_n}^{(X)},F_{G^{\ast}}^{(X)}\right)=o_{{P^*}}\!\left(n^{-\frac{(1-\gamma)(\kappa+1)}{2(\kappa+2)}}\right).
\end{equation}
\end{prp}

See \ref{sec:proofrateNewtron} for the proof of Proposition~\ref{th:rateNewton}. Besides being of independent interest, Proposition~\ref{th:rateNewton} provides the key ingredient for establishing that, still under the ``true'' $G^{\ast}$, the quasi-Bayesian distribution function (estimate) $F_{\widetilde G_n}^{(X)}$ asymptotically merges, at an explicit rate in the $L^1$-Wasserstein metric, with the corresponding Bayesian nonparametric posterior mean estimate under a Dirichlet process mixture model \citep{Fer(73),Lo(84)}, as shown next.
To establish this merging result, we consider the Dirichlet process Gaussian-mixture model
\begin{align}
Y_i&=X_i+Z_i,\quad
X_i\mid\mu_i,\sigma^2\stackrel{\mathrm{ind}}{\sim}
\phi(\cdot\mid\mu_i,\sigma^2),\quad
Z_i\stackrel{\mathrm{iid}}{\sim}\mathrm{Laplace}(0,b),
\qquad i=1,\ldots,n,
\nonumber\\
\mu_i\mid G&\stackrel{\mathrm{iid}}{\sim}G,\quad
G\sim\mathrm{DP}(H_0),\quad
\sigma\sim\Pi_\sigma.
\label{eq:Rousseau}
\end{align}
where $\phi(\cdot\mid\mu,\sigma^{2})$ is the Gaussian density with mean $\mu$ and variance $\sigma^2$, $\sigma$ is independent of $(\mu,G)$, $DP(\cdot)$ is the law of a Dirichlet process with non-atomic (diffuse) base measure $H_{0}$, and $\Pi_\sigma$ is the prior distribution for $\sigma$. Under the model \eqref{eq:Rousseau}, we denote by $\widehat F_n^B$ the Bayesian nonparametric posterior mean estimate, i.e. $\widehat F_n^B(x)=E_{\Pi}(F^{(X)}_{G,\sigma}(x)\mid Y_{1:n})$, where $F^{(X)}_{G,\sigma}(x)=P_\Pi(X_i\leq x\mid G,\sigma)$. The posterior contraction rates of \citet[Theorem~4.2 and Theorem~4.4.]{Rou(23)} do not require the ``true'' density to admit a mixture representation. In contrast, the merging result in the next corollary relies on Proposition~\ref{th:rateNewton}, whose proof assumes that the ``true'' density admits a mixture representation, i.e. it is correctly specified. 

\begin{cor}
\label{th:Newton-Bayes-merging}
Suppose that the assumptions of Proposition~\ref{th:rateNewton} hold true, and, in addition, assume that:
\begin{enumerate}
\item[(i)] the density $f_{G^{\ast}}^{(X)}$ satisfies, for some constants $C_0,C_0'>0$ and $\iota>1$,
\begin{equation}\label{eq:true-density-tail-Bayes}
f_{G^{\ast}}^{(X)}(x)\leq C_0'\exp\!\left\{-(1+C_0)|x|^\iota\right\},\qquad x\in\mathbb R;
\end{equation}
\item[(ii)] the prior $\Pi_\sigma$ has a continuous density $\pi_\sigma$ on $(0,+\infty)$ such that, for some constants
$D_1,D_1',D_2,D_2'>0$ and $s_1,s_2,t_1,t_2\geq0$,
\begin{equation}\label{eq:prior-sigma-zero}
D_1'\sigma^{-s_1}\exp\!\left\{-D_1\sigma^{-1}|\log\sigma|^{t_1}\right\}\leq\pi_\sigma(\sigma)\leq D_2'\sigma^{-s_2}\exp\!\left\{-D_2\sigma^{-1}|\log\sigma|^{t_2}\right\}
\end{equation}
for all sufficiently small $\sigma>0$. Moreover, for some constants $D_3,D_3'>0$ and $\varrho>1$,
\begin{equation}\label{eq:prior-sigma-infinity}
\Pi_\sigma([\bar\sigma,+\infty))\leq D_3'\exp\!\left\{-D_3\bar\sigma^\varrho\right\},\qquad \bar\sigma\rightarrow+\infty;
\end{equation}
\item[(iii)] the base measure $H_0$ has a continuous and strictly positive density $h_0$ such that, for some constants
$b_0,b_0',c_0,c_0'>0$,
\begin{equation}\label{eq:DP-base-tail}
c_0\exp\!\left\{-b_0|u|^\iota\right\}\leq h_0(u)\leq c_0'\exp\!\left\{-b_0'|u|^\iota\right\},\qquad u\in\mathbb R,
\end{equation}
where $\iota>1$ is as in \eqref{eq:true-density-tail-Bayes}.
\end{enumerate}
Then,
\begin{equation}
W_1\!\left(
F_{\widetilde G_n}^{(X)},
\widehat{F}_n^{\,B}
\right)
=
o_{P^*}\!\left(
n^{-\frac{(1-\gamma)(\kappa+1)}
{2(\kappa+2)}}
\right).
\label{eq:Newton-Bayes-merging}
\end{equation}
\end{cor}
See \ref{app:Newton-Bayes} for the proof of Corollary~\ref{th:Newton-Bayes-merging}.


\section{Synthetic-data analysis}\label{sec6}

\subsection{Unimodal example}\label{sec:unimodal-main}

For each $n\in\{250;\, 500;\, 750;\,1,000\}$ we generate random variables $X_1,\ldots,X_n$ i.i.d. according to the density $f^{(X)}(\cdot)=\phi(\cdot\mid 2,2)$, where $\phi(\cdot\mid\mu,\sigma^2)$ denotes the Gaussian density with mean $\mu$ and variance $\sigma^2$.  Using the mixture representation \eqref{eq:mixture}, we write $f^{(X)}$ as $f_{G^{\ast}}^{(X)}$, with $G^{\ast}=\delta_{(2,2)}$ the true mixing distribution. We consider the Laplace noise distribution for the random variables $Z_{i}$'s, namely $Z_i\stackrel{\mathrm{iid}}{\sim}\mathrm{Laplace}(0,b_l)$ with $b_l=\frac{\sigma_l}{\sqrt{2}}$ and $\sigma_l\in\{0.25,0.50\}$, where $\sigma_l$ is the standard deviation of the Laplace noise. The  $Z_{i}$'s are independent of the $X_{i}$'s, and the observations are modeled as follows: $Y_i=X_i+Z_i$, for $i=1,\ldots,n$. For each $n$, the same  $Y_{1:n}=(Y_{1},\ldots,Y_{n})$ is used for all the methods under comparison in this section.

We implement Newton's algorithm~\eqref{eq:newton} with $k(\cdot\mid\theta)=\phi(\cdot\mid\mu,\sigma^2)$, where  $\theta=(\mu,\sigma^2)\in\Theta=[-20,20]\times[0.1,5]$.  We use the learning rate $\widetilde\alpha_i=(1+i)^{-1}$, for $i\geq1$, and set $\widetilde G_0$ to the Uniform distribution on $\Theta$. Numerical integrals are evaluated using trapezoidal quadrature, and the mixing distribution is renormalized after each update. Since Newton's recursion depends on the order in which the $Y_{i}$'s are processed, following \cite{Tok(09)}, for each $n$ we repeat it over $R_n=n/10$ random permutations of $Y_{1:n}$. If $\widetilde G_n^{(r)}$ denotes Newton's mixing-distribution estimate obtained from the $r$th permutation, $r=1,\ldots,R_{n}$, then we form the order-averaged estimate $\overline G_n=\frac{1}{R_n}\sum_{r=1}^{R_n}\widetilde G_n^{(r)}$. Based on \eqref{eq:predX}, we define the quasi-Bayes estimate of $f^{(X)}$ as
 \begin{equation} \label{eq:main-fGbar}
 f_{\overline G_n}^{(X)}(x)=\int_\Theta\phi(x\mid\mu,\sigma^2) \overline G_n(\ddr\mu,\ddr\sigma^2).
\end{equation}
Similarly, based on \eqref{asym_intervals} and \eqref{asym_bands}, we define the quasi-Bayes credible intervals and bands for $f_G^{(X)}$, respectively, using $\overline G_n$. For numerical implementation, the parameter space $\Theta$ is discretized  using increments of $0.1$ in both coordinates, giving $401\times50=20,050$ grid points. Such a discretization imposes no modeling restrictions. Further details are given in \ref{sec:supp-unimodal-newton}. In addition, we refer to \ref{sec:supp-unimodal-mc} for a comparison with Monte Carlo credible intervals and bands, showing that the quasi-Bayesian credible intervals closely agree with their Monte Carlo counterparts, whereas the quasi-Bayesian credible bands are wider and thus more conservative, due to the metric-entropy and Gaussian concentration bounds used in their construction.

Figure~\ref{fig:main-unimodal-laplace} reports the quasi-Bayes estimate of $f_{G}^{(X)}$ in \eqref{eq:main-fGbar}, and the asymptotic credible intervals and bands. The figure consists of two columns, which correspond to the Laplace noise standard deviations $\sigma_{l}=0.25$ and $\sigma_{l}=0.50$, respectively.  The first row of the figure compares the true density $f_{G^{\ast}}^{(X)}$ with the quasi-Bayes density estimates, while the second row displays the credible intervals, and the third row displays the credible bands on the interval $I=[-4,6]$.

\begin{figure}[t]
 \centering
 \includegraphics[
   width=\textwidth,
   trim=20 15 20 0,
   clip
 ]{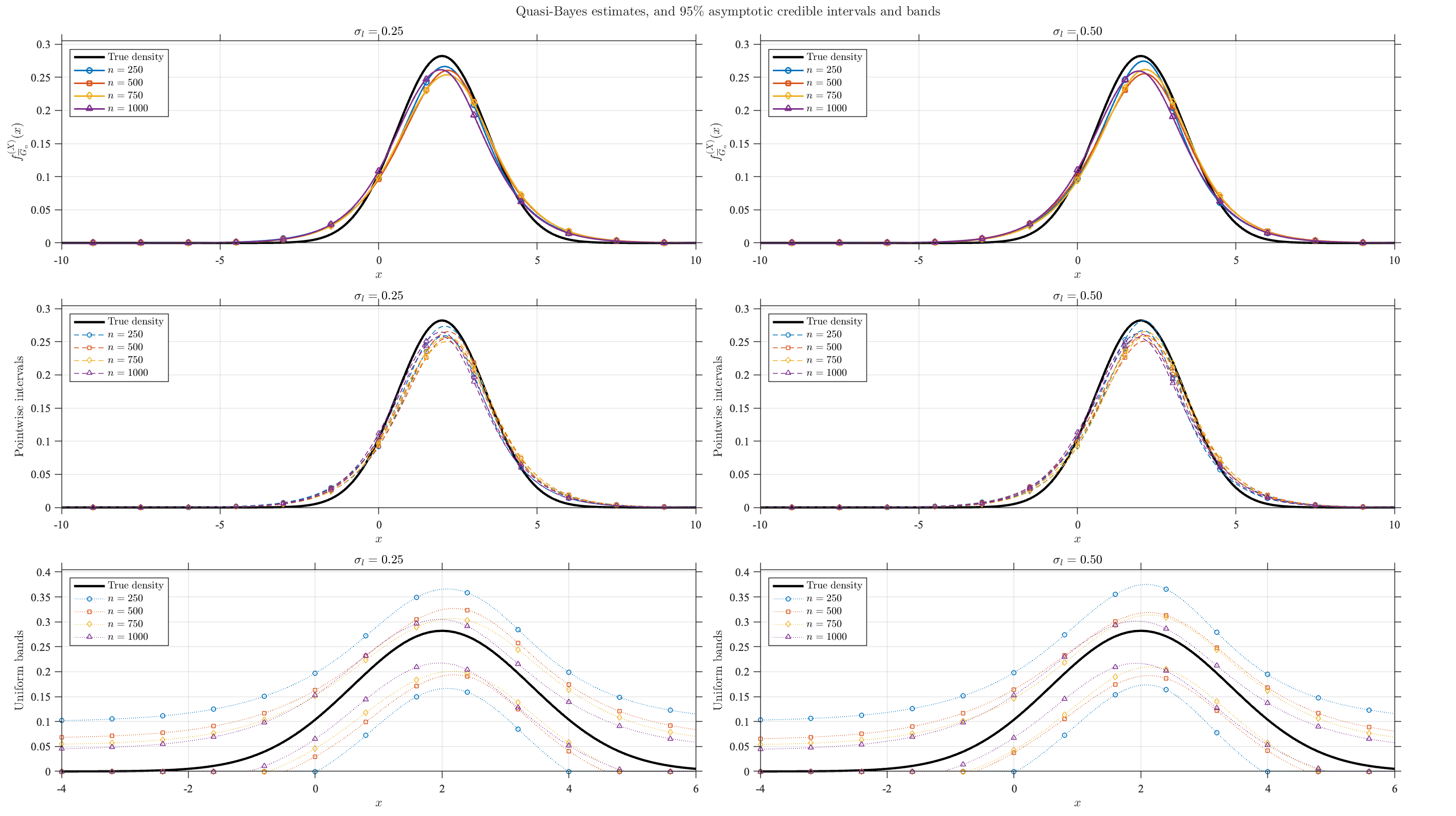}
\caption{\footnotesize{Unimodal example, with Laplace noise: estimates, credible intervals and bands.}}
 \label{fig:main-unimodal-laplace}
\end{figure}

We compare the quasi-Bayesian approach with a Bayesian approach in which the mixing distribution $G$ is endowed with a Dirichlet process prior \citep{Fer(73),Lo(84)}. Consider the Gaussian kernel $k(\cdot\mid\theta)=\phi(\cdot\mid\mu,\sigma^2)$, where $\theta=(\mu,\sigma^2)\in\mathbb{R}\times\mathbb{R}^{+}$, and assume the hierarchical model
\begin{align}
Y_i&=X_i+Z_i,\quad
X_i\mid\theta_i\stackrel{\mathrm{ind}}{\sim}
k(\cdot\mid\theta_i),\quad
Z_i\stackrel{\mathrm{iid}}{\sim}\mathrm{Laplace}(0,b_l),
\qquad i=1,\ldots,n,
\nonumber\\
\theta_i\mid G&\stackrel{\mathrm{iid}}{\sim}G,\quad
G\mid M\sim\mathrm{DP}(M,H),\quad
M\sim\mathrm{Gamma}(1,1).
\label{eq:main-unimodal-dp-model}
\end{align}
where $\text{Gamma}(\cdot,\,\cdot)$ is the Gamma distribution, and $DP(\cdot,\cdot)$ is the law of a Dirichlet process with strength parameter $M>0$ and non-atomic (diffuse) base probability measure $H$ \citep{Fer(73)}. We set $H$ to be the Uniform distribution on $\Theta=[-20,20]\times[0.1,5]$. Posterior inference is performed both
sequentially, using the sequential Monte Carlo (SMC) algorithm of~\citet{Fea(04)} and in batch, using Markov chain Monte Carlo (MCMC) Algorithm~8 of~\citet{Nea(00)}. Implementation details for the SMC algorithm and Algorithm~8  are given in \ref{sec:supp-unimodal-bayes}. Newton's and SMC algorithms are averaged over the same $R_n$ permutations, whereas Algorithm~8 is applied once to each complete dataset. Figure~\ref{fig:main-unimodal-dp} summarizes the comparison between the quasi-Bayesian and Bayesian approaches. The first row of the figure displays the estimates of $f_{G}^{(X)}$ obtained with Newton's algorithm, the SMC algorithm, and Algorithm~8; the second and third rows report total work and total CPU time, respectively.

\begin{figure}[t]
 \centering
 \includegraphics[
   width=\textwidth,
   trim=20 15 20 0,
   clip
 ]{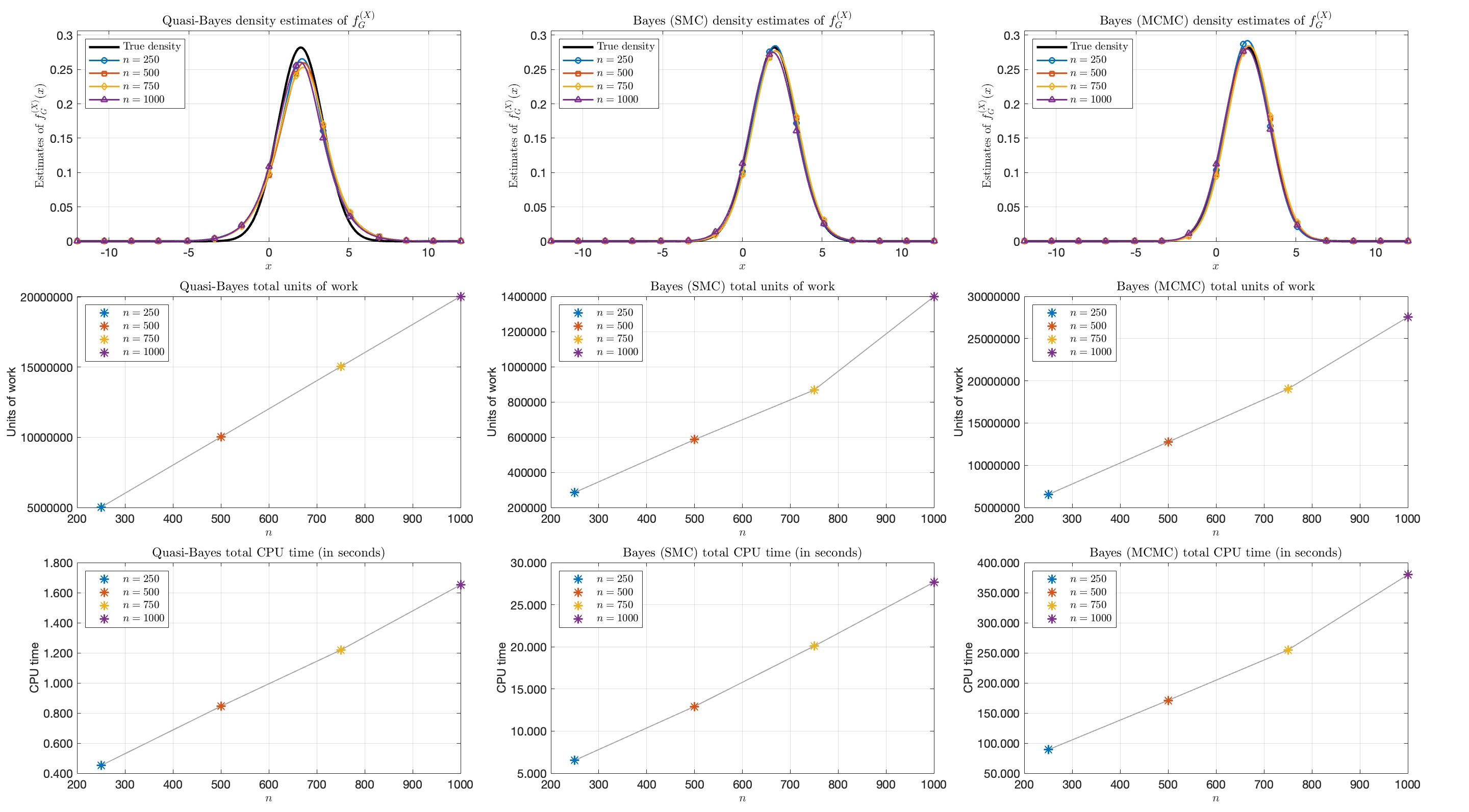}
 \caption{\footnotesize{Unimodal example, with Laplace noise ($\sigma_l=0.25$): comparison between quasi-Bayesian and Bayesian approaches.}} \label{fig:main-unimodal-dp}
\end{figure}

Figure~\ref{fig:main-unimodal-dp} shows that Newton's algorithm provides estimates that are comparable with those obtained from the SMC algorithm and Algorithm~8. The main differences concern computational efficiency, which is expressed in terms of CPU time. Although Newton's and SMC algorithms process the observations sequentially, the SMC algorithm requires the propagation and possible resampling of particles, whereas Newton's algorithm relies on a single deterministic recursive update. Algorithm~8, instead, requires a complete batch analysis for each sample size. Consequently, Newton's algorithm combines competitive estimation accuracy with a substantially smaller CPU time, particularly relative to Algorithm~8.

We refer to \ref{sec:supp-unimodal} for a more detailed analysis of the unimodal example, also considering the Gaussian noise distribution for the $Z_{i}$'s, namely $Z_i\stackrel{\mathrm{iid}}{\sim}N(0,\sigma_g^2)$ with standard deviation $\sigma_g\in\{0.25,0.50\}$.

\subsection{Bimodal example}\label{sec:bimodal1-main}

For each $n\in\{250,500,750,1000\}$, we generate random variables $X_1,\ldots,X_n$ i.i.d. according to the density $f^{(X)}(\cdot)=0.3\,\phi(\cdot\mid-1,2)+0.7\,\phi(\cdot\mid3,1.5)$, where $\phi(\cdot\mid\mu,\sigma^2)$ denotes the Gaussian density with mean $\mu$ and variance $\sigma^2$. Using the mixture representation \eqref{eq:mixture}, we write $f^{(X)}$ as $f_{G^{\ast}}^{(X)}$, with $G^{\ast}=0.3\,\delta_{(-1,2)}+0.7\,\delta_{(3,1.5)}$ the true mixing distribution. We consider the Laplace noise distribution for the random variables $Z_{i}$'s, namely $Z_i\stackrel{\mathrm{iid}}{\sim}\mathrm{Laplace}(0,b_l)$ with $b_l=\frac{\sigma_l}{\sqrt{2}}$ and $\sigma_l\in\{0.25,0.50\}$, where $\sigma_l$ is the standard deviation of the Laplace noise; this is arguably the most common example of ordinary-smooth noise distribution. The  $Z_{i}$'s are independent of the $X_{i}$'s, and the observations are modeled as follows: $Y_i=X_i+Z_i$, for $i=1,\ldots,n$.

We use the same implementation of Newton's algorithm as described in Section~\ref{sec:unimodal-main} for the unimodal example: the parameter space, numerical grids, initial distribution, learning rate, trapezoidal quadrature, renormalization, and number of permutations are unchanged. Figure~\ref{fig:main-bimodal1-laplace} reports the quasi-Bayes estimate of $f_{G}^{(X)}$, and the asymptotic credible intervals and bands.

\begin{figure}[t]
 \centering
 \includegraphics[
   width=\textwidth,
   trim=20 15 20 0,
   clip
 ]{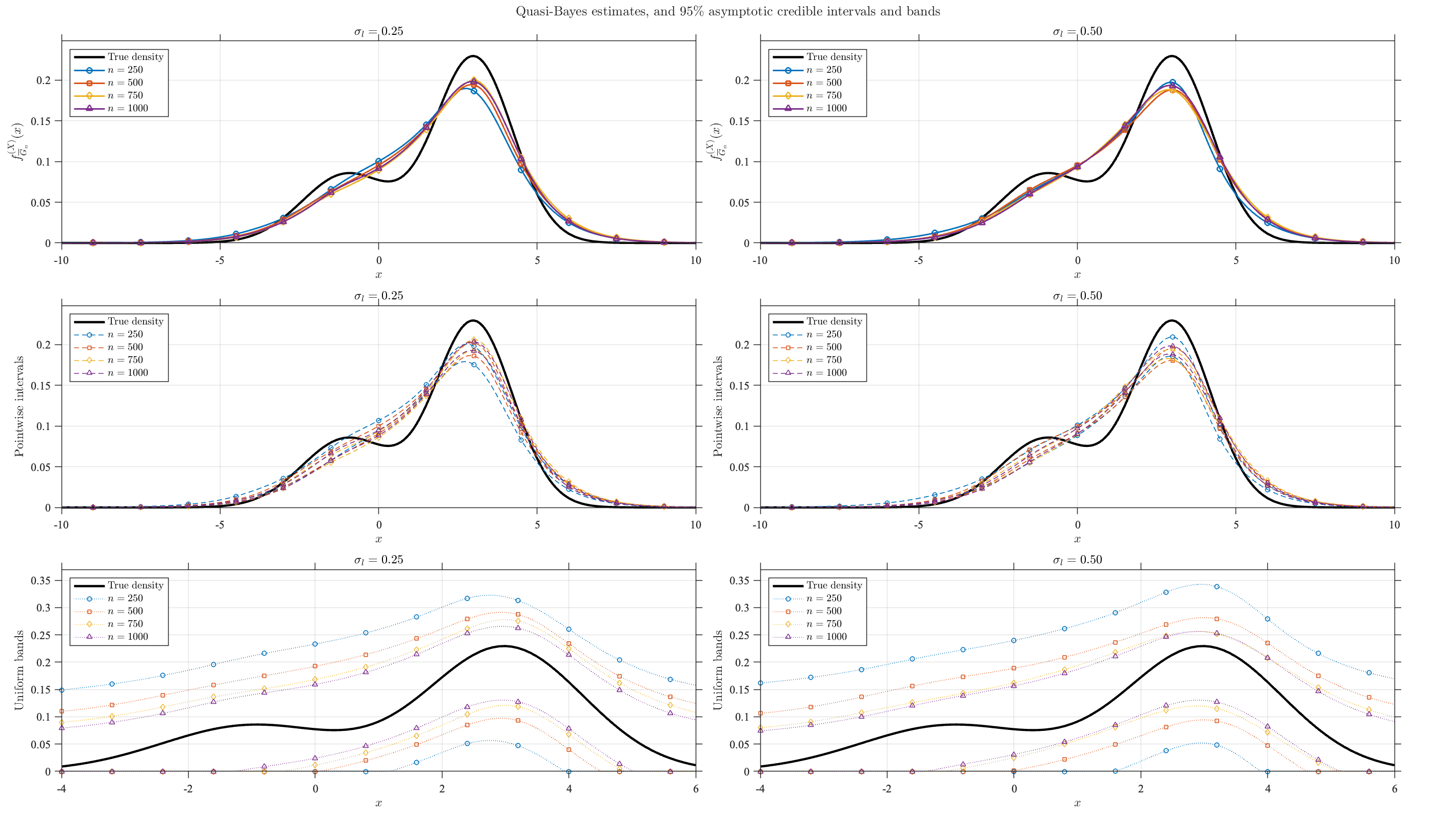}
  \caption{\footnotesize{Bimodal example, with Laplace noise: estimates, credible intervals and bands.}} \label{fig:main-bimodal1-laplace}
\end{figure}

We refer to \ref{sec:supp-bimodal1} for a more detailed analysis of the bimodal example, including Monte Carlo credible intervals and bands and a comparison between the quasi-Bayesian and Bayesian approaches. A second synthetic-data bimodal example is investigated in \ref{sec:supp-bimodal2}. The analyses in \ref{sec:supp-bimodal1}--\ref{sec:supp-bimodal2} are conducted under the Laplace and Gaussian noise distributions.


\section{Real-data analysis}\label{sec61}

We consider flow-cytometry data from \citet{Tor(23)} on the expression of the mesodermal transcription factor Brachyury in differentiating mouse embryonic stem cells. In flow cytometry, cells pass individually through laser beams, and the emitted and scattered light is converted into digital intensity measurements for each detector channel. The fluorescence from a labeled cell contains both the reporter signal of interest and an intrinsic background component, known as autofluorescence, which may obscure low
expression levels. The experiment therefore includes separate labeled and unlabeled cell populations, the latter characterizing the background distribution \citep{Tor(23)}. This yields a deconvolution problem aimed at recovering the population distribution of the latent reporter signal. Since the cells are recorded successively, we retain their acquisition order and process the observations sequentially rather than as an arbitrarily ordered static sample. Further details on the experimental setting  are provided in \ref{sec:supp-flow-cytometry}.

The publicly available data consist of fluorescence intensities and acquisition-time measurements; we refer to \url{https://github.com/dsb-lab/scBayesDeconv.jl}. Following the organization of the dataset in \citet{Tor(23)}, we use the first column as the sample from the labeled cell population and the sixth as the control sample describing autofluorescence. Since the two columns correspond to distinct cellular populations, their observations are not paired event-by-event. We order the labeled-cell measurements by acquisition time, and then examine the estimates over the nested prefixes corresponding to the first $n\in\{250,\,1{,}000,\,5{,}000,\,10{,}000\}$ events. Retaining the acquisition order does not amount to assuming a time-series dependence structure; it only reflects the order in which the observations become available to the sequential procedure. Let $Y_i$ denote the fluorescence intensity of the $i$-th labeled cell in acquisition order and assume the additive model $Y_i=X_i+Z_i$ for $i\geq1$, where $X_i$ is the reporter fluorescence and $Z_i$ is the autofluorescence and background contribution. Further, the sixth column provides $m\geq1$ independent control samples, say $\widetilde Z_1,\ldots,\widetilde Z_m$, measured on unlabeled cells and not paired with the $Y_i$'s.

We assume that the random variables $X_{i}$'s are i.i.d. according to an unknown density $f_X$, that the noise random variables $Z_{i}$'s are i.i.d. according to a density $f_Z$, and that the random variables $X_{i}$'s and $Z_{i}$'s are mutually independent and independent of the control random variables $\widetilde Z_1,\ldots,\widetilde Z_m$. Further, $\widetilde Z_1,\ldots,\widetilde Z_m$ are i.i.d. with the same distribution as the $Z_i$'s. The inferential target is the density $f_X$, which is represented as the Gaussian location-scale mixture $f_{G}^{(X)}(x)= \int_{\mathbb R\times\mathbb R^{+}}\phi(x\mid\mu,\sigma^2)\,G(d\mu,d\sigma^2)$, for $x\in\mathbb{R}$, with $G$ being the unknown mixing distribution. For the noise distribution with density $f_{Z}$ we consider a Gaussian distribution, i.e. $f_Z(\cdot)=\phi(\cdot\mid\mu_Z,\sigma_Z^2)$. In particular, the density $f_Z$ is estimated once from the control samples $\widetilde Z_1,\ldots,\widetilde Z_m$ and it is subsequently kept fixed for all sample sizes.

To implement the quasi-Bayesian approach for estimating $f_G^{(X)}$, we apply Newton's algorithm~\eqref{eq:newton}, processing the observations $Y_i$'s in acquisition order, with Gaussian kernel $k(\cdot\mid\theta)=\phi(\cdot\mid\mu,\sigma^2)$, where $\theta=(\mu,\sigma^2)\in\mathbb R\times\mathbb R^+$. On the standardized working scale, the parameter space $\mathbb R\times\mathbb R^+$ is restricted to $\Theta=[\mu_{\min},\mu_{\max}]\times[0.01,5]$, where the bounds $\mu_{\min}$ and $\mu_{\max}$ are determined from the observed and control samples. We use the learning rate $\widetilde\alpha_i=(1+i)^{-1}$, for $i\geq1$, and set $\widetilde G_0$ equal to the Uniform distribution on $\Theta$; this is precisely as in the synthetic-data analysis. All integrals with respect to the mixing distribution are evaluated using the two-dimensional trapezoidal rule, and $\widetilde G_i$ is numerically renormalized after each update. For numerical implementation, the location and variance coordinates are discretized using, respectively, $121$ equally spaced points and $45$ logarithmically spaced points, giving $5{,}445$ grid points. 

As additional benchmarks, we also consider a Bayesian approach under the Dirichlet-process Gaussian-mixture model \eqref{eq:supp-unimodal-dp-model}, with posterior inference performed sequentially using the SMC algorithm, and the ridge-regularized Fourier deconvolution approach \citep{Ste(90),Hal(07)}. We omit further implementation details for these competing procedures, which all apply the same fitted noise distribution. For the nested sample sizes $n\in\{250,\,1{,}000,\,5{,}000,\,10{,}000\}$, Figure~\ref{fig:main-fluo-gauss} reports the estimates of $f_G^{(X)}$ obtained by the quasi-Bayesian, Bayesian, and ridge-regularized Fourier deconvolution approaches.

\begin{figure}[t]
 \centering
 \includegraphics[
   width=\textwidth,
   trim=20 15 20 0,
   clip
 ]{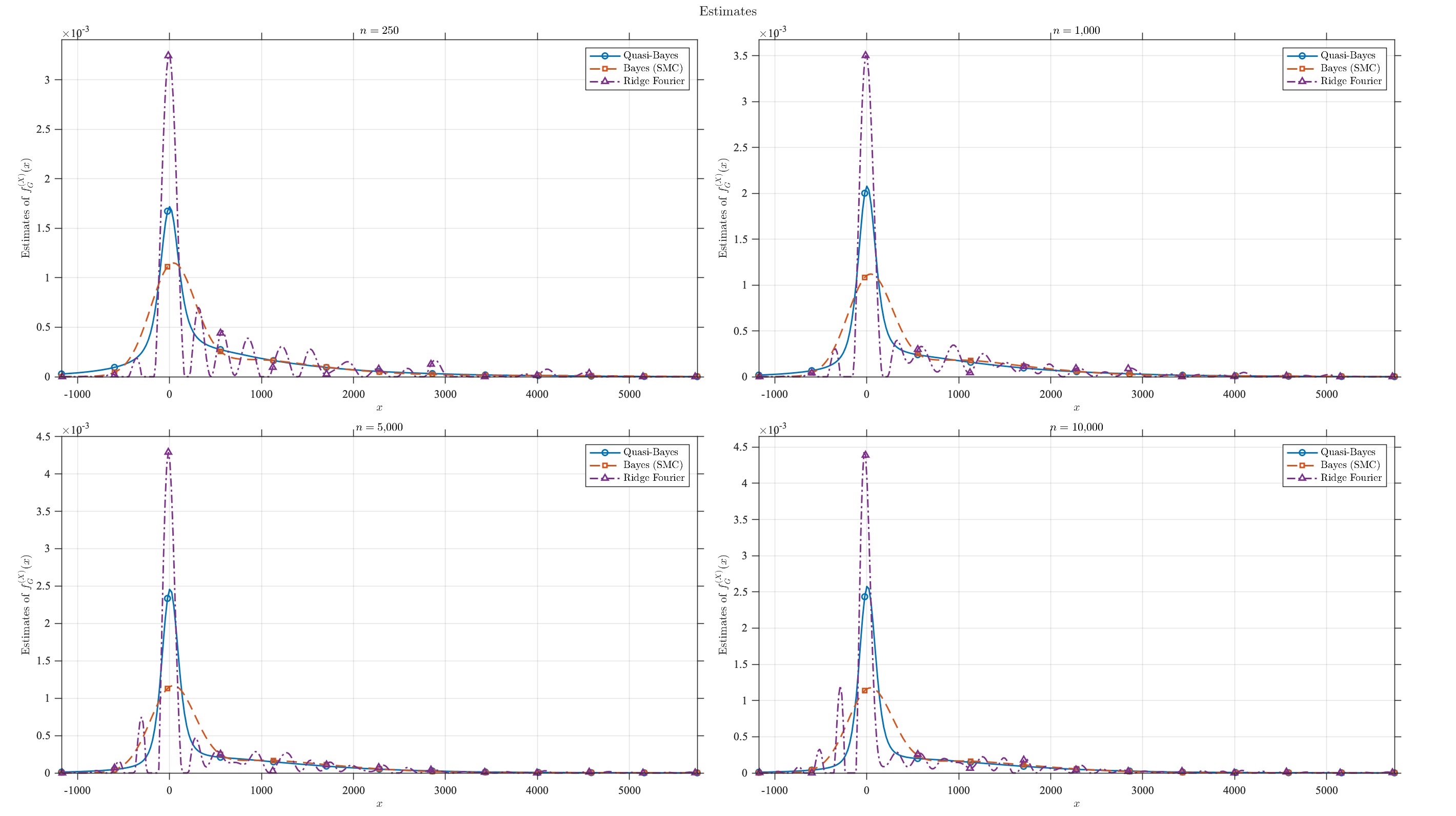}
  \caption{\footnotesize{Flow-cytometry data, with Gaussian noise: comparison between quasi-Bayesian, Bayesian and ridge-regularized Fourier deconvolution approaches.}} \label{fig:main-fluo-gauss}
\end{figure}

We refer to \ref{sec:supp-flow-cytometry} for a more detailed analysis of the flow-cytometry data, including a sensitivity analysis with respect to Gaussian, Laplace, and four-component Gaussian-mixture noise distributions, with the latter being originally proposed by \citet{Tor(23)}; for these analyses, we also report figures augmented with histograms of a reconstructed (proxy) version of the reporter fluorescence samples. Two additional real-data analyses on stellar metallicity data and active power-output data are presented in \ref{sec:supp-stellar} and \ref{sec:supp-power}.


\section{Discussion}\label{sec7}

Although our analysis has focused on univariate observations, the proposed quasi-Bayesian sequential methodology naturally extends to multivariate settings. The structure of Newton's recursive update remains unchanged, though the computational cost per update increases with dimension, primarily due to the greater complexity of numerical integration. We expect local central limit theorems and the associated credible intervals to continue to hold in higher dimensions, along with asymptotic frequentist properties such as consistency and merging. However, uncertainty quantification in the multivariate setting presents additional challenges: in particular, constructing uniform credible bands becomes more delicate, as their width typically increases with dimension, potentially limiting interpretability and coverage accuracy.

While we establish the consistency of the estimate $f_{\widetilde G_n}^{(X)}$ in $L^1$, its rate of convergence in the same norm remains an open problem. Deriving such a rate for quasi-Bayesian sequential deconvolution is likely to be technically challenging, but would provide valuable insight into the statistical efficiency of the method. As a preliminary step, we present in \ref{app:frequentist} a convergence rate for finite mixture models, which may provide a foundation for future work.

A natural extension concerns unknown noise distributions. Recent developments on joint estimation of signal and noise in static or batch settings suggest that recursive strategies could be adapted to this scenario. A quasi-Bayesian sequential implementation is an important direction for future research, also related to privacy-preserving inference in streaming settings, where noise is intentionally introduced through a known or partially known perturbation mechanism \citep{Dwo(10),Car(22),He(24)}.


\section*{Acknowledgement}

Stefano Favaro wishes to thank Mario Beraha for the stimulating discussions on Bayesian deconvolution, in connection with differential privacy, and for help with the code. Stefano Favaro received funding from the European Research Council (ERC) under the European Union's Horizon 2020 research and innovation programme under grant agreement No 817257.



\section*{Appendix}
\renewcommand{\thesection}{\Alph{section}}
\renewcommand{\theequation}{\thesection.\arabic{equation}}
\setcounter{section}{1}
\setcounter{equation}{0}
\setcounter{thm}{0}

\section{Proofs}
\subsection{Preliminary results}
\subsubsection{Almost-sure conditional convergence}\label{appB}
In this Section, we report some definitions and results, that are critical to Section \ref{sec3}. Given a probability space $(\Omega,\mathcal F,P)$ and a Polish space $\mathbb X$ with its Borel sigma-algebra $\mathcal X$, a  kernel on $\mathbb X$ is a function $K:\Omega\times\mathcal X$ satisfying:
\begin{itemize}
    \item[(i)] for every $\omega\in\Omega$, $K(\omega,\cdot)$ is a probability measure on $\mathcal  X$;
    \item[(ii)] for each $B\in\mathcal X$, the function $K(\cdot,B)$ is measurable with respect to $\mathcal F$.
\end{itemize}
A Kernel $K$ on $\mathbb R$ is called a Gaussian kernel if $K(\omega,\cdot)$ is a Gaussian distribution with mean $\mu(\omega)$ and variance $\sigma^2(\omega)$, where $\mu$ and $\sigma^2$ are random variables defined on $(\Omega,\mathcal F,P)$. We denote the Gaussian kernel by $\mathcal N(\mu,\sigma^2)$ and interpret the Gaussian distribution with zero variance as the degenerate law centered on the mean. 
A kernel $K$ on the Polish space $C(\mathbb R)$, with the topology of uniform convergence on compact sets,  is called a Gaussian process kernel if $K(\omega,\cdot)$ is the distribution of a Gaussian process in $C(\mathbb R)$, with mean function $\mu(\omega)\in C(\mathbb R)$ and covariance function $R(\omega)\in C(\mathbb R\times\mathbb R)$.
If $\mu(\omega)=0$ for all $\omega$, then the Gaussian process kernel is said to be centered.

We now extend the definition of almost sure conditional convergence, given by \cite{crimaldi2009} for real sequences, to sequences taking values in Polish spaces.

\begin{defi}
    Let $(X_n)$ be a sequence of random variables  adapted to a filtration $(\mathcal G_n)$ and taking values in a Polish space $\mathbb X$ with its Borel sigma-algebra $\mathcal X$. For every $n\geq 0$, let $K_n$ denote a regular version of the conditional distribution of $X_{n+1}$, given $\mathcal G_n$. If there exists a kernel $K$ such that the sequence $(K_n(\omega,\cdot))_{n}$ converges weakly to $K(\omega,\cdot)$ for almost every $\omega\in\Omega$, then we say that the sequence $(X_n)$ converges to $K$ in the sense of almost-sure conditional convergence.
\end{defi}

Next results are the main tool in the proof of Theorem \ref{th:clt}.

\begin{thm}[\cite{crimaldi2009}, Theorem A.1]
\label{th:app:almost sure conditional}
For each $n\geq 1$, let $(M_{n,j})_{j\geq 1}$ be a real valued martingale with respect to the filtration $(\mathcal F_{n,j})_{j\geq 1}$, satisfying $M_{n,0}=0$, and converging in $L^1$ to a random variable $M_{n,\infty}$. Set
$$
X_{n,j}=M_{n,j}-M_{n,j-1}\mbox{ for }j\geq 1, \quad U_n=\sum_{j\geq 1}X_{n,j}^2,\quad X_n^*=\sup_{j\geq 1}|X_{n,j}|.
$$
Assume that
\begin{itemize}
    \item[(a)] $(X_n^*)_n$ is dominated in $L^1$ and converges to zero a.s.
    \item[(b)] $(U_n)_n$ converges a.s. to a non-negative random variable $U$.
\end{itemize}
Then the the sequence $(M_{n,\infty})$ 
                    converges to the Gaussian kernel $\mathcal N(0,U)$ in the sense of almost-sure conditional convergence.
\end{thm}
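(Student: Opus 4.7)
The strategy is to establish pathwise convergence of the conditional characteristic functions of $M_{n,\infty}$ to the random function $\exp(-t^2U/2)$, and then upgrade this to weak convergence of the conditional laws via a pointwise application of L\'evy's continuity theorem. Throughout, I take $\mathcal H$ to be a sub-$\sigma$-algebra with respect to which $U$ is measurable, playing the role of the conditioning $\sigma$-field in the definition of almost-sure conditional convergence.

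The technical core is the classical exponential-martingale trick, executed path-by-path. For each $n$ and $t\in\mathbb R$, set $V_{n,j}=\sum_{k=1}^jX_{n,k}^2$ and
$$W_{n,j}=\exp\!\bigl(itM_{n,j}+\tfrac{t^2}{2}V_{n,j}\bigr),\qquad W_{n,0}=1.$$
A Taylor expansion of the form $e^{itx}e^{(t^2/2)x^2}=1+itx+\rho(t,x)$, with $|\rho(t,x)|\le C(|tx|^3\wedge(tx)^2)$, yields the decomposition
$$W_{n,j}-W_{n,j-1}=it\,W_{n,j-1}X_{n,j}+W_{n,j-1}\,\rho(t,X_{n,j}).$$
The first summand is a martingale increment with respect to $(\mathcal F_{n,j})_j$ because $W_{n,j-1}\in\mathcal F_{n,j-1}$ and $E[X_{n,j}\mid\mathcal F_{n,j-1}]=0$. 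Summing in $j$ and taking conditional expectation with respect to a suitable initial $\sigma$-field $\mathcal F_{n,0}$ gives
$$E\!\bigl[e^{itM_{n,\infty}}e^{(t^2/2)U_n}\,\bigm|\,\mathcal F_{n,0}\bigr]=1+E\!\Bigl[\sum_{j\ge1}W_{n,j-1}\rho(t,X_{n,j})\,\Bigm|\,\mathcal F_{n,0}\Bigr].$$

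To close the argument I would control the remainder and then factor out the random variance. On the event $\{U_n\le K\}$ one has $|W_{n,j-1}|\le e^{(t^2/2)K}$, so the remainder is dominated by $C|t|^3 e^{(t^2/2)K}X_n^*\,U_n$, which tends to zero a.s.\ by assumptions (a) and (b); the contribution outside $\{U_n\le K\}$ becomes negligible as $K\to\infty$ because $U<\infty$ a.s. The $L^1$ domination of $X_n^*$ from (a) then promotes the pathwise decay to vanishing of the conditional expectation of the remainder, giving $E[e^{itM_{n,\infty}}e^{(t^2/2)U_n}\mid\mathcal F_{n,0}]\to 1$ a.s. Since $U_n\to U$ a.s., one may replace $e^{(t^2/2)U_n}$ by $e^{(t^2/2)U}$ modulo a vanishing a.s.\ error; and because $U$ is $\mathcal H$-measurable this factor pulls out of the conditional expectation given $\mathcal H$, yielding $E[e^{itM_{n,\infty}}\mid\mathcal H]\to e^{-t^2U/2}$ a.s.

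The delicate step I expect to be the main obstacle is precisely this transfer of the a.s.\ convergence $U_n\to U$ through a conditional expectation whose $\sigma$-field does not in general make $U_n$ measurable. Pulling $e^{(t^2/2)U_n}$ outside the conditional expectation requires a careful truncation of $U_n$ at a large level $K$, use of the $L^1$ domination of $X_n^*$ to invoke dominated convergence, and possibly a diagonal subsequence argument to pass from convergence in probability to almost-sure convergence. Once this interchange is justified, a pointwise L\'evy continuity argument converts the a.s.\ convergence of conditional characteristic functions into a.s.\ weak convergence of the conditional laws of $M_{n,\infty}$ to the Gaussian kernel $\mathcal N(0,U)$.
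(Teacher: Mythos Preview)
The paper does not prove this statement: it is quoted verbatim, without proof, as Theorem~A.1 of \cite{crimaldi2009}, and is used only as a black box in the proofs of Theorems~\ref{th:clt} and~\ref{th:uniformclt}. There is therefore no ``paper's own proof'' to compare your attempt against.

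That said, your sketch has a structural issue worth flagging. In the notion of almost-sure conditional convergence used here, the conditioning $\sigma$-field is $\mathcal F_{n,0}$ and \emph{varies with $n$}; there is no single $\mathcal H$ to condition on. The target kernel $\mathcal N(0,U)$ depends on $U$, which need not be measurable with respect to any $\mathcal F_{n,0}$, so your step ``because $U$ is $\mathcal H$-measurable this factor pulls out of the conditional expectation given $\mathcal H$'' does not apply: you cannot pull $e^{(t^2/2)U}$ outside $E[\,\cdot\mid\mathcal F_{n,0}]$. What must actually be shown is that, for almost every $\omega$, the regular conditional law of $M_{n,\infty}$ given $\mathcal F_{n,0}$ evaluated at $\omega$ converges weakly to $\mathcal N(0,U(\omega))$. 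Crimaldi's argument handles precisely this by working with the conditional characteristic function $E[e^{itM_{n,\infty}}\mid\mathcal F_{n,0}]$ directly and controlling it via a truncation/stopping of the martingale that makes the accumulated quadratic variation $\mathcal F_{n,0}$-manageable, rather than by trying to factor $e^{(t^2/2)U_n}$ through the conditional expectation. Your exponential-martingale decomposition is the right starting point, and you correctly locate the obstacle; but the resolution you propose (fixed $\mathcal H$, pull $U$ out) does not go through as written.
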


\begin{remark}
   Requesting that the almost sure limit of $(U_n)$ is non-negative guarantees the well-definedness of the kernel $\mathcal N(0,U)$ for every $\omega \in \Omega$. Theorem A.1 in {\rm \cite{crimaldi2009}}  specifies that $U$ should be a \lq\lq positive random variable''. Although not explicitly stated, this requirement should be understood as $U \geq 0$, as becomes evident upon a careful examination of the proof.
\end{remark}

\begin{thm}[\cite{crimaldi2016}, Lemma 4.1]
\label{th:app:almost sure2}
    Let $(Z_n)$ be a sequence of real valued random variables adapted to the filtration $(\mathcal G_n)$ and such that $E(Z_{n+1}\mid \mathcal G_n)\rightarrow Z$ a.s. for some random variable $Z$. Moreover, let $(a_n)$ and $(b_n)$ be sequences of real numbers such that
    $$
    b_n\uparrow +\infty,\quad \sum_{k=1}^\infty \frac{E(Z_k^2)}{a_k^2b_k^2}<+\infty.
    $$
Then we have:
\begin{itemize}
    \item[(a)] If $\frac{1}{b_n}\sum_{k=1}^n \frac{1}{a_k}\rightarrow c$ for some constant $c$, then $\frac{1}{b_n}\sum_{k=1}^n \frac{Z_k}{a_k}\rightarrow cZ$ a.s.
    \item[(b)] If $b_n\sum_{k\geq n}\frac{1}{a_kb_k^2}\rightarrow c$ for some constant $c$, then ${b_n}\sum_{k\geq n} \frac{Z_k}{a_kb_k^2}\rightarrow c Z$ a.s.
\end{itemize}\end{thm}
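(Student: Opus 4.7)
\medskip

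\noindent\textbf{Proof proposal for Theorem \ref{th:app:almost sure2}.}

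The plan is to decompose each $Z_k$ into its one-step conditional mean and a martingale-difference remainder, handle the two pieces separately, and then combine. Set
\begin{equation*}
W_k = E(Z_k\mid\mathcal G_{k-1}), \qquad \xi_k = Z_k - W_k.
\end{equation*}
By hypothesis $W_k\to Z$ almost surely, and $(\xi_k)$ is a martingale-difference sequence with respect to $(\mathcal G_k)$ satisfying $E(\xi_k^2)\le E(Z_k^2)$. Throughout I will use the assumption $\sum_{k\ge1} E(Z_k^2)/(a_k^2b_k^2)<\infty$ to control the noise part.

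\emph{Noise part.} Consider the martingale
\begin{equation*}
M_n \;=\; \sum_{k=1}^n \frac{\xi_k}{a_k b_k}.
\end{equation*}
Because the martingale differences are orthogonal in $L^2$,
\begin{equation*}
\sup_n E(M_n^2) \;=\; \sum_{k\ge1} \frac{E(\xi_k^2)}{a_k^2b_k^2} \;\le\; \sum_{k\ge1} \frac{E(Z_k^2)}{a_k^2b_k^2} \;<\;\infty.
\end{equation*}
Hence $M_n$ converges almost surely and in $L^2$ to some $M_\infty$. For part (a), I then invoke Kronecker's lemma with the weight sequence $b_n\uparrow\infty$ to conclude
\begin{equation*}
\frac{1}{b_n}\sum_{k=1}^n \frac{\xi_k}{a_k} \;\longrightarrow\; 0 \quad\text{a.s.}
\end{equation*}
For part (b), I use the identity $\xi_k/(a_kb_k) = M_k-M_{k-1}$ and Abel summation:
\begin{equation*}
\sum_{k\ge n}\frac{\xi_k}{a_kb_k^2}
= \sum_{k\ge n}\frac{M_k-M_{k-1}}{b_k}
= \sum_{k\ge n} M_k\!\left(\tfrac{1}{b_k}-\tfrac{1}{b_{k+1}}\right) - \frac{M_{n-1}}{b_n},
\end{equation*}
where the boundary term at infinity vanishes since $M_k$ is bounded a.s.\ and $b_k\to\infty$. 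Multiplying by $b_n$ and using that the telescoping weights satisfy $b_n\sum_{k\ge n}(1/b_k-1/b_{k+1})=1$ together with $M_k\to M_\infty$ a.s., a standard Toeplitz argument gives $b_n\sum_{k\ge n}M_k(1/b_k-1/b_{k+1})\to M_\infty$ a.s., while $M_{n-1}\to M_\infty$ a.s. The two cancel and yield $b_n\sum_{k\ge n}\xi_k/(a_kb_k^2)\to0$ a.s.

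\emph{Signal part.} It remains to treat $\sum W_k/a_k$ weighted as in (a) or (b). Since $W_k\to Z$ almost surely, and the weight arrays
\begin{equation*}
c^{(\mathrm a)}_{n,k}=\frac{1}{a_kb_n}\;(k\le n), \qquad c^{(\mathrm b)}_{n,k}=\frac{b_n}{a_kb_k^2}\;(k\ge n),
\end{equation*}
satisfy, respectively, $\sum_k c^{(\mathrm a)}_{n,k}\to c$ and $c^{(\mathrm a)}_{n,k}\to0$ for each fixed $k$, and $\sum_k c^{(\mathrm b)}_{n,k}\to c$ with $c^{(\mathrm b)}_{n,k}\to0$ for each fixed $k$ (the latter because $b_n/b_k^2\to0$ for $k$ large once $n\le k$ is sent to infinity), a standard Toeplitz/Kronecker-type lemma applied pathwise delivers
\begin{equation*}
\frac{1}{b_n}\sum_{k=1}^n \frac{W_k}{a_k}\to cZ \quad\text{a.s.},\qquad b_n\sum_{k\ge n}\frac{W_k}{a_kb_k^2}\to cZ\quad\text{a.s.}
\end{equation*}
Adding the signal and noise contributions yields both (a) and (b).

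\emph{Main obstacle.} The conceptually clean part is the $L^2$-martingale bound giving a.s.\ convergence of $M_n$. The delicate step is the tail sum in (b): one must rewrite it via Abel summation so that the noise becomes a telescoping combination of a convergent martingale, and then show the pre-factor $b_n$ balances exactly against the telescoping of $(1/b_k-1/b_{k+1})$. The Toeplitz step is then applied to a random sequence $M_k\to M_\infty$ rather than a deterministic one, which is legitimate pathwise but must be justified carefully, in particular that $\sum_k|c^{(\mathrm b)}_{n,k}|$ stays bounded in $n$ (which follows from the assumption $b_n\sum_{k\ge n}1/(a_kb_k^2)\to c$ together with $a_k>0$ in the intended applications).
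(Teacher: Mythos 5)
The paper does not prove this statement at all: it is imported verbatim from \citet[Lemma 4.1]{crimaldi2016} and used as a black box, so there is no internal proof to compare yours against. On its own merits, your argument is the standard one (and, as far as I can tell, essentially the route of the cited source): split $Z_k$ into the signal $W_k=E(Z_k\mid\mathcal G_{k-1})\to Z$ and the martingale difference $\xi_k$, kill the noise via the $L^2$-bounded martingale $M_n=\sum_{k\le n}\xi_k/(a_kb_k)$ together with Kronecker's lemma for (a), and for (b) via Abel summation plus the exact identity $b_n\sum_{k\ge n}(1/b_k-1/b_{k+1})=1$ and a weighted-average (Toeplitz) step applied pathwise to $M_k\to M_\infty$; the signal is handled by the same Toeplitz-type averaging applied pathwise to $W_k\to Z$. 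These steps are all sound; one line worth adding is that the tail series in (b) converges in the first place (e.g.\ $\sum_k\xi_k/(a_kb_k^2)$ is again an $L^2$-bounded martingale since $1/b_k^2$ is eventually bounded, and the signal tail converges absolutely once $a_k>0$), so the infinite Abel rearrangement is legitimate.

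The one genuine caveat is the sign of $a_k$, and it affects part (a) exactly as much as part (b), whereas you flag it only for (b). Your Toeplitz step for the signal needs $\sup_n\frac{1}{b_n}\sum_{k\le n}1/|a_k|<\infty$ (resp.\ $\sup_n b_n\sum_{k\ge n}1/(|a_k|b_k^2)<\infty$), which follows from the stated convergence hypotheses only when the $a_k$ are positive; merely having $\sum_k c_{n,k}\to c$ and $c_{n,k}\to0$ for fixed $k$ is not enough. Indeed, as transcribed (arbitrary real $a_k$) the lemma is false: take $Z_k=(-1)^k k^{-1/4}$ deterministic, $a_k=(-1)^k k^{-1/2}$, $b_k=k$; then $E(Z_{k+1}\mid\mathcal G_k)\to 0$, $\sum_k E(Z_k^2)/(a_k^2b_k^2)=\sum_k k^{-3/2}<\infty$, and $\frac{1}{b_n}\sum_{k\le n}1/a_k\to 0$, yet $\frac{1}{b_n}\sum_{k\le n}Z_k/a_k=\frac{1}{n}\sum_{k\le n}k^{1/4}\to\infty$, contradicting the conclusion $cZ=0$. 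The original lemma assumes strictly positive sequences, and in this paper's applications $a_k=(\tilde\alpha_k b_k)^{-2}>0$ and $b_k>0$, so nothing is lost in practice; but you should state the positivity assumption up front and invoke it in both parts rather than as an afterthought for (b).
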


\subsubsection{A concentration inequality for Gaussian processes}\label{appC}

Let $(T,d)$ be a totally bounded pseudometric space, i.e. $d$ satisfies the axioms of a metric except that does not necessarily separate points.
For every $\delta>0$, a $\delta$-packing of $T$ is a finite subset $T_0$ of $T$ such that for every $t_1,t_2\in T_0$, $d(t_1,t_2)>\delta$.
A $\delta$-covering of $T$ is a finite subset $T_1$ of $T$ whose covering radius $r$ satisfies $r\leq \delta$, where the covering radius of $T_1$ is defined as
$$
r=\inf\{s>0:\mbox{ for every }t\in T\mbox{ there exists }t_1\in T_1 \mbox{ with }d(t,t_1)\leq s\}.
$$
The maximal cardinality of a $\delta$-packing of $T$ is called the $\delta$-packing number of $T$ and denoted by $N(\delta,T)$. The minimal cardinality of a $\delta$-covering of $T$ is called the $\delta$-covering number of $T$, and denoted by $N'(\delta,T)$. It follows from the definitions that
$$
N'(\delta,T)\leq N(\delta,T)\leq N'(\delta/2,T).
$$
The $\delta$-entropy number of $T$ is defined as
$$
H(\delta,T)=\log(N(\delta,T)).
$$
The function $H(\cdot,T)$ is called the metric entropy of $(T,d)$. The covariance pseudometric of a centered Gaussian process $(X(t))_{t\in T}$ is defined as 
$$
d(s,t)=E((X(s)-X(t))^2)^{1/2}=(R(s,s)-2R(s,t)+R(t,t))^{1/2},
$$
where $R$ is the covariance function of $(X(t))_{t\in T}$.

\begin{thm}[\cite{massart2007}, Theorem 3.18]
\label{th:supexp}
Let $(X(t))_{t\in T}$ be a centered Gaussian process  and let $d$ be the covariance pseudometric of $(X(t))_{t\in T}$. Assume that $(T,d)$ is totally bounded and denote by $H(\delta,T)$ the $\delta$-entropy number of $(T,d)$, for all positive $\delta$. If $\sqrt{H(\cdot,T)}$ is integrable at $0$, then $(X(t))_{t\in T}$ admits a version which is almost surely uniformly continuous on $(T,d)$. Moreover, if $(X(t))_{t\in T}$ is almost surely continuous on $(T,d)$, then
$$
E(\sup_{t\in T}X(t))\leq 12\int_0^\sigma \sqrt{H(z,T)}dz,
$$
where $\sigma=(\sup_{t\in T}E(X^2(t)))^{1/2}$.
\end{thm}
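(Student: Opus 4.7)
The natural approach is Dudley's chaining argument, built on the Gaussian maximal inequality $E[\max_{i\le N}Z_i]\le \tau\sqrt{2\log N}$ for centered Gaussians $Z_i$ of variance at most $\tau^2$, which is an immediate consequence of the sub-Gaussian tail $P(|Z|\ge u)\le 2e^{-u^2/(2\tau^2)}$.

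First I would construct, for each integer $j\ge 0$, a minimal $\sigma 2^{-j}$-covering $T_j$ of $(T,d)$, with $|T_j|\le N'(\sigma 2^{-j},T)\le N(\sigma 2^{-j},T)$, and for each $t\in T$ let $\pi_j(t)$ be a nearest point in $T_j$, so that $d(t,\pi_j(t))\le\sigma 2^{-j}$. Because $\sigma$ bounds the $d$-diameter of $T$ one may take $|T_0|=1$, so that $E[X(\pi_0(t))]=0$. Defining $\pi_{j-1}(t)$ as the projection of $\pi_j(t)$ onto $T_{j-1}$ gives $d(\pi_j(t),\pi_{j-1}(t))\le 3\sigma 2^{-j}$ and at most $|T_j|$ distinct increments $X(\pi_j(t))-X(\pi_{j-1}(t))$ as $t$ ranges over $T$. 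Fixing a countable dense subset $T_\infty\subset T$ and assuming $X$ is a.s.\ continuous on $(T,d)$, I then use the telescoping identity
\[
X(t)=X(\pi_0(t))+\sum_{j\ge 1}\bigl(X(\pi_j(t))-X(\pi_{j-1}(t))\bigr),\qquad t\in T_\infty,
\]
with a.s.\ convergence because $\pi_j(t)\to t$ in $(T,d)$. Applying the Gaussian maximal inequality level by level and summing yields
\[
E\bigl[\sup_{t\in T_\infty}X(t)\bigr]\le \sum_{j\ge 1}3\sigma 2^{-j}\sqrt{2H(\sigma 2^{-j},T)}.
\]
A Riemann comparison based on $\sigma 2^{-j}\le 2\int_{\sigma 2^{-j}}^{\sigma 2^{-(j-1)}}du$ and the monotonicity of $H(\cdot,T)$ converts the geometric sum into $C\int_0^\sigma\sqrt{H(u,T)}\,du$; a.s.\ continuity of $X$ then allows replacing $T_\infty$ by $T$ inside the supremum, and careful bookkeeping of constants yields the advertised bound with $C=12$.

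For the almost-sure uniform-continuity claim, the same chain gives, for $s,t\in T_\infty$ with $d(s,t)\le\sigma 2^{-k}$,
\[
|X(s)-X(t)|\le 2\sum_{j\ge k}\sup_{t'\in T}\bigl|X(\pi_j(t'))-X(\pi_{j-1}(t'))\bigr|,
\]
whose expectation is bounded by a tail of the entropy integral $\int_0^{\sigma 2^{-k+1}}\sqrt{H(u,T)}\,du$. Integrability of $\sqrt{H(\cdot,T)}$ at $0$ forces this tail to vanish as $k\to\infty$; combined with sub-Gaussian concentration of each level's supremum and a Borel--Cantelli step applied along the deterministic sequence $k\to\infty$, this shows that the modulus of continuity along $T_\infty$ vanishes a.s., so $X$ admits a uniformly continuous version on $T_\infty$, extended by continuity to all of $T$.

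The main obstacle is extracting exactly the absolute constant $12$: the chaining argument naturally produces $C\int_0^\sigma\sqrt{H(u,T)}\,du$ for a dimensionless $C$ that depends on the scale factor in the nested nets (here $2$), on the covering-versus-packing comparison $N'\le N\le N'(\cdot/2)$, and on the choice of $\pi_{j-1}$ which governs the per-level cardinality of increments; squeezing $C$ down to $12$ requires the sharp version of each of these steps. A secondary technicality is the measurability of $\sup_{t\in T}X(t)$ when $T$ is uncountable, which is handled by the continuous-version construction of the first half and by approximating the supremum along the monotone exhaustion $T_\infty\cap F_1\subset T_\infty\cap F_2\subset\cdots$ of $T_\infty$ before taking expectations, invoking monotone convergence.
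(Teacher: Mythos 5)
This statement is not proved in the paper at all: it is quoted verbatim from Massart (2007, Theorem 3.18) as a background tool in Appendix C, so there is no in-paper argument to compare against. Your Dudley-chaining sketch is the standard route and essentially the one Massart follows (nested nets at geometric scales, the Gaussian maximal inequality $E[\max_{i\le N}Z_i]\le\tau\sqrt{2\log N}$, telescoping, and a Riemann comparison to the entropy integral), so the overall strategy is sound.

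Two points in your write-up fall short of a proof, though. First, the claim ``because $\sigma$ bounds the $d$-diameter of $T$ one may take $|T_0|=1$'' is false as stated: with $\sigma=(\sup_t E X^2(t))^{1/2}$ one only gets $d(s,t)\le 2\sigma$, and equality can occur (take $T=\{s,t\}$ with $X(s)=-X(t)$ a standard Gaussian, so $d(s,t)=2$ while $\sigma=1$), so a single point need not be a $\sigma$-cover. The standard repair is to anchor the chain at an arbitrary fixed $t_0$, use $E\sup_t X(t)=E\sup_t\bigl(X(t)-X(t_0)\bigr)$ since $E X(t_0)=0$, and start the nets at a scale comparable to the diameter; the resulting extra factor is exactly the kind of slack the constant $12$ is there to absorb. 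Second, you explicitly defer ``careful bookkeeping'' for the constant $12$ and for the packing-versus-covering comparison built into the paper's definition of $H(\delta,T)=\log N(\delta,T)$ (packing numbers), so the quantitative conclusion is asserted rather than derived. For a result the paper imports by citation this is a reasonable level of detail, but as a standalone proof these two steps would need to be filled in; the uniform-continuity half of your argument (tail of the entropy integral plus Borel--Cantelli along the levels) is the standard and correct mechanism.
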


The following concentration inequality allows to give an upper bound in probability for $  \sup_{t\in T}X(t)$.

  \begin{thm}[\cite{massart2007}, Proposition 3.19]
  \label{th:supX}
Let $(X(t))_{t\in T}$ be an almost surely continuous centered Gaussian process on the totally bounded set $(T,d)$, and let
$$
\sigma=\left(\sup_{t\in T}E(X^2(t))\right)^{1/2}.
$$
If $\sigma>0$, then, for every $z>0$,
    $$
    P(\sup_{t\in T}X(t)\geq E(\sup_{t\in T} X(t))+\sigma\sqrt{2z})\leq e^{-z}.
    $$
\end{thm}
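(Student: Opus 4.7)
The plan is to reduce the result to finite-dimensional Gaussian concentration for Lipschitz functionals, then pass to the limit using the a.s.\ continuity assumption. Since $(T,d)$ is totally bounded and $(X(t))_{t\in T}$ is a.s.\ continuous on $(T,d)$, I would first pick an increasing sequence of finite sets $T_n\subset T$ whose union is dense in $(T,d)$. By a.s.\ continuity, $\sup_{t\in T_n}X(t)\uparrow \sup_{t\in T}X(t)$ almost surely. The integrability of $\sup_{t\in T}X(t)$ (which is needed for the statement to make sense) follows from Theorem \ref{th:supexp}, i.e.\ the entropy-integral bound on $E(\sup_t X(t))$; hence, by monotone convergence, $E(\sup_{t\in T_n}X(t))\uparrow E(\sup_{t\in T}X(t))$ as well.

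Next, for each fixed $n$ I would treat the finite-dimensional Gaussian vector $(X(t_1),\ldots,X(t_N))$ where $T_n=\{t_1,\ldots,t_N\}$. Writing its covariance matrix as $AA^\top$ for some $N\times N$ matrix $A$, one has a representation $(X(t_1),\ldots,X(t_N))=AZ$ in distribution, with $Z\sim\mathcal N(0,I_N)$. Define
$$
F(z)=\max_{1\leq i\leq N}(Az)_i,\qquad z\in\mathbb R^N.
$$
Then $F$ is Lipschitz on $\mathbb R^N$ with respect to the Euclidean norm, with Lipschitz constant bounded by $\max_i\|A^\top e_i\|_2=\max_i \sqrt{(AA^\top)_{ii}}=\max_i\sqrt{E(X(t_i)^2)}\leq \sigma$. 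This uses only that the maximum is a $1$-Lipschitz function of its vector argument (componentwise), combined with the operator bound coming from $A$.

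I would then invoke the Gaussian concentration inequality for Lipschitz functionals of a standard Gaussian vector: if $G:\mathbb R^N\to\mathbb R$ is $L$-Lipschitz and $Z\sim\mathcal N(0,I_N)$, then
$$
P(G(Z)-E(G(Z))\geq u)\leq \exp(-u^2/(2L^2))\quad\text{for all }u>0.
$$
This is the classical Borell--Tsirelson--Ibragimov--Sudakov inequality, which one can take as a known input (it follows, for instance, from the Gaussian log-Sobolev inequality combined with a Herbst-type argument, or from the Gaussian isoperimetric inequality). Applied to $F$ with $L=\sigma$ and $u=\sigma\sqrt{2z}$, this gives
$$
P\Bigl(\sup_{t\in T_n}X(t)-E\bigl(\sup_{t\in T_n}X(t)\bigr)\geq\sigma\sqrt{2z}\Bigr)\leq e^{-z}.
$$

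Finally, I would pass to the limit $n\to\infty$. Setting $M_n=\sup_{t\in T_n}X(t)$ and $M=\sup_{t\in T}X(t)$, we have $M_n\uparrow M$ a.s.\ and $EM_n\uparrow EM$, so $M_n-EM_n\to M-EM$ almost surely. Fixing $\varepsilon>0$, for every $\omega$ with $M(\omega)-EM\geq \sigma\sqrt{2z}$ one has $M_n(\omega)-EM_n>\sigma\sqrt{2z}-\varepsilon$ for all sufficiently large $n$; Fatou's lemma then yields
$$
P\bigl(M-EM\geq\sigma\sqrt{2z}\bigr)\leq \liminf_n P\bigl(M_n-EM_n>\sigma\sqrt{2z}-\varepsilon\bigr)\leq \exp\!\left(-\frac{(\sigma\sqrt{2z}-\varepsilon)^2}{2\sigma^2}\right),
$$
and letting $\varepsilon\downarrow 0$ gives the claim. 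The main obstacle is the Borell--TIS step itself, which is the deep ingredient; the approximation and the Lipschitz-constant identification are comparatively routine but still require care to ensure the bound on the Lipschitz constant is independent of $n$ (it is, since $(AA^\top)_{ii}=\mathrm{Var}(X(t_i))\leq \sigma^2$ uniformly in $i$ and $n$).
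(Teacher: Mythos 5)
The paper itself does not prove this statement: it is recalled verbatim from \citet[Proposition 3.19]{massart2007}, so there is no internal proof to compare against. Your argument is the standard route to that result — finite-dimensional approximation along a dense sequence of finite subsets, the representation of the Gaussian vector as $AZ$ with $\|A^\top e_i\|_2^2=\mathrm{Var}(X(t_i))\leq\sigma^2$, the Borell--Tsirelson--Ibragimov--Sudakov concentration inequality for Lipschitz functionals of a standard Gaussian vector, and a Fatou passage to the limit — and this is essentially how the cited source obtains it (Gaussian concentration for Lipschitz functions plus separability/continuity), so the substance is right. The one step that is not fully justified under the stated hypotheses is the integrability of $\sup_{t\in T}X(t)$: you deduce it from Theorem \ref{th:supexp}, but that bound requires the Dudley entropy integral to converge, which does not follow from total boundedness of $(T,d)$ together with almost sure continuity alone (continuity of a Gaussian process is characterized by majorizing measures, not by the entropy integral). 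In the present paper this is harmless, since Lemma \ref{lem:expsup} explicitly verifies that $\sqrt{H(\cdot,I)}$ is integrable at $0$ for the covariance pseudometric in use; for a self-contained proof of Theorem \ref{th:supX} in its stated generality, one should instead obtain $E(\sup_{t\in T}X(t))<+\infty$ from the finite-dimensional concentration itself (e.g.\ concentration of $\sup_{t\in T_n}X(t)$ around its median with constants uniform in $n$, as in the classical Borell--TIS argument), after which your limiting step goes through unchanged.
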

If $\sigma=0$, then $\sup_{t\in T}X(t)=E(\sup_{t\in T}X(t))=0$. Hence the above inequality does not hold. However, in such a case, for every $\epsilon>0$, $P(\sup_{t\in T}X(t)\geq \epsilon)=0$. Therefore, whatever $\sigma\geq 0$, we can write that 
$$
  P(\sup_{t\in T}X(t)\geq \max(\epsilon,E(\sup_{t\in T} X(t))+\sigma\sqrt{2z}))\leq e^{-z},
$$
for every $\epsilon >0$ and $z>0$.

\subsection{Proofs of Section \ref{sec2}}\label{appD}
We introduce the following notation: for a filtered probability space $(\Omega,\mathcal F,P,(\mathcal G_n))$, we denote by $P_n(\cdot)=P(\cdot\mid\mathcal G_n)$ and by $E_n(\cdot)=E(\cdot\mid\mathcal G_n)$. We also denote by $X_{1:n}=(X_1,\dots,X_n)$.

\subsubsection{Proof of Proposition \ref{prop:lawX}}\label{sec:appprop1}
  Define recursively the law of $(X_n)$ and $(Z_n)$ for $n\geq 1$ by setting the conditional density of $X_{n+1}$, given $X_{1:n},Z_{1:n},Y_{1,n}$ as in \eqref{eq:predX}, the conditional density of $Z_{n+1}$, given $X_{1:n+1},Z_{1:n},Y_{1,n}$, as $f_Z$. Then define $Y_{n+1}=X_{n+1}+Z_{n+1}$ and $\widetilde G_{n+1}$ as in \eqref{eq:newton}. Let $(\Omega,\mathcal F,P)$ be a probability space on which all the random variables $(X_n,Y_n,Z_n)$ are defined. Denoting by $E_{n-1}$ the conditional expectation given $(X_{1:n-1},Y_{1:n-1},Z_{1:n-1})$, we can write that
  \begin{align*}
E_{n-1}(e^{isY_n})&=\varphi_{f_Z}(s)\int\varphi_{k}(s\mid\theta)\widetilde G_{n-1}(\ddr\theta)\\
&=\int \varphi_{k}(s\mid\theta) \varphi_{f_Z}(s)\widetilde G_{n-1}(\ddr\theta)\\
&=\int \varphi_{\widetilde k}(s\mid\theta)\widetilde G_{n-1}(\ddr\theta),
\end{align*}
where $\varphi_{f_Z}$ and  $\varphi_{\widetilde k}(s\mid\theta)$ denote the Fourier transforms of $f_Z$ and of  $\widetilde k(\cdot\mid\theta)=k(\cdot\mid\theta)*f_Z$, respectively. Hence \eqref{eq:predY} holds. 
Moreover, the conditional distribution of $X_{n+1}$, given $X_{1:n},Z_{1:n},Y_{1,n}$ is uniquely identified by
\begin{align*}
E_{n}(e^{isY_{n+1}})&=E_n(e^{isZ_{n+1}}e^{isX_{n+1}})\\
&=\varphi_{f_Z}(s)E_{n}(e^{isX_{n+1}}).
\end{align*}

\subsubsection{Proof of Theorem \ref{teo_cid}}\label{app4_0}
The convergence of the sequence $(\widetilde G_n)$ to a random probability measure $\widetilde G$ follows directly from Theorem~1 in \cite{For(20)}. For completeness, and to highlight the role of the filtration $(\mathcal G_n)$ in our setting, we briefly outline the argument.
For every $B\in \mathcal B(\Theta)$, the sequence $(\widetilde G_n(B))_{n\geq 0}$ is adapted to the filtration $(\mathcal G_n)_{n\geq 0}$. From \eqref{eq:predY} the conditional distribution of $Y_{n+1}$, given $\mathcal G_n$ has density $f_{\widetilde G_n}^{(Y)}(y)=\int_\Theta \widetilde k(y\mid\theta)\widetilde G_n(\ddr\theta)$ with respect to $\lambda$. Thus, for every $n\geq 0$,
\begin{align*}
&E(\widetilde G_{n+1}(B)\mid\mathcal G_n)\\
&=(1-\widetilde\alpha_{n+1})\widetilde G_n(B)+\widetilde\alpha_{n+1}\int_{B}\int \frac{\widetilde k(y\mid\theta)}{\int \widetilde k(y\mid\theta')\widetilde G_n(\ddr\theta')}f_{\widetilde G_n}^{(Y)}(y)\lambda(dy)\widetilde G_n(\ddr\theta)\\
&=\widetilde G_n(B).
\end{align*}
Thus, for every $B\in\mathcal B(\Theta)$, the sequence $(\widetilde G_n(B))_{n\geq0}$ is a bounded martingale with respect to the filtration $(\mathcal G_n)_{n\geq 0}$. By Lemma 7.14 in \cite{aldous1985}, there exists a random probability measure $\widetilde G$ on $\mathcal B(\Theta)$ such that $\widetilde G_n$ converges to $\widetilde G$ weakly $P$-a.s. and for every $n,k\in\mathbb N$ and every bounded continuous function $h$ on $\Theta$, 
$$\int_\Theta h(\theta)\widetilde G_n(\ddr\theta)=E\left(\int_\Theta h(\theta)\widetilde G_{n+k}(\ddr\theta)\mid\mathcal G_n\right)=E\left(\int_\Theta h(\theta)\widetilde G(\ddr\theta)\mid\mathcal G_n\right)\quad P\mbox{-a.s.}$$
We now turn to the latent sequence $(X_n)$, which is not directly covered by \cite{For(20)}. Since $k(x\mid\theta)$ is continuous in $\theta$,
then
$$
f_{\widetilde G_n}(x)=\int_\Theta k(x\mid\theta)\widetilde G_n(\ddr\theta)\rightarrow \int_\Theta k(x\mid\theta)\widetilde G(\ddr\theta)
$$
$P$-a.s., for $\lambda$-almost all $x$, as $n\rightarrow+\infty$.
 Moreover, for every $n\geq 0$, $k\geq 1$ and $A\in\mathcal B(\mathbb R)$,
\begin{align}\label{eq:martpred}
P(X_{n+k}\in A\mid\mathcal G_n)&=E(P(X_{n+k}\in A\mid\mathcal G_{n+k-1})\mid\mathcal G_n)\notag\\
&=E\left(\int_A\int_\Theta k(x\mid\theta)\widetilde G_{n+k-1}(\ddr\theta)\lambda(dx)\mid\mathcal G_n\right)\notag\\
&=\int_A\int_\Theta k(x\mid\theta)\widetilde G_{n}(\ddr\theta)\lambda(dx)\notag\\
&=P(X_{n+1}\in A\mid\mathcal G_n).
\end{align}
We show that the sequence $(X_n)$ is conditionally identically distributed in the sense of \cite{Berti(04)}. The assumptions of the learning model include the conditional independence of $X_{n+1}$ from $X_1,\dots,X_n$, given $Y_1,\dots,Y_n$ and the independence of $Z_{n+1}$ from $X_1,\dots,X_n,Y_1,\dots,Y_n$, for every $n\geq 0$ with the convention that a variable with a $0$ subscript is omitted. This implies that for every $k\geq 1$, $X_{n+k}$ is conditionally independent of $X_1,\dots,X_n$, given $Y_1,\dots,Y_n$. Indeed in a directed acyclic graph representing the joint probability distribution of $(X_1,Y_1,Z_1,\dots, X_{n+k},Y_{n+k},Z_{n+k})$, the random blocks $X_{n+k}$ and $(X_1,Z_1,\dots,X_n,Z_n)$ are $d$-separated by $Y_n$. Let $(\mathcal G_n^{a})_{n\geq 0}$ denote the natural filtration of the random  sequence $(X_n,Y_n,Z_n)_{n\geq 1}$. This enlargement is needed to ensure that $(X_n)$ is adapted. It follows by \eqref{eq:martpred} and the conditional independence of $X_{n+k}$ that, for every $A\in\mathcal B(\R)$,
$$
P(X_{n+k}\in A\mid \mathcal G_n^{a})=P(X_{n+k}\in A\mid\mathcal G_n)=P(X_{n+1}\in A\mid\mathcal G_n)=P(X_{n+1}\in A\mid\mathcal G_n^{a}).
$$
Thus, the sequence $(X_n)_{n\geq1}$ is adapted to the filtration $(\mathcal G_n^a)_n$ and conditionally identically distributed with respect to it (\cite{Berti(04)}).  Its directing random measure has density $f_{\widetilde G}^{(X)}$ with respect to $\lambda$, since for every $A$
\begin{align*}
\lim_{n\rightarrow+\infty}P(X_{n+1}\in A\mid\mathcal G_n^a)
&=\lim_{n\rightarrow+\infty}P(X_{n+1}\in A\mid\mathcal G_n)\\&=\lim_{n\rightarrow+\infty}\int_A\int_{\Theta}k(x\mid\theta)\widetilde G_n(\ddr\theta)\lambda(dx)\\
&=\int_A\int_{\Theta}k(x\mid\theta)\widetilde G(\ddr\theta)\lambda(dx)\\&=\int_Af_{\widetilde G}^{(X)}(x)\lambda(dx).
\end{align*}
By Lemma 2.1 and Theorem 2.2 in \cite{Berti(04)} applied to $f(X_n)=I(X_n\in A)$, we deduce that $\int_A f_{\widetilde G}(x)\lambda(dx)$ is the $P$-a.s. limit of $\frac 1 n \sum_{k=1}^n I(X_k\in A)$ for every $A\in\mathcal B(\mathbb R)$. Moreover,
\begin{align*}
P(X_{n+k}\in A\mid\mathcal G_n^a)&=
P(X_{n+k}\in A\mid\mathcal G_n)\\&=E\left(\int_A \int_\Theta k(x\mid\theta)\widetilde G_{n+k-1}(\ddr\theta)\lambda(dx)\;\middle|\;\mathcal G_n\right)\\
&=E\left(\int_A \int_\Theta k(x\mid\theta)\widetilde G(\ddr\theta)\lambda(dx)\;\middle|\;\mathcal G_n\right)\\
&=E\left(\int_A f_{\widetilde G}^{(X)}(x)\lambda(dx)\;\middle|\; \mathcal G_n\right).
\end{align*}
It follows that
$$
P(X_n\in A)=E\left(\int_Af_{\widetilde G}^{(X)}(x)\lambda(dx)\right)=\iint_A f_G^{(X)}(x)\lambda(dx)\pi(dG),
$$
where $\pi$ is the probability distribution of $\widetilde G$.\\
To prove that $f_{\widetilde G_n}^{(X)}$ converges to $f_{\widetilde G}^{(X)}$ in $L^1$, we can apply the dominated convergence theorem to the sequence of functions $(|f_{\widetilde G_n}^{(X)}-f_{\widetilde G}^{(X)}|)_{n\geq 0}$, which converges to zero $\lambda$-a.e. ($P$-a.s.) and is dominated by $\sup_{\theta\in\Theta}k(x\mid \theta)$ that is integrable with respect to $\lambda$ by assumption.

\subsection{Proofs of Section \ref{sec3}}\label{appE}

\subsubsection{Proof of Theorem \ref{th:clt}}\label{app:prrofth2}
\begin{lem}
   \label{lem:vn}
    Let $v(x)$ be defined as in \eqref{eq:vx} and let $v_n(x)$ be defined as $v(x)$ with $\widetilde{G}_{n}$ in place of $\widetilde{G}$. Then, for every $x\in\mathbb R$, $v_n(x)$ is strictly positive and converges to $v(x)$, $P$-a.s. 
     \end{lem}
     \begin{proof} Let us denote by $k_x(\cdot)=k(x\mid\cdot)$ and $\widetilde k_y(\cdot)=\widetilde k(y\mid\cdot)=(f_Z*k(\cdot\mid\cdot))(y)$. \\
     To show that $v_n(x)\neq 0$ $P$-a.s. we proceed by contradiction. If $v_n(x)=0$, then 
     $$\int_\Theta k(x\mid\theta)\left(\widetilde G_n(\ddr\theta\mid y)-\widetilde G(\ddr\theta)\right)=0$$
     for $\lambda$-almost every $y$. By the assumption A4), the last equality  entails $\widetilde G_n(\cdot \mid y)=\widetilde G_n(\cdot)$ for  $\lambda$-almost every $y$. On the other hand, by the assumptions A1), A2) and A5), $\mbox{supp}(\widetilde G_n)=\Theta$, which implies $\widetilde k(\cdot\mid\theta)\neq \int \widetilde k(\cdot\mid\theta')\widetilde G_n(\ddr\theta')$, leading to $\widetilde G_n(\cdot\mid y)\neq \widetilde G_n(\cdot)$ for $y$ in a set of $\lambda$-positive measure. 

     We now prove that $v_n(x)$ converges to $v(x)$, $P$-a.s. First, notice that, since we have $\int \sup_{\theta\in\Theta}k(x\mid\theta)\lambda(\ddr x)<+\infty$, then there exists a density function $h$ with respect to $\lambda$ such that
     $$
     h(x)=\frac{\sup_{\theta\in\Theta}k(x\mid\theta)}{\int \sup_{\theta\in\Theta}k(x\mid\theta)\lambda(\ddr x)}.
     $$
    Thus, we can write that
     \begin{align*}
     \int \sup_{\theta\in\Theta}\widetilde k(y\mid\theta)\lambda(\ddr y)&=\int \sup_{\theta\in\Theta}\int k(x\mid\theta)f_Z(y-x)\lambda(\ddr x)\lambda(\ddr y)\\
     &\leq C \iint h(x)f_Z(y-x)\lambda(\ddr x)\lambda(\ddr y)<+\infty,
     \end{align*}
     where $C=\int \sup_{\theta\in\Theta}k(x\mid\theta)\lambda(\ddr x)$.
     Moreover,
      \begin{align*}
                \int_\Theta \widetilde k_y \ddr \widetilde G_n&=\int \int_\Theta
k(x\mid\theta)f_Z(y-x)\lambda(\ddr x)\widetilde G_n(\ddr\theta)\leq C  (f_Z*h)(y),
\end{align*}
which is a function integrable with respect to $\lambda$.

      
       By assumption A3) and Theorem \ref{teo_cid},
      $$
      \int_\Theta k(x\mid\theta)\left(\widetilde G_n(\ddr\theta\mid y)-\widetilde G_n(\ddr\theta)\right) 
      \rightarrow
      \int_\Theta k(x\mid\theta)\left(\widetilde G(\ddr\theta\mid y)-\widetilde G(\ddr\theta)\right)\quad P\mbox{-a.s.}
      $$
      Since $\left[\int_\Theta k(x\mid\theta)\left(\widetilde G(\ddr\theta\mid y)-\widetilde G(\ddr\theta)\right)\right]^2\leq \sup_\theta k(x\mid\theta)^2 $ and $\int_\Theta \widetilde k_y\ddr \widetilde G_n\leq C(f_Z*h)(y)$ which is integrable with respect to $\lambda$, then by dominated convergence theorem,
   \begin{align*}
       v_n(x)
   \stackrel{P-a.s.}{\longrightarrow }&
   \int\left[\int_\Theta k(x\mid\theta)\left(\widetilde G(\ddr\theta\mid y)-\widetilde G(\ddr\theta)\right)\right]^2\left(\int_\Theta \widetilde k_y\ddr \widetilde G\right)\lambda(dy)\\
&=\int\left[f_{\widetilde G}^{(X)}(x\mid y)-f_{\widetilde G}^{(X)}(x)\right]^2f_{\widetilde G}^{(Y)}(y)\lambda(dy)=v(x).
      \end{align*}
     \end{proof}
     
\begin{proof} [Proof of Theorem \ref{th:clt}]
       The proof is based on \citet[Theorem A.1]{crimaldi2009}  (see  Theorem \ref{th:app:almost sure conditional} in Section B). Fix  $x\in\mathbb R$ and define, for each $n\geq 0$,
       $$
       \mathcal F_{n,0}=\mathcal G_n,\quad \mathcal F_{n,j}=\mathcal G_{n+j-1}\mbox{ for }j\geq 1,
       $$
       $$
       M_{n,0}=0,\quad M_{n,j}= b_n^{1/2}(f_{\widetilde G_n}^{(X)}(x)-f_{\widetilde G_{n+j-1}}^{(X)}(x))\mbox{ for }j\geq 1.
       $$
       For every $n\geq 0$, the sequence $(M_{n,j})_{j\geq 0}$ is a bounded martingale with respect to the filtration $(\mathcal F_{n,j})_{j\geq 0}$, converging $P$-a.s. and in $L^1$ to $$M_{n,\infty}= b_n^{1/2}(f_{\widetilde G_n}^{(X)}(x)-f_{\widetilde G}^{(X)}(x)).$$
       Let
       $$
       X_{n,j}=M_{n,j}-M_{n,j-1}=b_n^{1/2}(f_{\widetilde G_{n+j-1}}^{(X)}(x)-f_{\widetilde G_{n+j}}^{(X)}(x)),
       $$
       $$
       X_n^*=\sup_{j\geq 1}|X_{n,j}|= b_n^{1/2} \sup_{j\geq 1} |f_{\widetilde G_{n+j}}^{(X)}(x)-f_{\widetilde G_{n+j-1}}^{(X)}(x)|,
       $$
       and
       $$
       U_n=\sum_{j=0}^\infty X_{n,j}^2=b_n\sum_{k\geq n}(f_{\widetilde G_{k+1}}^{(X)}(x)-f_{\widetilde G_k}^{(X)}(x))^2.
       $$
       The thesis follows from Theorem \ref{th:app:almost sure conditional}, if we can verify that:
  \begin{itemize}
            \item[(a)] The sequence $(X^*_n)$ is dominated in $L^1$ and converges to zero $P$-a.s. as $n\rightarrow+\infty$.
      \item[(b)] $
      U_n\rightarrow v(x) $,   $P$-a.s. as $n\rightarrow+\infty$.
  \end{itemize}
   Notice that
   \begin{align*}
        f_{\widetilde G_k}^{(X)}(x)-  f_{\widetilde G_{k-1}}^{(X)}(x)&=\int_\Theta k(x\mid\theta) (\widetilde G_{k}(\ddr\theta)-\widetilde G_{k-1}(\ddr\theta))\\
       &=\widetilde\alpha_{k} \int_\Theta k(x\mid\theta)\left(\widetilde G_{k-1}(\ddr\theta\mid Y_k)-\widetilde G_{k-1}(\ddr\theta)\right),
   \end{align*}
   where $\widetilde G_{k-1}(\ddr\theta\mid Y_k):=\widetilde k(Y_{k}\mid\theta)\widetilde G_{k-1}(\ddr\theta)/\int_\Theta \widetilde k(Y_{k}\mid\theta')\widetilde G_{k-1}(\ddr\theta')$. 
   Thus,
   $$
   X^*_n\leq b_n^{1/2} \sup_{k\geq n}
   \widetilde\alpha_{k} \int_\Theta k(x\mid\theta)\left|\widetilde G_{k-1}(\ddr\theta\mid Y_k)-\widetilde G_{k-1}(\ddr\theta)\right|\leq \sup_{\theta\in\Theta}k(x\mid\theta)b_n^{1/2} \sup_{k\geq n}
   \widetilde\alpha_{k}
     $$
is bounded for $\lambda$-almost every $x$, and converges to zero $P$-a.s., since $b_n\sup_{k\geq n}\widetilde \alpha_k^2\rightarrow 0$.
   To prove (b), we apply \citet[Lemma 4.1 (b)]{crimaldi2016} (see Theorem \ref{th:app:almost sure2} (b)), with 
   $$
   Z_n=\widetilde\alpha_n^{-2}(  f_{\widetilde G_n}^{(X)}(x)-  f_{\widetilde G_{n-1}}^{(X)}(x))^2,
   $$
   and $a_k=(b_k\widetilde\alpha_k)^{-2}$.
Defining, for each $n\geq 1$, $f_{\widetilde G_n}^{(X)}(x\mid y)$ as $f^{(X)}_{\widetilde{G}}(\cdot\mid y)$ in \eqref{eq:xgiveny} with $\widetilde{G}_{n}$ in place of $\widetilde{G}$, we can write that
   \begin{align*}
   E_{n-1}(Z_n)
       &= E_{n-1}\left(
\widetilde\alpha_n^{-2}(  f_{\widetilde G_n}^{(X)}(x)-  f_{\widetilde G_{n-1}}^{(X)}(x))^2
   \right)\\
   &=
   \int\left[\int_\Theta k(x\mid\theta)\left(\widetilde G_{n-1}(\ddr\theta\mid y)-\widetilde G_{n-1}(\ddr\theta)\right)\right]^2f_{\widetilde G_{n-1}}^{(Y)}(y)\lambda(dy)\\
    &=
   \int\left[f_{\widetilde G_{n-1}}^{(X)}(x\mid y)-f_{\widetilde G_{n-1}}^{(X)}(x)\right]^2f_{\widetilde G_{n-1}}^{(Y)}(y)\lambda(dy)=v_n(x).
   \end{align*}
Thus, by Lemma \ref{lem:vn},
$$
   E_{n-1}\left(Z_n
   \right)
  \rightarrow v(x),\quad P-a.s.
  $$ Moreover, $b_n\uparrow+\infty$ by assumption, and
   $$
   \sum_{k=1}^\infty\frac{E(Z_k^2)}{a_k^2b_k^2}=\sum_{k=1}^\infty 
    \widetilde\alpha_k^4b_k^2
  E\left(  
    \left[f_{\widetilde G_{k-1}}^{(X)}(x\mid Y_k)-f_{\widetilde G_{k-1}}^{(X)}(x)\right]^4\right)<+\infty.
   $$
  Since, by assumption, $b_n\sum_{k\geq n}a_k^{-1}b_k^{-2}\rightarrow 1$, then by  Theorem \ref{th:app:almost sure2} (b),
  $$
  b_n\sum_{k\geq n}(f_{\widetilde G_k}^{(X)}(x)-f_{\widetilde G_{k-1}}^{(X)}(x))^2=  b_n\sum_{k\geq n}\frac{Z_k}{a_kb_k^2}\rightarrow  v(x), \quad P\mbox{-a.s.}
        $$
        It follows by Theorem \ref{th:app:almost sure conditional} that
        $
       M_{n,\infty}= b_n^{1/2}(f_{\widetilde G_n}^{(X)}(x)-f_{\widetilde G}^{(X)}(x))
        $ converges to $\mathcal N(0,v(x))$ in the sense of almost-sure conditional convergence. \\
 Now suppose that $\widetilde \alpha_n=(\alpha+n)^{-\gamma}$. With  $b_n=(2\gamma-1)n^{2\gamma-1}$, we have $b_n\uparrow+\infty$, 
  $$
  \sum_{n=1}^\infty \widetilde\alpha_n^4b_n^2=
  \sum_{n=1}^\infty (\alpha+n)^{-4\gamma}(2\gamma-1)^2n^{4\gamma-2}
  <+\infty,
  $$
  $$
  b_n\sup_{k\geq n}\widetilde\alpha_k^2=(2\gamma-1)n^{2\gamma-1}(\alpha+n)^{-2\gamma}\rightarrow 0,
  $$
  and
  $$  b_n\sum_{k\geq n}{\widetilde\alpha_k^2}=(2\gamma-1)n^{2\gamma-1}\sum_{k\geq n}(\alpha+k)^{-2\gamma}\rightarrow 1.
  $$
  \end{proof}

\subsubsection{Proof of Theorem \ref{th:uniformclt}}\label{app:proofth3}
Since a stochastic process taking values in  $C(\mathbb R)$, with the topology of uniform convergence on compact sets, converges in distribution if and only if its restrictions to compact sets converge (see e.g. \cite{kallenberg2002}, Proposition 16.6), then, in this section, we consider, for each $n\geq 1$,  the restriction of $b_n^{1/2}(f_{\widetilde G_n}^{(X)}-f_{\widetilde G}^{(X)})$ to a fixed compact interval $I=[a,b]$. With an abuse of notation, we use the same symbol $b_n^{1/2}(f_{\widetilde G_n}^{(X)}-f_{\widetilde G}^{(X)})$ irrespective of its domain.   Let $\nu_n$ denote the (random) conditional distribution 
of $b_n^{1/2}(f_{\widetilde G_n}^{(X)}-f_{\widetilde G}^{(X)})$, as an element in $C(I)$,
given $\mathcal G_n$.  To prove the theorem, we need to show that 
$\nu_n$
is tight 
and that the conditional finite dimensional distributions of $\nu_n$ converge almost surely to those of a centered Gaussian process $\mathbb G_{\widetilde G}$ with covariance function \eqref{eq:covariance} (see Theorem 7.1 in \cite{billingsley1999}).

\begin{lem}
    \label{lem:tight}
    Let $\nu_n$ denote the (random) conditional distribution of $b_n^{1/2}(f_{\widetilde G_n}^{(X)}-f_{\widetilde G}^{(X)})$, with $(b_n)_{n\geq1}$ a sequence of strictly positive numbers such that $b_n\uparrow+\infty$, $\sum_{n\geq1} \widetilde\alpha_n^4b_n^{2}<+\infty$, $b_n\sup_{k\geq n}\widetilde\alpha_k^2\rightarrow 0$ and $ b_n\sum_{k\geq n}\widetilde \alpha_k^2\rightarrow 1$. Then $(\nu_n)$ is $P$-a.s. tight.
\end{lem}
\begin{proof}
By \citet[Corollary 16.9]{kallenberg2002}, it is sufficient to show that there exists a set $N$ with $P(N)=0$ such that for every $\omega\not\in N$
the following conditions hold:
\begin{itemize}
\item[(i)]
For each positive $\eta$, there exist $k=k(\omega)>0$ and an $n_0=n_0(\omega)$ such that
\begin{equation}
    \label{eq:bill1}
\sup_{n\geq n_0}P(b_n^{1/2}(f_{\widetilde G_n}^{(X)}(a)-f_{\widetilde G}^{(X)}(a))|\geq k\mid\mathcal G_n)(\omega)\leq \eta.
\end{equation}
\item[(ii)] The exists a constant $C$ such that, for every $s,t\in I$ and for $n$ large enough
\begin{equation}
    \label{eq:bill2}
E\left(b_n(f_{\widetilde G_n}^{(X)}(s)-f_{\widetilde G}^{(X)}(s)-f_{\widetilde G_n}^{(X)}(t)+f_{\widetilde G}^{(X)}(t))^2\mid\mathcal G_n\right)(\omega)\leq C(s-t)^2.
\end{equation}
\end{itemize}
The condition (i) holds, since $b_n^{1/2}(f_{\widetilde g_n}^{(X)}(a)-f_{\widetilde g}^{(X)}(a))$ converges in the sense of almost-sure conditional convergence. To prove (ii) we can write by martingales properties that
\begin{align*}
    &E\left(b_n(f_{\widetilde G_n}^{(X)}(s)-f_{\widetilde G}^{(X)}(s)-f_{\widetilde G_n}^{(X)}(t)+f_{\widetilde G}^{(X)}(t))^2\mid\mathcal G_n\right)\\
    &=b_nE\left(\left(\sum_{k\geq n}f_{\widetilde G_{k+1}}^{(X)}(s)-f_{\widetilde G_k}^{(X)}(s)-f_{\widetilde G_{k+1}}^{(X)}(t)+f_{\widetilde G_k}^{(X)}(t)\right)^2\mid\mathcal G_n\right)\\
    &=b_nE\left(\sum_{k\geq n}\left(f_{\widetilde G_{k+1}}^{(X)}(s)-f_{\widetilde G_k}^{(X)}(s)-f_{\widetilde G_{k+1}}^{(X)}(t)+f_{\widetilde G_k}^{(X)}(t)\right)^2\mid\mathcal G_n\right)\\
    &=b_n\sum_{k\geq n}\alpha_k^2 E\left(E\left( \left[
    f_{\widetilde G_k}^{(X)}(t)-f_{\widetilde G_k}^{(X)}(s)-f_{\widetilde G_k}^{(X)}(t\mid Y_{k+1})-f_{\widetilde G_k}^{(X)}(s\mid Y_{k+1})\right]^2\mid\mathcal G_k\right)\mid\mathcal G_n\right)\\
    &=b_n\sum_{k\geq n}\alpha_k^2 E\left(\int \left[
    f_{\widetilde G_k}^{(X)}((t)-f_{\widetilde G_k}^{(X)}(s)-f_{\widetilde G_k}^{(X)}(t\mid y)-f_{\widetilde G_k}^{(X)}(s\mid y)\right]^2f_{\widetilde G_k}^{(Y)}(y)\lambda(\ddr y)\mid\mathcal G_n\right)\\
    &\leq 2b_n\sum_{k\geq n}\alpha_k^2 E\left(\int \left[
    f_{\widetilde G_k}^{(X)}((t)-f_{\widetilde G_k}^{(X)}(s)\right]^2f_{\widetilde G_k}^{(Y)}(y)\lambda(\ddr y)\mid\mathcal G_n\right)\\
    &+2b_n\sum_{k\geq n}\alpha_k^2 E\left(\int \left[
    f_{\widetilde G_k}^{(X)}(t\mid y)-f_{\widetilde G_k}^{(X)}(s\mid y)\right]^2f_{\widetilde G_k}^{(Y)}(y)\lambda(\ddr y)\mid\mathcal G_n\right)\\
    &\leq2 b_n\sum_{k\geq n}\alpha_k^2 E\left( \left[
    \int_\Theta (k(t\mid\theta)-k(s\mid\theta))\widetilde G_k(\ddr\theta)\right]^2\mid\mathcal G_n\right)\\
    &+2b_n\sum_{k\geq n}\alpha_k^2 E\left(\int \left[
    \int_\Theta (k(t\mid\theta)-k(s\mid\theta))\widetilde G_k(\ddr\theta\mid y)\right]^2f_{\widetilde G_k}(y)\lambda(\ddr y)\mid\mathcal G_n\right)\\   
    &\leq 2b_n\sum_{k\geq n}\alpha_k^2 E\left( 
    \int_\Theta (k(t\mid\theta)-k(s\mid\theta))^2\widetilde G_k(\ddr\theta)\mid\mathcal G_n\right)\\
    &+2b_n\sum_{k\geq n}\alpha_k^2 E\left(\int 
    \int_\Theta (k(t\mid\theta)-k(s\mid\theta))^2\widetilde G_k(\ddr\theta\mid y)f_{\widetilde G_k}(y)\lambda(\ddr y)\mid\mathcal G_n\right)\\    
    &\leq 4\sup_{x,\theta} |k'(x\mid\theta)|^2 (s-t)^2,
\end{align*}
where $k'(x\mid\theta)$ denotes the derivative of $k(x\mid\theta)$ with respect to $x$. Thus, properties (i) and  (ii) hold and the sequence $(\nu_n)$ is tight, on a set of probability one. 
\end{proof}

\begin{lem}\label{lem:findim}
    Let $(b_n)_{n\geq1}$ be sequence of strictly positive numbers such that $b_n\uparrow+\infty$, $\sum_{n\geq1} \widetilde\alpha_n^4b_n^{2}<+\infty$, $b_n\sup_{k\geq n}\widetilde\alpha_k^2\rightarrow 0$ and $ b_n\sum_{k\geq n}\widetilde \alpha_k^2\rightarrow 1$. Then, for every $J\geq 1$, $x_1,\dots,x_J\in
    I$ and $u_1,\dots,u_J\in \mathbb R$,  $b_n^{1/2}\sum_{i=1}^J  u_i(f_{\widetilde G_n}^{(X)}(x_i)-f_{\widetilde G}^{(X)}(x_i))$, converges
    in the sense of almost sure conditional convergence, to a Gaussian kernel, with zero mean and variance $\sum_{i,j=1}^Ju_iu_j R(x_i,x_j)$, with $R$ as in \eqref{eq:covariance}.
\end{lem}
\begin{proof} The proof closely follows the steps of the proof of Theorem \ref{th:clt}; nevertheless, we provide it here for the reader’s convenience.

       Without loss of generality, we can assume that $\sum_{j=1}^J u_i^2\leq 1$. 
       The proof is based on \citet[Theorem A.1]{crimaldi2009}  (see  Theorem \ref{th:app:almost sure conditional} in Section B). 
       Define, for each $n\geq 0$,
       $$
       \mathcal F_{n,0}=\mathcal G_n,\quad \mathcal F_{n,j}=\mathcal G_{n+j-1}\mbox{ for }j\geq 1,
       $$
       $$
       M_{n,0}=0,\quad M_{n,j}= b_n^{1/2}\sum_{i=1}^Ju_i(f_{\widetilde G_n}^{(X)}(x_i)-f_{\widetilde G_{n+j-1}}^{(X)}(x_i))\mbox{ for }j\geq 1.
       $$
       For every $n\geq 0$, the sequence $(M_{n,j})_{j\geq 0}$ is a bounded martingale with respect to the filtration $(\mathcal F_{n,j})_{j\geq 0}$, converging in $L^1$ to $$M_{n,\infty}= b_n^{1/2}\sum_{i=1}^Ju_i(f_{\widetilde G_n}^{(X)}(x_i)-f_{\widetilde G}^{(X)}(x_i)).$$
       Let
       $$
       X_{n,j}=M_{n,j}-M_{n,j-1}=b_n^{1/2}\sum_{i=1}^Ju_i(f_{\widetilde G_{n+j-1}}^{(X)}(x_i)-f_{\widetilde G_{n+j}}^{(X)}(x_i)),
       $$
       $$
       X_n^*=\sup_{j\geq 1}|X_{n,j}|= b_n^{1/2} \sup_{j\geq 1} |\sum_{i=1}^Ju_j(f_{\widetilde G_{n+j}}^{(X)}(x_i)-f_{\widetilde G_{n+j-1}}^{(X)}(x_i))|,
       $$
       and
       $$
       U_n=\sum_{j=0}^\infty X_{n,j}^2=b_n\sum_{k\geq n}(\sum_{i=1}^Ju_i(f_{\widetilde G_{k+1}}^{(X)}(x)-f_{\widetilde G_k}^{(X)}(x)))^2.
       $$
       The thesis follows from Theorem \ref{th:app:almost sure conditional}, if we can verify that the following conditions hold:
  \begin{itemize}
            \item[(a)] The sequence $(X^*_n)$ defined by $X^*_n=b_n^{1/2} \sup_{k\geq n} | \sum_{i=1}^Ju_i( f_{\widetilde G_{k+1}}^{(X)}(x_i)-  f_{\widetilde G_k}^{(X)}(x_i)|$ is bounded in $L^1$ and converges to zero $P$-a.s. as $n\rightarrow+\infty$.
      \item[(b)] $
      U_n=b_n\sum_{k\geq n}( \sum_{i=1}^Ju_i( f_{\widetilde G_k}^{(X)}(x_i)-  f_{\widetilde G_{k-1}}^{(X)}(x_i)))^2$ converges $P$-a.s., as $n\rightarrow+\infty$, to
      \begin{align*}
      &\sum_{i,j=1}^Ju_iu_j\int_{\mathbb R}\left( f_{\widetilde G}^{(X)}(x_i|y)-f_{\widetilde G}^{(X)}(x_i)\right)\left(f_{\widetilde G}^{(X)}(x_j\mid y)-f_{\widetilde G}^{(X)}(x_j)\right)f_{\widetilde G}^{(Y)}(y)\lambda(dy).
      \end{align*}
  \end{itemize}
   Notice that, for every $i=1,\dots,J$,
   \begin{align*}
        f_{\widetilde G_k}^{(X)}(x_i)-  f_{\widetilde G_{k-1}}^{(X)}(x_i)&
        =\int k(x_i\mid\theta) (\widetilde G_{k}(\ddr\theta)-\widetilde G_{k-1}(\ddr\theta))\\
       &=\widetilde\alpha_{k} \int k(x_i\mid\theta)\left(\widetilde G_{k-1}(\ddr\theta\mid Y_k)-\widetilde G_{k-1}(\ddr\theta)\right),
   \end{align*}
   where $\widetilde G_{k-1}(\ddr\theta\mid Y_k)=\widetilde k(Y_{k}\mid\theta)\widetilde G_{k-1}(\ddr\theta)/\int_\Theta \widetilde k(Y_{k}\mid\theta')\widetilde G_{k-1}(\ddr\theta')$ is a probability measure on $\Theta$. 
   Thus,
   $$
   X^*_n\leq b_n^{1/2} \sup_{k\geq n}
   \widetilde\alpha_{k}\sum_{i=1}^J|u_i| \int k(x_i\mid\theta)\left|\widetilde G_{k-1}(\ddr\theta\mid Y_k)-\widetilde G_{k-1}(\ddr\theta)\right|\leq \sum_{i=1}^J|u_i| \sup_{\theta\in\Theta} k(x_i\mid\theta) b_n^{1/2} \sup_{k\geq n}
   \widetilde\alpha_{k}
     $$
is bounded and converges to zero $P$-a.s., since $b_n\sup_{k\geq n}\widetilde \alpha_k^2\rightarrow 0$ by assumption.
   To prove (b), we apply \citet[Lemma 4.1 (b)]{crimaldi2016} (see Theorem \ref{th:app:almost sure2} (b)), with 
   $$
   Z_n=\widetilde\alpha_n^{-2}\left(\sum_{i=1}^Ju_i(  f_{\widetilde G_n}^{(X)}(x_i)-  f_{\widetilde G_{n-1}}^{(X)}(x_i))\right)^2.
   $$
We can write that
   \begin{align*}
   E_{n-1}(Z_n)
       &= E_{n-1}\left(
\widetilde\alpha_n^{-2}\left(\sum_{i=1}^Ju_i(  f_{\widetilde G_n}^{(X)}(x_i)-  f_{\widetilde G_{n-1}}^{(X)}(x_i))\right)^2
   \right)\\
   &=
   \int\left[\sum_{i=1}^Ju_i\int_\Theta k(x_i\mid\theta)\left(\widetilde G_{n-1}(\ddr\theta\mid y)-\widetilde G_{n-1}(\ddr\theta)\right)\right]^2f_{\widetilde G_{n-1}}^{(Y)}(y)\lambda(dy)\\
      &\stackrel{a.s.}{\rightarrow}\sum_{i,j=1}^Ju_iu_{j}\int (f_{\widetilde G}^{(X)}(x_i\mid y)-f_{\widetilde G}^{(X)}(x_i))(f_{\widetilde G}^{(X)}(x_{j}\mid y)-f_{\widetilde G}^{(X)}(x_{j}))f_{\widetilde G}^{(Y)}(y)\ddr y.
   \end{align*}
By assumption $b_n\uparrow+\infty$, and, defining $a_n=\widetilde\alpha_n^{-2}b_n^{-2}$, we obtain
\begin{align*}
   \sum_{k=1}^\infty\frac{E(Z_k^2)}{a_k^2b_k^2}&=\sum_{k=1}^\infty \widetilde\alpha_k^4b_k^2
  E\left(  
    \left[\sum_{i=1}^ju_i(f_{\widetilde G_{k-1}}^{(X)}(x_i\mid Y_k)-f_{\widetilde G_{k-1}}^{(X)}(x_i))\right]^4\right)\\
    &\leq
  \sum_{k=1}^\infty \widetilde\alpha_k^4b_k^2
     \left[\sum_{i=1}^ju_i\sup_{\theta\in\Theta} k(x_i\mid\theta)\right]^4  <+\infty.
\end{align*}
  Since, by assumption, $b_n\sum_{k\geq n}a_k^{-1}b_k^{-2}\rightarrow 1$, then by  Theorem \ref{th:app:almost sure2} (b),
  \begin{align*}
       b_n\sum_{k\geq n}\frac{Z_k}{a_kb_k^2} &=b_n\sum_{k\geq n}(\sum_{i=1}^Ju_i(f_{\widetilde G_k}^{(X)}(x_i)-f_{\widetilde G_{k-1}}^{(X)}(x_i)))^2\\
              &\stackrel{a.s.}{\rightarrow}
        \sum_{i,j=1}^Ju_iu_{j}\int (f_{\widetilde G}^{(X)}(x_i\mid y)-f_{\widetilde G}^{(X)}(x_i))(f_{\widetilde G}^{(X)}(x_{j}\mid y)-f_{\widetilde G}^{(X)}(x_{j}))f_{\widetilde G}^{(Y)}(y)\ddr y.
         \end{align*}
        It follows by Theorem \ref{th:app:almost sure conditional} that
        $
       M_{n,\infty}= b_n^{1/2}\sum_{i=1}^Ju_i(f_{\widetilde G_n}^{(X)}(x)-f_{\widetilde G}^{(X)}(x))
        $ converges in the sense of almost-sure conditional convergence to a  Gaussian kernel with variance $\sum_{i,j=1}^Ju_iu_{j}\int (f_{\widetilde G}^{(X)}(x_i\mid y)-f_{\widetilde G}^{(X)}(x_i))(f_{\widetilde G}^{(X)}(x_{j}\mid y)-f_{\widetilde G}^{(X)}(x_{j}))f_{\widetilde G}^{(Y)}(y)\ddr y$. 
  \end{proof}

      \begin{proof}[Proof of Theorem \ref{th:uniformclt}]
The proof of  Theorem \ref{th:uniformclt} is a direct consequence of Lemma \ref{lem:tight}, Lemma \ref{lem:findim} and Theorem 7.1 in  \cite
{billingsley1999}. 
\end{proof}

\subsubsection{Proof of Theorem \ref{th:sup}}\label{app:proofth4}
To prove the theorem, we will first provide a bound for $E(\sup_{x\in I} \mathbb G_{\widetilde G}(x))$ and then use concentration inequalities to bound $\mathbb G_{\widetilde G}(x)$ uniformly with respect to $x\in I$

\begin{lem}
\label{lem:expsup}
Let $\mathbb G_{\widetilde G}$ be a centered Gaussian process with random covariance function \eqref{eq:covariance}.
Then
    \begin{align*}
    E(\sup_{x\in I}\mathbb G_{\widetilde G}(x)\mid\widetilde G)&\leq 12\int_0^{\sigma(I)}\left(\log\left(1+\frac{\lambda(I)}{2\psi^{-1}(z/2)}\right)\right)^{1/2}dz,
\end{align*}
where $\sigma(I)$ and $\psi$ 
are defined as in  \eqref{eq:sigmaI} and \eqref{eq:psi}, respectively.
\end{lem}
\begin{proof}
    The proof is based on \citet[Theorem 3.18]{massart2007} (see Theorem \ref{th:supexp}). 
    Consider the pseudometric on $I$ induced by $R$
\begin{align*}
    d(x_1,x_2)&=(R(x_1,x_1)+R(x_2,x_2)-2R(x_1,x_2))^{1/2}\\
        &=\left(\int (f_{\widetilde G}^{(X)}(x_1\mid y)-f_{\widetilde G}^{(X)}(x_1)-f_{\widetilde G}^{(X)}(x_2\mid y)+f_{\widetilde G}^{(X)}(x_2))^2f_{\widetilde G}^{(Y)}(y)\lambda(dy)\right)^{1/2}\\
&=\left(\int (f_{\widetilde G}^{(X)}(x_1\mid y)-f_{\widetilde G}^{(X)}(x_2\mid y))^2f_{\widetilde G}^{(Y)}(y)\lambda(dy)-(f_{\widetilde G}^{(X)}(x_1)-f_{\widetilde G}^{(X)}(x_2))^2\right)^{1/2}.
\end{align*}
Let $\psi$ be defined as in \eqref{eq:psi} and let $\psi^{-1}(t)=\inf\{z:\psi(z)>t\}$. If $|x_1-x_2|\leq \psi^{-1}(\epsilon/2)$, then $d(x_1,x_2)\leq \epsilon/2$. Hence a uniform grid on I of step $2\psi^{-1}(\epsilon/2)$ allows to build a $\epsilon/2$-covering of $I$. It follows that
$$
N(\epsilon,I)\leq N'(\epsilon/2,I)\leq  \Bigl\lceil\frac{\lambda(I)}{2\psi^{-1}(\epsilon/2)}\Bigr\rceil \leq \frac{\lambda(I)}{2\psi^{-1}(\epsilon/2)}+1.
$$
Notice that 
\begin{align*}
    d(x_1,x_2)&
   \leq \left(\int (f_{\widetilde G}^{(X)}(x_1\mid y)-f_{\widetilde G}^{(X)}(x_2\mid y))^2f_{\widetilde G}^{(Y)}(y)\lambda(dy)\right)^{1/2}\\
    &\leq \left(\int \int_\Theta (k(x_1\mid \theta)-k(x_2\mid \theta))^2\widetilde G(\ddr\theta\mid y)f_{\widetilde G}^{(Y)}(y)\lambda(dy)\right)^{1/2}\\
    &\leq k'(I)|x_1-x_2|,
\end{align*}
with $k'(I)$ defined as 
\begin{equation*}
k'(I)=\sup_{x\in I,\theta\in\Theta}|k'(x\mid\theta)|,
\end{equation*}
where $k'(x\mid\theta)$ is the derivative of $k(x\mid\theta)$ with respect to $x$. 
Hence
$
\psi(z)\leq k'(I)z
$, and
$$\psi^{-1}(\epsilon/2)\geq 
\frac{\epsilon}{2k'(I)}.
$$
It follows that
$$
\sqrt{H(\epsilon,I)}=\sqrt{\log N(\epsilon,I)}\leq \sqrt{\log\left(1+\frac{\lambda(I)}{2\psi^{-1}(\epsilon/2)}\right)}
\leq 
\sqrt{
\frac{\lambda(I)k'(I)}{\epsilon}
},
$$
which implies that $\sqrt{H(\cdot,I)}$ is integrable in $0$. Moreover,
\begin{align*}
    E(\sup_{x\in I}\mathbb G_{\widetilde G}(x)\mid \widetilde G)&\leq 12\int_0^{\sigma(I)}\sqrt{H(z,I)} dz\\
    &\leq 12\int_0^{\sigma(I)}\sqrt{\log(N(z,I)} dz\\
    &\leq 12\int_0^{\sigma(I)}\left(\log\left(1+\frac{\lambda(I)}{2\psi^{-1}(z/2)}\right)\right)^{1/2}dz.
    \end{align*}
\end{proof}

\begin{proof}[Proof of Theorem \ref{th:sup}] The proof is based on Lemma \ref{lem:expsup}   and on the following concentration inequality (see Theorem \ref{th:supX}): for every $z>0$,
    \begin{align*}
        &P(\sup_{x\in I}\mathbb G_{\widetilde G}(x)\geq E(\sup_{x\in I} \mathbb G_{\widetilde G}(x)\mid \widetilde g)+\sigma(I)\sqrt{2z}\mid \widetilde G)
        \leq e^{-z},
    \end{align*}
where $\sigma(I)$ is defined as in \eqref{eq:sigmaI}.
Taking into account the symmetry of the distribution of $\mathbb G_{\widetilde G}$, we can write that
\begin{align*}
    &P((\sup_{x\in I}\mathbb G_{\widetilde G}(x)\geq E(\sup_{x\in I}\mathbb G_{\widetilde G}(x)\mid \widetilde G)+\sigma(I)\sqrt{2z})\\
    &\qquad\qquad\cup
(\inf_{x\in I}\mathbb G_{\widetilde G}(x)\leq -E(\sup_{x\in I}\mathbb G_{\widetilde G}(x)|\widetilde G)-\sigma(I)\sqrt{2z})|\widetilde G)\\
&\quad \quad \leq  2e^{-z}.
\end{align*}
Setting $2e^{-z}=\beta$, we obtain that
\begin{align*}
   & P(\sup_{x\in I}|\mathbb G_{\widetilde G}(x)|<12\int_0^{\sigma(I)}\left(\log\left(1+\frac{\lambda(I)}{2\psi^{-1}(z/2)}\right)\right)^{1/2}dz+\sigma(I)\sqrt{2|\log(\beta/2)|}\;|\widetilde G)\\
&\geq P(\sup_{x\in I}|\mathbb G_{\widetilde G}(x)|< E(\sup_{x\in I}\mathbb G_{\widetilde G}(x)|\widetilde G)+\sigma(I)\sqrt{2|\log(\beta/2)|}\;|\widetilde G)\\
&\geq 1-\beta.
\end{align*}
\end{proof}

\subsubsection{Proof of Theorem \ref{th:convv}}\label{app:proofth5}
The proof is split into several lemmas.

\begin{lem}
\label{lem:convsigma}
As $n\rightarrow+\infty$,  $\sigma_n(I)$ converges to $\sigma(I)$, $P$-a.s. 
\end{lem}
\begin{proof} The functions
$$
\int (f_{\widetilde G_n}^{(X)}(x|y)-f_{\widetilde G_n}^{(X)}(x))^2f_{\widetilde G_n}^{(Y)}(y)\lambda(dy)
=
\int f_{\widetilde G_n}^{(X)}(x|y)^2f_{\widetilde G_n}^{(Y)}(y)\lambda(\ddr y)-f_{\widetilde G_n}^{(X)}(x)^2
$$
are continuously differentiable in $I$, with
$$
\left|\frac{\ddr}{\ddr x}f_{\widetilde G_n}^{(X)}(x)^2\right|=2f_{\widetilde G_n}^{(X)}(x)\left|\int_\Theta k'(x\mid\theta)\widetilde G_n(\ddr\theta)\right|\leq 2\sup_{x,\theta}k(x\mid\theta)\sup_{x,\theta}|k'(x\mid\theta)|,
$$
and
\begin{align*}
    \left|\frac{\ddr }{\ddr x}\int f_{\widetilde G_n}^{(X)}(x|y)^2f_{\widetilde G_n}^{(Y)}(y)\lambda(\ddr y)\right|
&\leq 2 \left|\int \int_\Theta k'(x\mid\theta)\widetilde G_n(\ddr\theta\mid y) f_{\widetilde G}^{(X)}(x|y)f_{\widetilde G}^{(Y)}(y)\lambda(\ddr y)\right|\\
&\leq 2\sup_{x,\theta}k(x\mid\theta)\sup_{x,\theta}|k'(x\mid\theta)|.
\end{align*}
Thus, they have uniformly bounded derivative, and therefore are equicontinuous. On the other hand they converge pointwise to 
$$
\int (f_{\widetilde G}^{(X)}(x|y)-f_{\widetilde G}^{(X)}(x))^2f_{\widetilde G}^{(Y)}(y)\lambda(dy)
$$
$P$-a.s. 
By Ascoli-Arzel\`a theorem the convergence is uniform in $I$, and therefore
\begin{align*}
    \sup_{x\in I}&\int (f_{\widetilde G_n}^{(X)}(x|y)-f_{\widetilde G_n}^{(X)}(x))^2f_{\widetilde G_n}^{(Y)}(y)\lambda(dy)\\
    &\rightarrow\sup_{x\in I}\int (f_{\widetilde G}^{(X)}(x|y)-f_{\widetilde G}^{(X)}(x))^2f_{\widetilde G}^{(Y)}(y)\lambda(dy)
\end{align*}
    $P$-a.s. as $n\rightarrow+\infty$.
    \end{proof}

\begin{lem}\label{lem:convpsi}
    For every $z\in [0,\lambda(I)]$,  $\psi_n(z)$ converges to $\psi(z)$, $P$-a.s. as $n\rightarrow+\infty$,
  \end{lem}
  \begin{proof}
  The functions
  $$
  (x_1,x_2)\mapsto \int (f_{\widetilde{G_n}}^{(X)}(x_1\mid y)-f_{\widetilde{G_n}}^{(X)}(x_2\mid y))^2f_{\widetilde{G_n}}^{(Y)}(y)\lambda(\ddr y)-(f_{\widetilde{G_n}}^{(X)}(x_1)-f_{\widetilde{G_n}}^{(X)}(x_2))^2
  $$
  are continuously differentiable in $I\times I$, with
  \begin{align*}
      \left|\frac{\partial }{\partial x_1}(f_{\widetilde{G_n}}^{(X)}(x_1)-f_{\widetilde{G_n}}^{(X)}(x_2))^2
      \right|&\leq 2(  f_{\widetilde{G_n}}^{(X)}(x_1)+f_{\widetilde{G_n}}^{(X)}(x_2) )\int_\Theta| k'(x_1\mid\theta)|\widetilde G_n(\ddr\theta)\\&\leq 4\sup_{x,\theta}k(x\mid\theta)\sup_{x,\theta}|k'(x\mid\theta)|,
  \end{align*}
  and
\begin{align*}
      &\left|\frac{\partial }{\partial x_1}\int (f_{\widetilde{G_n}}^{(X)}(x_1\mid y)-f_{\widetilde{G_n}}^{(X)}(x_2\mid y))^2f_{\widetilde{G_n}}^{(Y)}(y)\lambda(\ddr y)
      \right|\\
      &\qquad\qquad\leq 2 \int (f_{\widetilde{G_n}}^{(X)}(x_1\mid y)+f_{\widetilde{G_n}}^{(X)}(x_2\mid y))\int |k'(x_1\mid\theta)|\widetilde G_n(\ddr\theta\mid y)f_{\widetilde{G_n}}^{(Y)}(y)\lambda(\ddr y)
      \\
      &\qquad\qquad\leq 4\sup_{x,\theta}k(x\mid\theta)\sup_{x,\theta}|k'(x\mid\theta)|,
  \end{align*}
  and analogously for the derivatives with respect to $x_2$.
  Therefore, they are
  equicontinuous in $I\times I$, and converge pointwise to 
  $$
 \int (f_{\widetilde{G}}^{(X)}(x_1\mid y)-f_{\widetilde{G}}^{(X)}(x_2\mid y))^2f_{\widetilde{G}}^{(Y)}(y)\lambda(\ddr y)-(f_{\widetilde{G}}^{(X)}(x_1)-f_{\widetilde{G}}^{(X)}(x_2))^2,
  $$
  $P$-a.s. By Ascoli-Arzel\`a Theorem, the convergence is uniform, and therefore 
  \begin{align*}
      &\sup_{\small{\begin{array}{c}x_1,x_2\in I,\\|x_1-x_2|<z\end{array}}}\!\!\!(\int (f_{\widetilde{G_n}}^{(X)}(x_1\mid y)-f_{\widetilde{G_n}}^{(X)}(x_2\mid y))^2f_{\widetilde{G_n}}^{(Y)}(y)\lambda(\ddr y)-(f_{\widetilde{G_n}}^{(X)}(x_1)-f_{\widetilde{G_n}}^{(X)}(x_2))^2)^{1/2}\\
       &=(\sup_{\small{\begin{array}{c}x_1,x_2\in I,\\|x_1-x_2|<z\end{array}}}\!\!\!\int (f_{\widetilde{G_n}}^{(X)}(x_1\mid y)-f_{\widetilde{G_n}}^{(X)}(x_2\mid y))^2f_{\widetilde{G_n}}^{(Y)}(y)\lambda(\ddr y)-(f_{\widetilde{G_n}}^{(X)}(x_1)-f_{\widetilde{G_n}}^{(X)}(x_2))^2)^{1/2}\\
      &\rightarrow(\sup_{\small{\begin{array}{c}x_1,x_2\in I,\\|x_1-x_2|<z\end{array}}}\!\!\!\int (f_{\widetilde{G}}^{(X)}(x_1\mid y)-f_{\widetilde{G}}^{(X)}(x_2\mid y))^2f_{\widetilde{G}}^{(Y)}(y)\lambda(\ddr y)-(f_{\widetilde{G}}^{(X)}(x_1)-f_{\widetilde{G}}^{(X)}(x_2))^2)^{1/2}\\
      &=\sup_{\small{\begin{array}{c}x_1,x_2\in I,\\|x_1-x_2|<z\end{array}}}\!\!\!(\int (f_{\widetilde{G}}^{(X)}(x_1\mid y)-f_{\widetilde{G}}^{(X)}(x_2\mid y))^2f_{\widetilde{G}}^{(Y)}(y)\lambda(\ddr y)-(f_{\widetilde{G}}^{(X)}(x_1)-f_{\widetilde{G}}^{(X)}(x_2))^2)^{1/2}. 
  \end{align*}
 \end{proof}

\begin{lem}\label{lem:convintegral}
    As $n\rightarrow+\infty$, 
    $$
    \int_0^{\sigma_n(I)}\left(\log\left(1+\frac{\lambda(I)}{2\psi_n^{-1}(z/2)}\right)\right)^{1/2}dz\rightarrow  \int_0^{\sigma(I)}\left(\log\left(1+\frac{\lambda(I)}{2\psi^{-1}(z/2)}\right)\right)^{1/2}dz
        $$
        $P$-a.s.
\end{lem}
\begin{proof}
    For $P$-a.s. every $\omega\in\Omega$, the functions $\psi_n(\cdot)(\omega)$ and $\psi(\cdot)(\omega)$ are  continuous and monotone non decreasing on the closed interval $[0,\lambda(I)]$. Hence, by Lemma \ref{lem:convpsi}, we can find a set $N$ with $P(N)=0$ such that for every $\omega\in N^c$, $\psi_n(\cdot)(\omega)$ converges to $\psi(\cdot)(\omega)$ in $C([0,\lambda(I)])$. In the rest of the proof we restrict to $N^c$. It can be proved as in Lemma \ref{lem:expsup} that
    $$
    \left(\log\left(1+\frac{\lambda(I)}{2\psi_n^{-1}(z/2)}\right)\right)^{1/2}
    \mbox{ and }
    \left(\log\left(1+\frac{\lambda(I)}{2\psi^{-1}(z/2)}\right)\right)^{1/2}
    $$
    are integrable in a right neighborhood of the origin. Moreover
    $$
    \int_0^{\sigma_n(I)}
    \left(\log\left(1+\frac{\lambda(I)}{2\psi_n^{-1}(z/2)}\right)\right)^{1/2}dz=2
    \int_0^{\psi_n^{-1}(\sigma_n(I)/2)}\left(\log\left(1+\frac{\lambda(I)}{2t}\right)\right)^{1/2}d\psi_n(t)
    $$
    and 
    $$
    \int_0^{\sigma(I)}
    \left(\log\left(1+\frac{\lambda(I)}{2\psi^{-1}(z/2)}\right)\right)^{1/2}dz=2
    \int_0^{\psi^{-1}(\sigma(I)/2)}\left(\log\left(1+\frac{\lambda(I)}{2t}\right)\right)^{1/2}d\psi(t),
    $$
    where $d\psi_n$ and $d\psi$ denote the measure associated to the monotone non-decrasing and continuous function $\psi_n$ and $\psi$, respectively.
    
    Let $u_1,u_2\in [0,\lambda(I)]$ be such $u_1\leq u_2$ and $\psi(u)=\sigma(I)/2$ for every $u\in [u_1,u_2]$. If $\underline u<u_1$ and $\overline u>u_2$, then $\psi(\underline u)<\sigma(I)/2$ and $\psi(\overline u)>\sigma(I)/2$, which implies that $\psi_n(\underline u)<\sigma(I)/2$ and $\psi_n(\overline u)>\sigma(I)/2$ for $n$ large enough. Since $\psi_n$ converges weakly to  $\psi$, then, for $n$ large enough,
    $$
    \int_0^{\underline u}\left(\log\left(1+\frac{\lambda(I)}{2t}\right)\right)^{1/2}d\psi(t)\leq  \int_0^{\psi_n^{-1}(\sigma_n(I)/2)}\left(\log\left(1+\frac{\lambda(I)}{2t}\right)\right)^{1/2}d\psi_n(t)
    $$
    and
     $$
    \int_0^{\overline u}\left(\log\left(1+\frac{\lambda(I)}{2t}\right)\right)^{1/2}d\psi(t)\geq  \int_0^{\psi_n^{-1}(\sigma_n(I)/2)}\left(\log\left(1+\frac{\lambda(I)}{2t}\right)\right)^{1/2}d\psi_n(t).
    $$
    Hence, for every $\underline u<u_1\leq u_2<\overline u$,
    \begin{align*}
        &\int_0^{\underline u}\left(\log\left(1+\frac{\lambda(I)}{2t}\right)\right)^{1/2}d\psi(t)\leq  \liminf_n\int_0^{\psi_n^{-1}(\sigma_n(I)/2)}\left(\log\left(1+\frac{\lambda(I)}{2t}\right)\right)^{1/2}d\psi_n(t)\\
        &\leq \limsup_n \int_0^{\psi_n^{-1}(\sigma_n(I)/2)}\left(\log\left(1+\frac{\lambda(I)}{2t}\right)\right)^{1/2}d\psi(t)
\leq \int_0^{\overline u}\left(\log\left(1+\frac{\lambda(I)}{2t}\right)\right)^{1/2}d\psi(t).
    \end{align*}
    Since the integrals are continuous with respect to the endpoints,
\begin{align*}
        &\int_0^{u_1}
        \left(\log\left(1+\frac{\lambda(I)}{2t}\right)\right)^{1/2}d\psi(t)\leq  \liminf_n\int_0^{\psi_n^{-1}(\sigma_n(I)/2)}\left(\log\left(1+\frac{\lambda(I)}{2t}\right)\right)^{1/2}d\psi_n(t)\\
        &\leq \limsup_n \int_0^{\psi_n^{-1}(\sigma_n(I)/2)}\left(\log\left(1+\frac{\lambda(I)}{2t}\right)\right)^{1/2}d\psi(t)
\leq \int_0^{u_2}\left(\log\left(1+\frac{\lambda(I)}{2t}\right)\right)^{1/2}d\psi(t).
    \end{align*}
Since $\psi$ is constant over the interval $[u_1,u_2]$, then the first and last terms of the above equation are equal. Moreover, by $\psi^{-1}$ and continuity of $\psi$, $u_2=\psi^{-1}(\sigma(I)/2)$.
    \end{proof}

\begin{proof}[Proof of Theorem \ref{th:convv}]
    The proof is an immediate consequence of Lemma \ref{lem:convsigma} and Lemma \ref{lem:convintegral}.
\end{proof}

\subsubsection{Proof of Equation \ref{th:confband}}\label{app:proofeq}
Since 
$s_n(I,\beta)$ is measurable with respect to $Y_{1:n}$, $s_{\widetilde G}(I,\beta)$ is measurable with respect to $\widetilde G$, and $s_n(I,\beta)$ converges to $s_{\widetilde G}(I,\beta)$ $P$-a.s., then, for every $t_1,t_2\in\mathbb R$,
\begin{align*}
&E(\exp(i(t_1s_n(I,\beta)+t_2 \sup_{x\in I}b_n^{1/2}|f_{\widetilde G_n}^{(X)}(x)-f_{\widetilde G}^{(X)}(x)|))\mid\mathcal G_n)\\
&\quad\stackrel{P-a.s.}{\longrightarrow} E(\exp(i(t_1 s_{\widetilde G}(I,\beta)+t_2 \sup_{x\in I}|\mathbb G_{\widetilde G}(x))|))|\widetilde G).
\end{align*}
In other words, $(s_n(I,\beta),\sup_{x\in I}b_n^{1/2}|f_{\widetilde G_n}^{(X)}(x)-f_{\widetilde G}^{(X)}(x)|)$ converges to $s_{\widetilde G}(I,\beta),\sup_{x\in I}|\mathbb G_{\widetilde G}(x)|)$ in the sense of almost-sure conditional convergence. By the 
Portmanteau theorem,
\begin{align*}
&\liminf_{n\rightarrow+\infty}P(\sup_{x\in I}|b_n^{1/2}(f_{\widetilde G_n}^{(X)}(x)-f_{\widetilde G}^{(X)}(x))|<\max(s_n(I,\beta),\epsilon)\mid\mathcal G_n)
\\
        &\quad\geq P(\sup_{x\in I}|\mathbb G_{\widetilde G(\omega)}|< \max(s_{\widetilde G}(I,\beta),\epsilon)\mid \widetilde G)\\
    &\quad\geq 1-\beta.
\end{align*}

\subsection{Proofs for Section \ref{sec5}}
\subsubsection{Proof of  Proposition \ref{teo_cons}}\label{app:cons}
\begin{proof}
        Under $P^*$, the random variables $Y_n$ are i.i.d. with density function $f_{G^*}^{(Y)}(y)=\int_\Theta \widetilde k(y\mid\theta)G^*(\ddr\theta)$, where $\widetilde k(y\mid\theta)$ identifies the mixing distribution by the assumption A4).
   The sequence $\widetilde G_n$ in \eqref{eq:newton} represents the recursive estimates of the true mixing distribution $G^*$ obtained via Newton's algorithm for the mixture model $\int_\Theta \widetilde k(y \mid \theta)\, G(\ddr\theta)$.  Let $\mathbb{F}$ denote the class of mixing distributions that are absolutely continuous with respect to $\mu$, and let $\overline{\mathbb{F}}$ be its weak closure. We first prove that $\widetilde G_n$ converges weakly to $G^*$, $P$-a.s.
   
      The proof is based on
    Corollary 4.7 of \cite{MarTok(09)}. The assumptions in \cite{MarTok(09)} differ slightly from A1)--A5), F1)--F3). In particular, \cite{MarTok(09)} requires that $\mathbb F$ is precompact, which holds in our setting since $\Theta$ is compact, and also that $\sum_n \left(\sum_{i=1}^n \widetilde \alpha_i\right) \widetilde \alpha_n^2 < +\infty$, but a careful inspection of the proofs reveals that the latter condition is  actually necessary only for the rate of convergence. 
    Thus, we can apply Corollary 4.7 in \cite{MarTok(09)}, and deduce that the sequence $\widetilde G_n$ converges weakly to the unique minimizer $G^\dag$ in $\overline{\mathbb{F}}$ of the Kullback–Leibler divergence between $f_{G^*}^{(X)}$ and $f_G^{(X)}$. Since, by assumption F1), $G^* \in \overline{\mathbb{F}}$, then  $G^\dag = G^*$, and thus $G_n$ converges weakly to $G^*$.

    We now prove that $f_{\widetilde G_n}^{(X)}$ converges to $f_{G^*}^{(X)}$ pointwise and in $L^1$.
    Since $k(x\mid\theta)$ is bounded and continuous in $\theta$  by the assumption A3), then, for every $x$, $f_{\widetilde G_n}^{(X)}(x)=\int_\Theta k(x\mid\theta)\widetilde G_n(\ddr\theta)$ converges to $f_{G^*}^{(X)}(x)=\int_\Theta k(x\mid\theta)G^*(\ddr\theta)$, $P^*$-a.s. Since $f_{\widetilde{G}_n}^{(X)}$ is dominated by $\sup_{\theta}k(x\mid\theta)$ which is integrable by the assumption A3), then $f_{\widetilde G_n}^{(X)}$
 converges to $f_{G^*}^{(X)}$ in $L^1$, $P$-a.s.
\end{proof}
\subsubsection{Proof of Corollary \ref{teo_merg}}\label{app:direct}
\begin{proof}
By Proposition~\ref{teo_cons},  we have
\[
\int \left| f^{(X)}_{\widetilde G_n}(x)-f^{(X)}_{G^*}(x)\right|\,\lambda(dx)
\rightarrow 0,
\qquad P^*\text{-a.s.}
\]
Moreover, by Corollary~4.6 in \cite{MarTok(09)}, applied to the direct mixture model with kernel $k$, the recursive estimate based on the latent variables $(X_n)$ satisfies
\[
\int \left| f^{(X)}_{G_n}(x)-f^{(X)}_{G^*}(x)\right|\,\lambda(dx)
\rightarrow 0,
\qquad P^*\text{-a.s.}
\]
The result follows immediately by the triangle inequality.
\end{proof}

\subsubsection{Proof of Proposition \ref{th:rateNewton}}\label{sec:proofrateNewtron}
For a function $f$, we denote by $f^{(j)}$ its $j$-th derivative, with $f^{(0)}:=f$.
For $\kappa>0$, let
\begin{equation}\label{eq:ell}
\ell=\max\{j\in\mathbb N_0:j<\kappa\}.
\end{equation}
A kernel $K$ is said to be of order $\ell+1$ if
\[
\int_{\mathbb R}K(z)\,dz=1,
\qquad
\int_{\mathbb R}z^jK(z)\,dz=0,
\quad
j=1,\ldots,\ell+1.
\]
For $h>0$, let
\[
K_h(x):=h^{-1}K(x/h)
\]
denote the corresponding rescaled kernel.

The proof of Proposition \ref{th:rateNewton} combines the inversion inequality of Theorem~3.1 in \cite{Rou(23)} with the convergence-rate result of Corollary~4.10 in \cite{MarTok(09)}. The following two lemmas verify the assumptions required to apply the inversion inequality.

\begin{lem}\label{lemma:rate1}
Let Assumption (i) of Proposition \ref{th:rateNewton} hold. Let $K$ be a kernel of order $\ell+1$ satisfying
$\int_{\mathbb{R}} |z|^{\kappa+1}|K(z)|\,dz<+\infty$.
Then there exists a constant $C$ such that, for every probability measure $G$ on $\Theta$,
\begin{equation}
||{F_G^{(X)}}-{F_G^{(X)}}*K_h||_1
\leq C h^{\kappa+1},
\qquad 0<h\leq 1,
\label{eq:condition-3.6}
\end{equation}
where $||\cdot ||_1$ denotes the $L^1(\mathbb R)$ norm, and $C$ is a constant that does not depend on $G$. Moreover, the derivative ${f_{G^*}^{(X)}}^{(\ell)}$ satisfies
\begin{equation}
\left|
{f_{G^*}^{(X)}}^{(\ell)}(x+\delta)
-
{f_{G^*}^{(X)}}^{(\ell)}(x)
\right|
\leq
L_0(x)|\delta|^{\kappa-\ell},
\qquad x,\delta\in\mathbb{R},
\label{eq:assumption-3.3}
\end{equation}
with $\ell$ as in \eqref{eq:ell} and $L_0=M_\kappa\in L^1(\mathbb{R})$.
\end{lem}

\begin{proof}
Differentiation under the integral sign gives
\[
{f_G^{(X)}}^{(\ell)}(x)
=
\int_{\Theta}\partial_x^\ell k(x\mid\theta)\,G(d\theta).
\]
Hence, by condition (i) of Proposition \ref{th:rateNewton},
\begin{equation}
    \label{eq:ineqdifff}
\left|
{f_G^{(X)}}^{(\ell)}(x+\delta)-{f_G^{(X)}}^{(\ell)}(x)
\right|
\leq
M_\kappa(x)|\delta|^{\kappa-\ell}.
\end{equation}
Integrating with respect to $x$ yields
\begin{equation}
\int_{\mathbb R}
\left|{f_G^{(X)}}^{(\ell)}(x+\delta)-{f_G^{(X)}}^{(\ell)}(x)\right|d x
\leq
\|M_\kappa\|_1|\delta|^{\kappa-\ell},
\label{eq:L1-holder}
\end{equation}
uniformly in $G$. 
Now, we consider
\begin{align*}
||F_G^{(X)}-F_G^{(X)}*K_h||_1
&=
\int_{\mathbb R}\left|F_G^{(X)}(x)-\int_{\mathbb R}F_G^{(X)}(x-u)K_h(u)\,du\right|dx
\\
&=\int_{\mathbb R}\left|\int_{\mathbb R}\left(F_G^{(X)}(x)-F_G^{(X)}(x-hz)\right)K(z)dz\right|dx,
\end{align*}
where the last equality comes from $\int_{\mathbb R} K(z)dz=1$. Moreover,
\begin{align*}
\int_{\mathbb R}F_G^{(X)}(x)K(z)dz
&=\int_{\mathbb R}\left(\sum_{j=0}^{\ell+1}\frac{(-hz)^j}{j!}{F_G^{(X)}}^{(j)}(x)\right)K(z)\,dz,
\end{align*}
where we have used $\int_{\mathbb R} z^jK(z)dz=0$ for $j=1,\dots,\ell+1$.

Now we consider $F_G^{(X)}(x-hz)$. Since $F_G^{(X)(\ell+1)}=f_G^{(X)(\ell)}$, Taylor's formula with integral remainder yields for every $u\in\mathbb R$
\begin{align}
F_G^{(X)}(x-u)-
\sum_{j=0}^{\ell+1}
\frac{(-u)^j}{j!}
F_G^{(X)(j)}(x)
&
=\frac{(-u)^{\ell+2}}{(\ell+1)!}
\int_0^1
(1-t)^{\ell+1}
F_G^{(X)(\ell+2)}(x-tu)\,dt
\nonumber\\
&=
\frac{(-u)^{\ell+2}}{(\ell+1)!}
\int_0^1
(1-t)^{\ell+1}
f_G^{(X)(\ell+1)}(x-tu)\,dt.
\label{eq:taylor}
\end{align}
Since $\frac{d}{dt}f_G^{(X)(\ell)}(x-tu)=-u\,f_G^{(X)(\ell+1)}(x-tu)$, then integration by parts gives
\begin{align*}
&
\frac{(-u)^{\ell+2}}{(\ell+1)!}
\int_0^1
(1-t)^{\ell+1}
f_G^{(X)(\ell+1)}(x-tu)\,dt
\\
&=
\frac{(-u)^{\ell+1}}{(\ell+1)!}
f_G^{(X)(\ell)}(x)
-
\frac{(-u)^{\ell+1}}{\ell!}
\int_0^1
(1-t)^\ell
f_G^{(X)(\ell)}(x-tu)\,dt
\\
&=
\frac{(-u)^{\ell+1}}{\ell!}
\int_0^1
(1-t)^\ell
\left(
f_G^{(X)(\ell)}(x-tu)
-
f_G^{(X)(\ell)}(x)
\right)dt.
\end{align*}
Substituting this expression into \eqref{eq:taylor} gives
\[
F_G^{(X)}(x-u)
-
\sum_{j=0}^{\ell+1}
\frac{(-u)^j}{j!}
F_G^{(X)(j)}(x)
=
\frac{(-u)^{\ell+1}}{\ell!}
\int_0^1
(1-t)^\ell
\left[
f_G^{(X)(\ell)}(x-tu)
-
f_G^{(X)(\ell)}(x)
\right]dt.
\]
Thus, by Assumption (i),
\[
\begin{aligned}
\|{F_G^{(X)}}-{F_G^{(X)}}*K_h\|_1
&\leq
\int_\mathbb R\left| \int_{\mathbb R}
\frac{(-hz)^{\ell+1}}{\ell!}
\int_0^1
(1-t)^\ell
\left[
f_G^{(X)(\ell)}(x-thz)
-
f_G^{(X)(\ell)}(x)
\right]dt
K(z)\,dz  \right|dx
\\
&\leq 
C_\kappa\|M_\kappa\|_1h^{\kappa+1}
\int_{\mathbb{R}}|z|^{\kappa+1}|K(z)|\,dz,
\end{aligned}
\]
which proves \eqref{eq:condition-3.6}.
To prove \eqref{eq:assumption-3.3}, take $G=G^*$ in \eqref{eq:ineqdifff}. This yields 
\[
\left|
{f_{G^*}^{(X)}}^{(\ell)}(x+\delta)-{f_{G^*}^{(X)}}^{(\ell)}(x)
\right|
\leq
M_\kappa(x)|\delta|^{\kappa-\ell}.
\]
Thus \eqref{eq:assumption-3.3} holds with
$L_0=M_\kappa\in L^1(\mathbb{R})$.
\end{proof}

\medskip

\begin{lem}\label{lemma:rate2}
Assume that $Z$ has Laplace distribution and that, for some
$a\in(0,1)$,
\begin{equation}
\sup_{\theta\in\Theta}
\int_{\mathbb R} e^{a|x|}k(x\mid\theta)\,dx
<+\infty.
\label{eq:uniform-exp-moment-kernel}
\end{equation}
If, for some sequence
$\widetilde\epsilon_n\to0$,
\[
\left\|f_{ \widetilde G_n}^{(Y)}-f_{G^*}^{(Y)}\right\|_1
\leq \widetilde\epsilon_n,
\]
then
\begin{equation}
W_1\!\left(
F_{\widetilde G_n}^{(Y)},
F_{G^*}^{(Y)}
\right)\leq C\, \widetilde \epsilon_n
\log\!\left(\frac{1}{\widetilde\epsilon_n}\right),
\label{eq:wasserstein-from-L1}
\end{equation}
for a constant $C$ not depending on ${\widetilde G_n}$.
\end{lem}

\begin{proof}
By equation \eqref{eq:uniform-exp-moment-kernel},
\[
\sup_G
\int_{\mathbb R} e^{a|x|}f_G^{(X)}(x)\,dx
\leq
\sup_{\theta\in\Theta}
\int_{\mathbb R} e^{a|x|}k(x\mid\theta)\,dx
<+\infty.
\]
Since $a<1$ and $Z$ has Laplace distribution, then $E([e^{a|Z|})<+\infty$. Therefore,
\[
\sup_G
\int_{\mathbb R}e^{a|y|}f_G^{(Y)}(y)\,dy
\leq
\sup_GE_G(e^{a|X|}) E(e^{a|Z|})<+\infty.
\]
Let
\[
C_0
:=
\sup_G
\int_{\mathbb R}
e^{a|y|}f_G^{(Y)}(y)\,dy
<+\infty.
\]
For every \(R>0\),
\begin{align*}
\sup_G
\int_{|y|>R}
|y|f_G^{(Y)}(y)\,dy
&=
\sup_G
\int_{|y|>R}(|y|e^{-a|y|})
e^{a|y|}f_G^{(Y)}(y)\,dy
\\
&\leq\sup_{y>R}(y e^{-ay})\sup_G\int_{|y|>R}e^{a|y|}f_G^{(Y)}(y)\,dy
\\
&\leq C_0\sup_{y>R}(y e^{-ay}).
\end{align*}
Defining $y=R+s$ with $s\geq 0$, we can write 
\begin{align*}
\sup_{y>R}(y e^{-ay})&=e^{-aR}\sup_{s>0}((R+s)e^{-as})
\\
&\leq
e^{-aR}
\left(
R+\sup_{s>0} s e^{-as}
\right)
\\
&=
e^{-aR}
\left(
R+\frac{1}{ae}
\right)
\\
&\leq
C_a(1+R)e^{-aR},
\end{align*}
where $C_a=\max(1,1/(ae))$, and  we have used
$\sup_{s>0}s e^{-as}=\frac{1}{ae}$. Thus,
\begin{align}
\sup_G
\int_{|y|>R}
|y|f_G^{(Y)}(y)\,dy
&\leq
C_0 C_a(1+R)e^{-aR}\nonumber
\\
&=
C_1(1+R)e^{-aR},\label{eq:inequalityR}
\end{align}
Since $W_1\!\left(F_G^{(Y)},
F_{G^*}^{(Y)}(y)
\right)=\int_{\mathbb R}\left|F_G^{(Y)}(y)-F_{G^*}^{(Y)}(y)\right|\,dy$, then there exists a constant $C_2$ such that 
\[
W_1\!\left(
F_G^{(Y)}(y),
F_{G^*}^{(Y)}
\right)
\leq C_2 
R\left\|f_G^{(Y)}-f_{G^*}^{(Y)}\right\|_1
+C_1
(1+R)e^{-aR},
\]
where we have used \eqref{eq:inequalityR} and
\begin{align*}
\int_{-R}^{R}
\left|
F_G^{(Y)}(y)-F_{G^*}^{(Y)}(y)
\right|\,dy
&\leq
2R\,
\sup_{y\in\mathbb R}
\left|
F_G^{(Y)}(y)-F_{G^*}^{(Y)}(y)
\right|
\\
&=
2R\,
\sup_{y\in\mathbb R}
\left|
\int_{-\infty}^{y}
\left(
f_G^{(Y)}(u)-f_{G^*}^{(Y)}(u)
\right)\,du
\right|
\\
&\leq
2R
\int_{\mathbb R}
\left|
f_G^{(Y)}(u)-f_{G^*}^{(Y)}(u)
\right|\,du
\\
&=
2R
\left\|
f_G^{(Y)}-f_{G^*}^{(Y)}
\right\|_1.
\end{align*}
Choose
\[
R=R_n
:=
\frac{1}{a}
\log\!\left(\frac{1}{\widetilde\epsilon_n}\right).
\]
Since \(\widetilde\epsilon_n\to0\), we have \(R_n>0\) for all sufficiently large $n$. Moreover,
\[
e^{-aR_n}
=
\exp\!\left(-\log\!\left(\frac{1}{\widetilde\epsilon_n}\right)\right)
=\widetilde\epsilon_n.
\]
Therefore,
\begin{align*}
W_1\!\left(F_G^{(Y)},F_{G^*}^{(Y)}\right)
&\leq C_2 R_n\left\|f_G^{(Y)}-f_{G^*}^{(Y)}\right\|_1+C_1(1+R_n)e^{-aR_n}
\\
&\leq C_2R_n\widetilde\epsilon_n + C_1(1+R_n)e^{-aR_n}
\\
&=\frac{C_2}{a}\widetilde\epsilon_n\log\!\left(\frac{1}{\widetilde\epsilon_n}\right)+C_1\left[1+\frac{1}{a}\log\!\left(\frac{1}{\widetilde\epsilon_n}\right)\right]\widetilde\epsilon_n
\\
&=C_1\widetilde\epsilon_n+\frac{C_1+C_2}{a}\widetilde\epsilon_n\log\!\left(\frac{1}{\widetilde\epsilon_n}\right).
\end{align*}
Since $\widetilde \epsilon_n\rightarrow 0$, then for $n$ sufficiently large, $\widetilde \epsilon_n\leq e^{-1}$. Thus,
\begin{align*}
W_1\!\left(
F_G^{(Y)},
F_{G^*}^{(Y)}
\right)
&\leq
\left(
C_1+\frac{C_1+C_2}{a}
\right)
\widetilde\epsilon_n
\log\!\left(\frac{1}{\widetilde\epsilon_n}\right)
\\
&\leq
C\,
\widetilde\epsilon_n
\log\!\left(\frac{1}{\widetilde\epsilon_n}\right),
\end{align*}
where $C$ is a constant that does not depend on $G$.
\end{proof}

\begin{proof}[Proof of Proposition \ref{th:rateNewton}]
By Corollary 4.10 in \cite{MarTok(09)},
\[
\rho(f_{\widetilde G_n}^{(Y)},f_{G^*}^{(Y)})=o_{P^*}\left(n^{-\frac{1-\gamma}{2}}\right),
\]
where $\rho$ denotes the Hellinger distance. Since the $L^1$ distance is dominated by the Hellinger distance, we can write 
\begin{equation}\label{eq:tildeepsilonn}
\left\|
f_{\widetilde G_n}^{(Y)}
-
f_{G^*}^{(Y)}
\right\|_1
=
O_{{P^*}}\!\left(\widetilde\epsilon_n\right).
\end{equation}
with $\widetilde \epsilon_n=o(n^{-\frac{1-\gamma}{2}})$.

By Lemma \ref{lemma:rate1}, the condition \eqref{eq:M} implies that $f_{G^*}^{(X)}$ satisfies Assumption~3.3 and Equation (3.6) of \cite{Rou(23)}. The Laplace distribution is ordinary smooth of order $\beta=2$.
Therefore, the inversion inequality in Theorem 3.1 of \cite{Rou(23)} yields, for every $h\in(0,1]$,
\begin{align}
W_1\!\left(F_{\widetilde G_n}^{(X)},F_{G^*}^{(X)}\right)\leq C_1 &\Bigl(h^{\kappa+1}
+W_1\!\left(F_{\widetilde G_n}^{(Y)},F_{G^*}^{(Y)}\right)
+
h^{-1}||f_{\widetilde G_n}^{(Y)}-f_{G^*}^{(Y)}||_1 \Bigr),
\label{eq:proof-inversion}
\end{align}
for a constant $C_1$.
By Lemma \ref{lemma:rate2}, conditions \eqref{eq:theorem-exp-moment} and \eqref{eq:tildeepsilonn} imply that
\begin{equation}
W_1\!\left(F_{\widetilde G_n}^{(Y)},F_{G^*}^{(Y)}\right)
\leq C\left(\widetilde\epsilon_n\log\frac{1}{\widetilde\epsilon_n}\right),
\label{eq:proof-direct-W1}
\end{equation}
for a constant $C$ not depending on $G_n$. Combining \eqref{eq:tildeepsilonn},
\eqref{eq:proof-direct-W1}, and \eqref{eq:proof-inversion}, we obtain for every $h\in(0,1]$
\[
W_1\!\left(F_{\widetilde G_n}^{(X)},F_{G^*}^{(X)}\right)
\leq C_2\left(h^{\kappa+1}+\widetilde\epsilon_n\log\frac{1}{\widetilde\epsilon_n}+h^{-1}\widetilde\epsilon_n\right),
\]
for a constant $C_2$.
Setting $h=h_n=\widetilde\epsilon_n^{1/(\kappa+2)}$, we get 
\[
h_n^{\kappa+1}
=
h_n^{-1}\widetilde\epsilon_n
=
\widetilde\epsilon_n^{\frac{\kappa+1}{\kappa+2}},
\]
so that
\[
W_1\!\left(F_{\widetilde G_n}^{(X)},F_{G^*}^{(X)}\right)
=
O_{{P^*}}\!\left(
\widetilde\epsilon_n^{\frac{\kappa+1}{\kappa+2}}
+
\widetilde\epsilon_n
\log\frac{1}{\widetilde\epsilon_n}
\right).
\]
Since
\[
\frac{
\widetilde\epsilon_n\log(1/\widetilde\epsilon_n)
}{
{\widetilde\epsilon_n}^{(\kappa+1)/(\kappa+2)}
}
=
\widetilde\epsilon_n^{1/(\kappa+2)}
\log\frac{1}{\widetilde\epsilon_n}
\longrightarrow0,
\]
as $n\rightarrow+\infty$, then  $\widetilde\epsilon_n\log(1/\widetilde\epsilon_n)$ is asymptotically negligible with respect to $\widetilde\epsilon_n^{(\kappa+1)/(\kappa+2)}$. It follows that
\[
W_1\!\left(F_{\widetilde G_n}^{(X)},F_{G^*}^{(X)}\right)
=
O_{{P^*}}\!\left(
\widetilde\epsilon_n^{\frac{\kappa+1}{\kappa+2}}
\right)=o_{{P^*}}\left(n^{-\frac{(1-\gamma)(\kappa+1)}{2(\kappa+2)}}\right).
\]
\end{proof}

\subsubsection{Proof of Corollary \ref{th:Newton-Bayes-merging}}\label{app:Newton-Bayes}
\begin{proof}
    Set
\[
r_n
:=
n^{-\frac{(1-\gamma)(\kappa+1)}
{2(\kappa+2)}}.
\]
By Proposition \ref{th:rateNewton},
\begin{equation}
W_1\!\left(
F_{\widetilde G_n}^{(X)},
F_{G^*}^{(X)}
\right)
=o_{P^*}(r_n).
\label{eq:Newton-part-merging}
\end{equation}
Under conditions \eqref{eq:true-density-tail-Bayes}--\eqref{eq:DP-base-tail}, Theorem~4.3 and Corollary~4.1 of
\cite{Rou(23)} imply that there exist constants $M_B>0$ and $q_B>0$ such that
\begin{equation}
W_1\!\left(
\widehat F_n^{\,B},
F_{G^*}^{(X)}
\right)
=
O_{P^*}\!\left(
n^{-1/5}(\log n)^{q_B}
\right).
\label{eq:Bayes-mean-rate}
\end{equation}
Since $\gamma\in(2/3,1)$ and $\kappa>0$,
\[
\frac{(1-\gamma)(\kappa+1)}
{2(\kappa+2)}
<
\frac{1}{6}
<
\frac{1}{5}.
\]
Consequently, for every fixed $q_B>0$,
\begin{align}
\frac{n^{-1/5}(\log n)^{q_B}}{r_n}
&=
n^{-\frac15+\frac{(1-\gamma)(\kappa+1)}{2(\kappa+2)}}(\log n)^{q_B}\rightarrow 0.
\label{eq:Bayes-negligible}
\end{align}
It follows from \eqref{eq:Bayes-mean-rate} and \eqref{eq:Bayes-negligible} that
\begin{equation}
W_1\!\left(\widehat F_n^{\,B},F_{G^*}^{(X)}\right)=o_{P^*}(r_n).
\label{eq:Bayes-part-merging}
\end{equation}
Finally, by the triangle inequality,
\begin{align*}
W_1\!\left(F_{\widetilde G_n}^{(X)},\widehat F_n^{\,B}\right)
&\leq W_1\!\left(F_{\widetilde G_n}^{(X)},F_{G^*}^{(X)}\right)+W_1\left(\widehat F_n^{\,B},F_{G^*}^{(X)}\right).
\end{align*}
Combining \eqref{eq:Newton-part-merging} and \eqref{eq:Bayes-part-merging} proves \eqref{eq:Newton-Bayes-merging}.
\end{proof}


\section{Other numerical experiments and applications}
\subsection{Synthetic-data analysis}\label{sec:synt}

\subsubsection{Unimodal example}\label{sec:supp-unimodal}

\paragraph{Synthetic-data generation}

For each $n\in\{250;\, 500;\, 750;\,1,000\}$ we generate random variables $X_1,\ldots,X_n$ i.i.d. according to the density $f^{(X)}(\cdot)=\phi(\cdot\mid 2,2)$, where $\phi(\cdot\mid\mu,\sigma^2)$ denotes the Gaussian density with mean $\mu$ and variance $\sigma^2$. Using the mixture representation \eqref{eq:mixture}, we write $f^{(X)}$ as $f_{G^*}^{(X)}$, with $G^*=\delta_{(2,2)}$ the true mixing distribution. We consider ordinary-smooth and super-smooth distributions for the noise random variables $Z_{i}$'s. In the ordinary-smooth case, $Z_i\stackrel{\mathrm{iid}}{\sim}\mathrm{Laplace}(0,b_l)$ with $b_l=\frac{\sigma_l}{\sqrt{2}}$ and $\sigma_l\in\{0.25,0.50\}$, namely $\sigma_l$ is the standard deviation of the Laplace noise; in the super-smooth case, $Z_i\stackrel{\mathrm{iid}}{\sim}N(0,\sigma_g^2)$ with standard deviation $\sigma_g\in\{0.25,0.50\}$. The  $Z_{i}$'s are independent of the $X_{i}$'s, and the observations are modeled as follows:
\begin{displaymath}
 Y_i=X_i+Z_i,\qquad i=1,\ldots,n.
\end{displaymath}
For each $n$, the same $Y_{1:n}=(Y_{1},\ldots,Y_{n})$ is used for all the methods under comparison in this section.

\paragraph{Quasi-Bayesian estimation and uncertainty quantification}\label{sec:supp-unimodal-newton}

To implement the quasi-Bayesian approach for estimating $f_{G}^{(X)}$, we consider Newton's algorithm~\eqref{eq:newton} with the Gaussian kernel $k(\cdot\mid\theta)=\phi(\cdot\mid\mu,\sigma^2)$, where $\theta=(\mu,\sigma^2)\in\mathbb{R}\times\mathbb{R}^{+}$. The parameter space $\mathbb{R}\times\mathbb{R}^{+}$ is restricted to $\Theta=[-20,20]\times[0.1,5]$, and it is discretized using increments of $0.1$ in both coordinates, giving $401\times50=20,050$ grid points. We use the learning rate $\widetilde\alpha_i=(1+i)^{-1}$, for $i\geq1$, and set $\widetilde G_0$ as the Uniform distribution on $\Theta$. All integrals with respect to the mixing distribution are evaluated by the two-dimensional trapezoidal quadrature, and the mixing distribution $\widetilde G_{n}$ is numerically renormalized after each update. The updates use the closed-form Gaussian--Gaussian or Gaussian--Laplace convolution kernel $\widetilde{k}(\cdot\mid\theta)=(f_{Z}\ast k(\cdot\mid\theta))(\cdot)$, depending on the choice of the noise distribution with density $f_{Z}$. The estimate of $f_{G}^{(X)}$ is computed on the grid $\mathcal X=\{-12,-11.9,\ldots,12\}$ as
\begin{displaymath}
 f_{\widetilde G_n}^{(X)}(x) =\int_\Theta \phi(x\mid\mu,\sigma^2)\widetilde G_n(\ddr\mu,\ddr\sigma^2),\qquad x\in\mathcal{X}.
\end{displaymath}
The discretization of the parameter space $\Theta$ is used only for numerical evaluation, through the implementation of the two-dimensional trapezoidal quadrature, and it imposes no modeling restrictions. Newton's algorithm depends on the order in which the observations $Y_{i}$'s are processed.  We therefore repeat the estimation procedure over $R_n=n/10$ random permutations of $Y_{1:n}$, which are generated using random seed $123$. If  $\widetilde G_n^{(r)}$ denotes the Newton's mixing-distribution estimate obtained from the $r$-th permutation, $r=1,\ldots,R_{n}$, then we form the order-averaged estimate
\begin{equation}\label{eq:unim-Gbar}
 \overline G_n = \frac{1}{R_n}\sum_{r=1}^{R_n}\widetilde G_n^{(r)},
\end{equation}
which is renormalized (numerically) after averaging. Therefore, based on \eqref{eq:predX}, the quasi-Bayes estimate of $f_{G}^{(X)}$ is
\begin{equation}\label{eq:est_unim}
f_{\overline  G_n}^{(X)}(x)= \int_{\Theta} \phi(x\mid\mu,\sigma^2) \overline  G_n(\ddr\mu,\ddr\sigma^2),\qquad x\in\mathcal X.
\end{equation}
The quasi-Bayes asymptotic credible intervals and bands for $f_G^{(X)}$ are also computed from the order-averaged estimate $\overline G_n$, based on \eqref{asym_intervals} and \eqref{asym_bands}, respectively. In particular, let $\overline  v_n(x)=v_n(x;\overline  G_n)$ denote the plug-in version of the variance in \eqref{eq:vx}, which is evaluated by replacing the unknown mixing distribution with the estimate $\overline  G_n$.  Since $\gamma=1$, we have $b_n=n$, and the lower and upper limits of the $95\%$ asymptotic credible interval are, at a given $x\in\mathcal X$,
\begin{displaymath}
L_{n,\mathrm{as}}(x)=\max\left\{0,\, f_{\overline  G_n}^{(X)}(x) -\frac{z_{0.975}}{\sqrt n}\left[\max\left\{\overline  v_n(x),10^{-8}\right\}\right]^{1/2}\right\}
\end{displaymath}
and
\begin{displaymath}
U_{n,\mathrm{as}}(x)=f_{\overline  G_n}^{(X)}(x)+\frac{z_{0.975}}{\sqrt n}\left[\max\left\{\overline  v_n(x),10^{-8}\right\}\right]^{1/2},
\end{displaymath}
respectively. From these lower and upper limits, the $95\%$ asymptotic credible interval \eqref{asym_intervals}, at a given $x\in\mathcal X$, is
\begin{equation}\label{eq:ci_unim}
 \left[
L_{n,\mathrm{as}}(x),
 U_{n,\mathrm{as}}(x)
 \right].
\end{equation}
Similarly, we construct the $95\%$ asymptotic credible band on $I=[-4,6]$. Let $\overline  s_n(I,0.05) =s_n(I,0.05;\overline  G_n)$ be the plug-in version of \eqref{eq_vn}, evaluated using $\overline  G_n$, and define $\overline \rho_n(I)=\frac{1}{\sqrt n}\max\{\overline  s_n(I,0.05),10^{-8}\}$. The lower and upper functions of the $95\%$ asymptotic credible band are, for every $x\in I$,
\begin{displaymath}
 L_{n,\mathrm{as}}^{I}(x)=\max\left\{0,\,f_{\overline  G_n}^{(X)}(x)-\overline \rho_n(I)\right\}
\end{displaymath}
and
\begin{displaymath}
U_{n,\mathrm{as}}^{I}(x)= f_{\overline  G_n}^{(X)}(x)+\overline \rho_n(I),
\end{displaymath}
respectively.  From these lower and upper functions, the $95\%$ asymptotic credible band  \eqref{asym_bands}, for every $x\in I$, is
\begin{equation}\label{eq:cb_unim}
\left\{h\in C(I): L_{n,\mathrm{as}}^{I}(x)\leq h(x)\leq U_{n,\mathrm{as}}^{I}(x)\text{ for every }x\in I\right\}.
\end{equation}
All integrals entering $\overline  v_n(x)$ and $\overline  s_n(I,0.05)$ are approximated by means of the trapezoidal quadrature. In particular, all integrals are evaluated on the same grid that is used for the estimate $f_{\overline  G_n}^{(X)}(x)$, whereas all quantities entering the credible bands are restricted to $\mathcal X_I=\mathcal X\cap I$.

Figures~\ref{fig:supp-unimodal-newton-laplace-small}--\ref{fig:supp-unimodal-newton-gaussian-small} report the quasi-Bayes estimates of $f_{G}^{(X)}$ in~\eqref{eq:est_unim}, and the asymptotic credible intervals and bands in~\eqref{eq:ci_unim}--\eqref{eq:cb_unim}. The figures consist of two columns, which correspond to the noise standard deviations $0.25$ and $0.50$, respectively.  The first row of each figure compares the true density $f_{G^*}^{(X)}$ with the quasi-Bayes density estimates, while the second row displays the credible intervals, and the third row displays the credible bands on the interval $I=[-4,6]$.

\begin{figure}[t]
 \centering
\includegraphics[
  width=1\textwidth,
  trim=20 15 20 0,
  clip
]{unimodal_laplace_small_n.png}
 \caption{\footnotesize{Unimodal example, with Laplace noise: estimates, credible intervals and bands.}}
 \label{fig:supp-unimodal-newton-laplace-small}
\end{figure}

\begin{figure}[t]
 \centering
 \includegraphics[
  width=1\textwidth,
  trim=20 15 20 0,
  clip
]{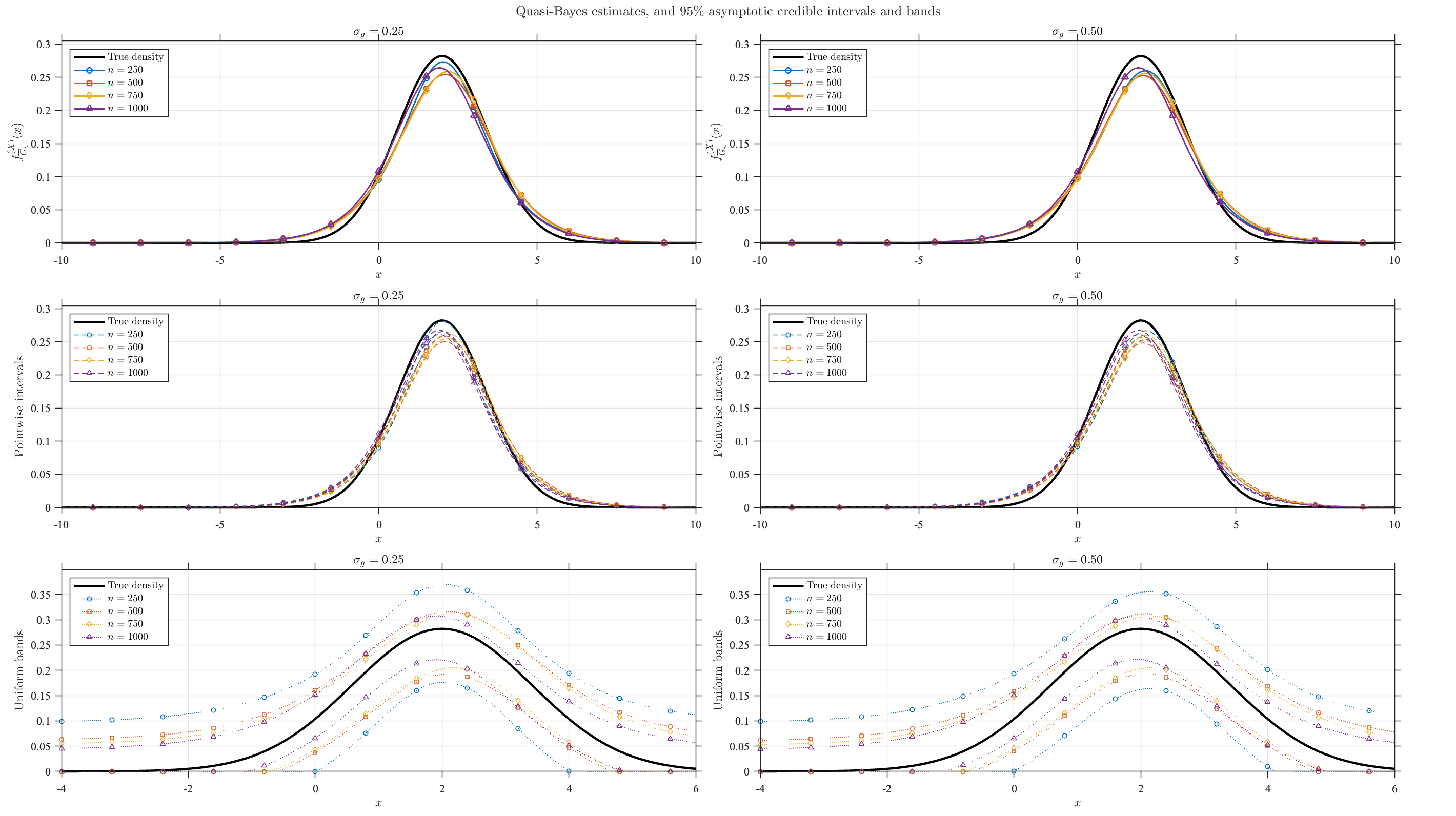}
 \caption{\footnotesize{Unimodal example, with Gaussian noise: estimates, credible intervals and bands.}}
 \label{fig:supp-unimodal-newton-gaussian-small}
\end{figure}

\paragraph{Monte Carlo uncertainty quantification}\label{sec:supp-unimodal-mc}

We approximate the conditional distribution of $f_{\widetilde G}^{(X)}$, given $Y_{1:n}$, by continuing the quasi-Bayesian learning process \eqref{eq:Y}--\eqref{eq:newton}, with the Gaussian kernel $k(\cdot\mid\theta)=\phi(\cdot\mid\mu,\sigma^2)$, where $\theta=(\mu,\sigma^2)\in\mathbb{R}\times\mathbb{R}^{+}$. The continuations start from the same order-averaged estimate $\overline  G_n$ in \eqref{eq:unim-Gbar}. For each $b=1,\ldots,B$, we simulate a continuation of length $N_{\mathrm{Future}}=10,000$. More precisely, at the continuation step $s$,  we draw
\begin{displaymath}
 \theta_{n+s}^{(b)}\sim \widetilde G_{n+s-1}^{(b)},\qquad X_{n+s}^{(b)}\mid\theta_{n+s}^{(b)} \sim k(\cdot\mid\theta_{n+s}^{(b)}),
\end{displaymath}
generate $Z_{n+s}^{(b)}$ from the noise distribution under consideration, set $Y_{n+s}^{(b)}=X_{n+s}^{(b)}+Z_{n+s}^{(b)}$, and update $\widetilde G_{n+s-1}^{(b)}$ using~\eqref{eq:newton} with learning rate $(1+n+s)^{-1}$.  The resulring $\widetilde G_{n+N_{\mathrm{Future}}}^{(b)}$ is then used to compute
\begin{displaymath}
 f_b(x):=\int_\Theta \phi(x\mid \theta)\widetilde G_{n+N_{\mathrm{Future}}}^{(b)}(\ddr\theta), \qquad x\in\mathcal X,
\end{displaymath}
with the integral being evaluated numerically using trapezoidal quadrature. Therefore, the $b$-th Monte Carlo replication produces the density $f_b(x)$, for $x\in\mathcal X$. We make use of $B=1000$ Monte Carlo replications for both the noise distributions, generated using random seed $123$.

According to the above Monte Carlo sampling scheme, a Monte Carlo estimate of $f_{G}^{(X)}$ is
\begin{equation}\label{eq:unimodal-mc-est}
f^{(X)}_{\mathrm{MC},n}(x)= \frac{1}{B}\sum_{b=1}^{B}f_b(x), \qquad x\in\mathcal X.
\end{equation}
For $\tau\in(0,1)$, and for a given $x\in\mathcal{X}$, we define $q^{\mathrm{MC}}_{\tau,n}(x)=\operatorname{Quantile}_{\tau}\left\{f_b(x):b=1,\ldots,B\right\}$, namely the empirical $\tau$-quantile of $f_1(x),\ldots,f_B(x)$. The $95\%$ Monte Carlo credible interval at a given $x\in\mathcal{X}$ is
\begin{equation}\label{eq:supp-unimodal-mc-interval}
 \left[q^{\mathrm{MC}}_{0.025,n}(x),q^{\mathrm{MC}}_{0.975,n}(x)\right].
\end{equation}
Similarly, to construct the Monte Carlo credible band on $I=[-4,6]$, for every $x\in \mathcal X_I$ let $\overline  s_{\mathrm{MC},n}(x)$ denote the empirical standard deviation of
$f_1(x),\ldots,f_B(x)$ and define $T_b= \max_{x\in\mathcal X_I}\frac{ \left|f_b(x)-f^{(X)}_{\mathrm{MC},n}(x)\right| }{\max\left\{\overline  s_{\mathrm{MC},n}(x),10^{-10}\right\}
 }$, for $b=1,\ldots,B$. Let $c_{0.95}$ be the empirical $0.95$-quantile of $T_1,\ldots,T_B$.  The lower and upper functions of the $95\%$ Monte Carlo
credible band are, for every $x\in I$, 
\begin{displaymath}
 L_{\mathrm{MC},n}^{I}(x) =\max\left\{0,\,f^{(X)}_{\mathrm{MC},n}(x)-c_{0.95}\overline  s_{\mathrm{MC},n}(x)\right\}
\end{displaymath}
and
\begin{displaymath}
 U_{\mathrm{MC},n}^{I}(x)=f^{(X)}_{\mathrm{MC},n}(x) +c_{0.95}\overline  s_{\mathrm{MC},n}(x),
\end{displaymath}
respectively.  From these lower and upper functions, the $95\%$ Monte Carlo credible band, for every $x\in I$ is
\begin{equation}\label{eq:supp-unimodal-mc-band}
 \left[ L_{\mathrm{MC},n}^{I}(x), U_{\mathrm{MC},n}^{I}(x)\right].
\end{equation}
Figures~\ref{fig:supp-unimodal-mc-laplace-025}--\ref{fig:supp-unimodal-mc-gaussian-050} display Monte Carlo credible intervals and bands in~\eqref{eq:supp-unimodal-mc-interval}--\eqref{eq:supp-unimodal-mc-band}, which are compared with the corresponding quasi-Bayes asymptotic credible intervals and bands in~\eqref{eq:ci_unim}--\eqref{eq:cb_unim}. The quasi-Bayesian credible intervals closely agree with their Monte Carlo counterparts, whereas the quasi-Bayesian credible bands are wider and thus more conservative, due to the metric-entropy and Gaussian concentration bounds used in their construction.

\begin{figure}[t]
 \centering
 \includegraphics[
  width=0.9\textwidth,
  trim=20 15 20 0,
  clip
]{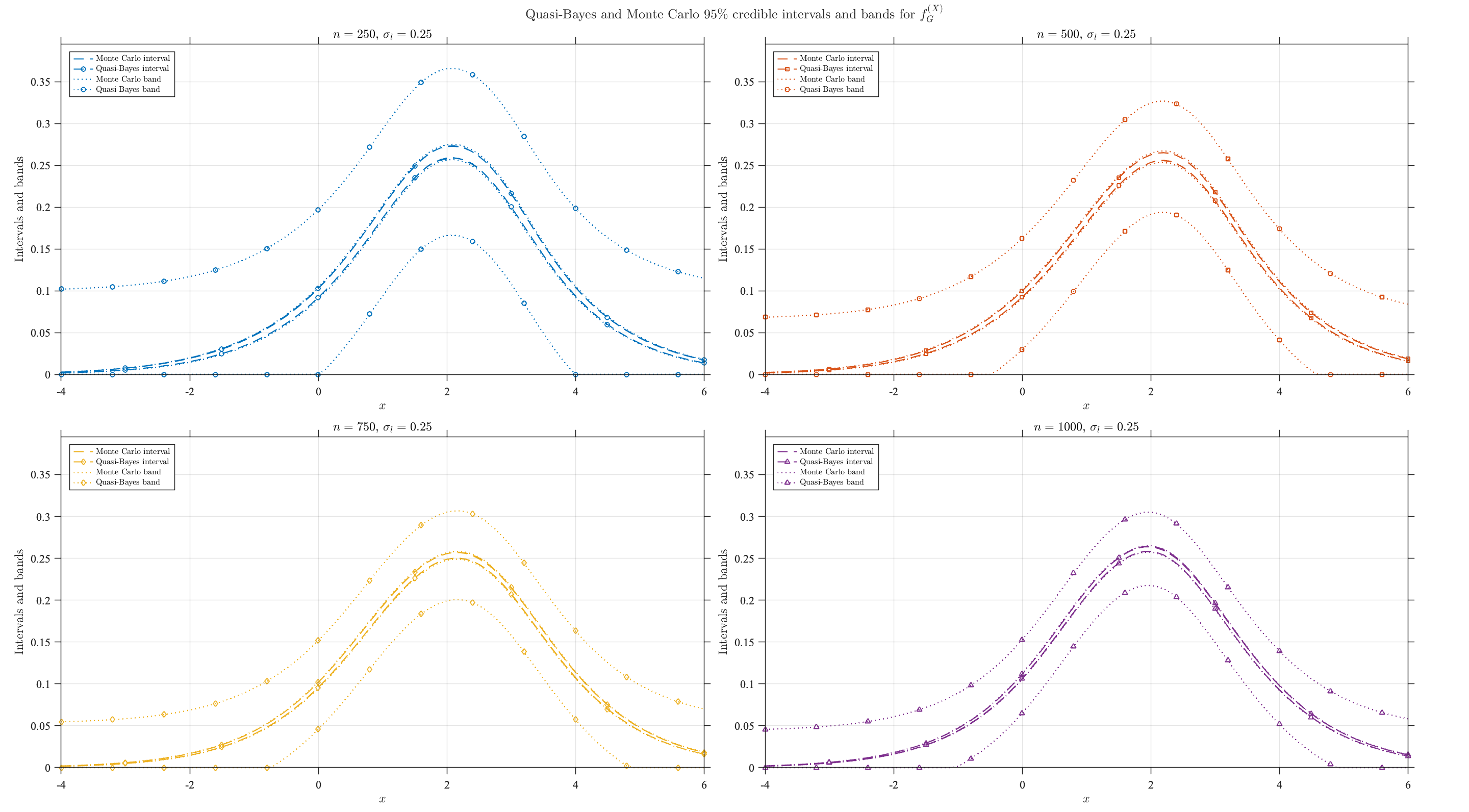}
 \caption{\footnotesize{Unimodal example, with Laplace noise ($\sigma_{l}=0.25$): quasi-Bayes versus Monte Carlo credible intervals and bands.}}
 \label{fig:supp-unimodal-mc-laplace-025}
\end{figure}

\begin{figure}[t]
 \centering
  \includegraphics[
  width=0.9\textwidth,
  trim=20 15 20 0,
  clip
]{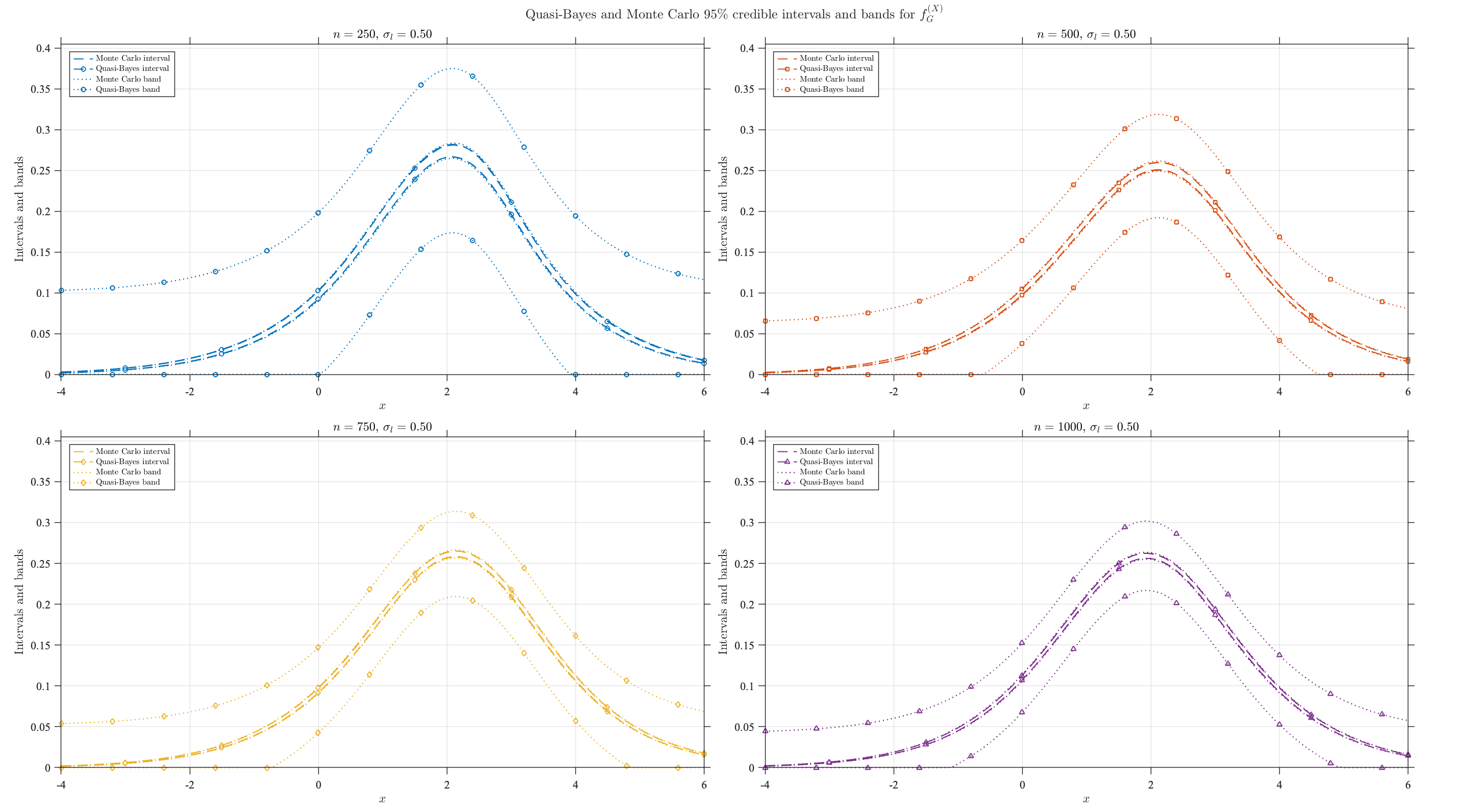}
 \caption{\footnotesize{Unimodal example, with Laplace noise ($\sigma_{l}=0.50$): quasi-Bayes versus Monte Carlo credible intervals and bands.}}
 \label{fig:supp-unimodal-mc-laplace-050}
\end{figure}

\begin{figure}[t]
 \centering
 \includegraphics[
  width=0.9\textwidth,
  trim=20 15 20 0,
  clip
]{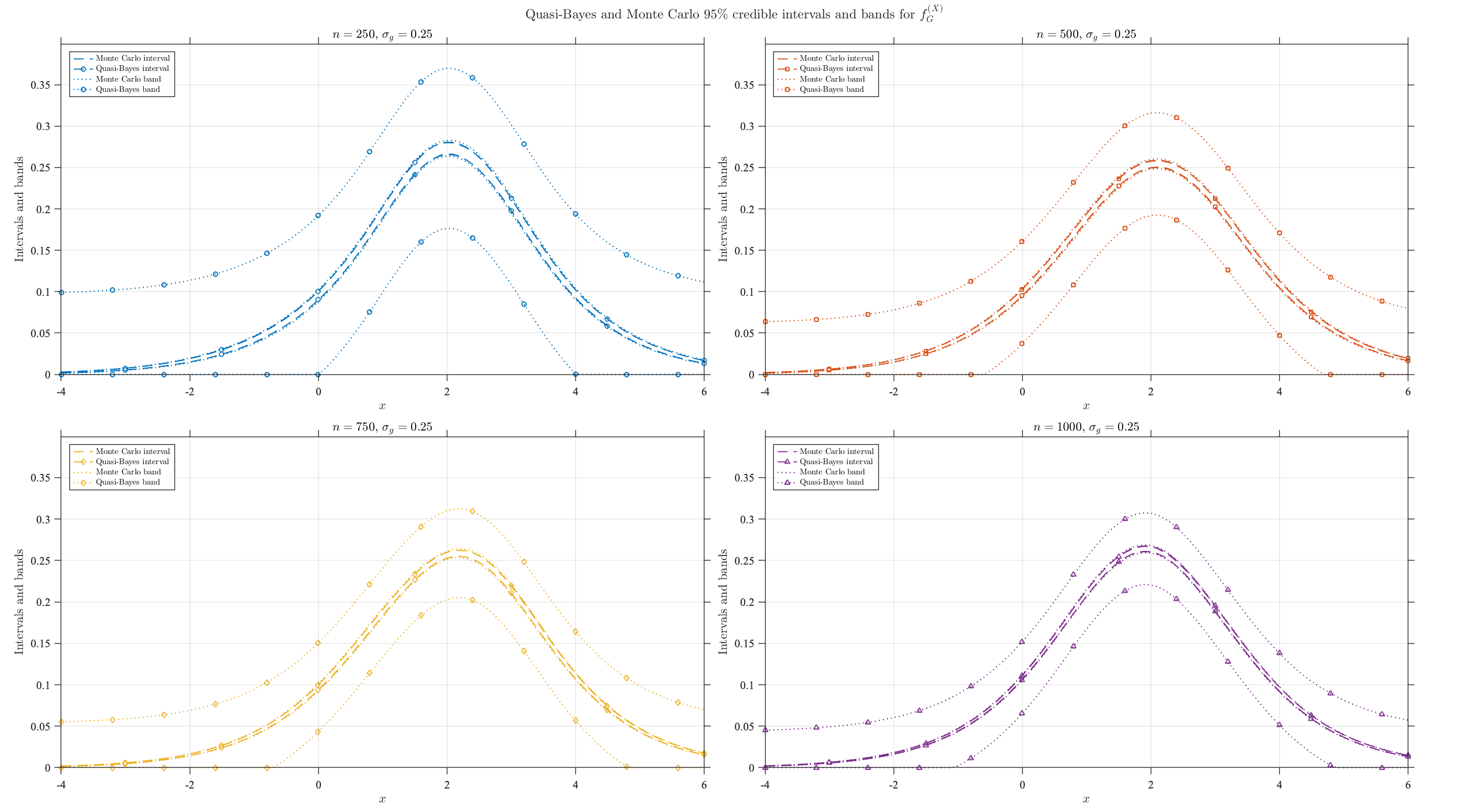}
 \caption{\footnotesize{Unimodal example, with Gaussian noise ($\sigma_{g}=0.25$): quasi-Bayes versus Monte Carlo credible intervals and bands.}}
 \label{fig:supp-unimodal-mc-gaussian-025}
\end{figure}

\begin{figure}[t]
 \centering
 \includegraphics[
  width=0.9\textwidth,
  trim=20 15 20 0,
  clip
]{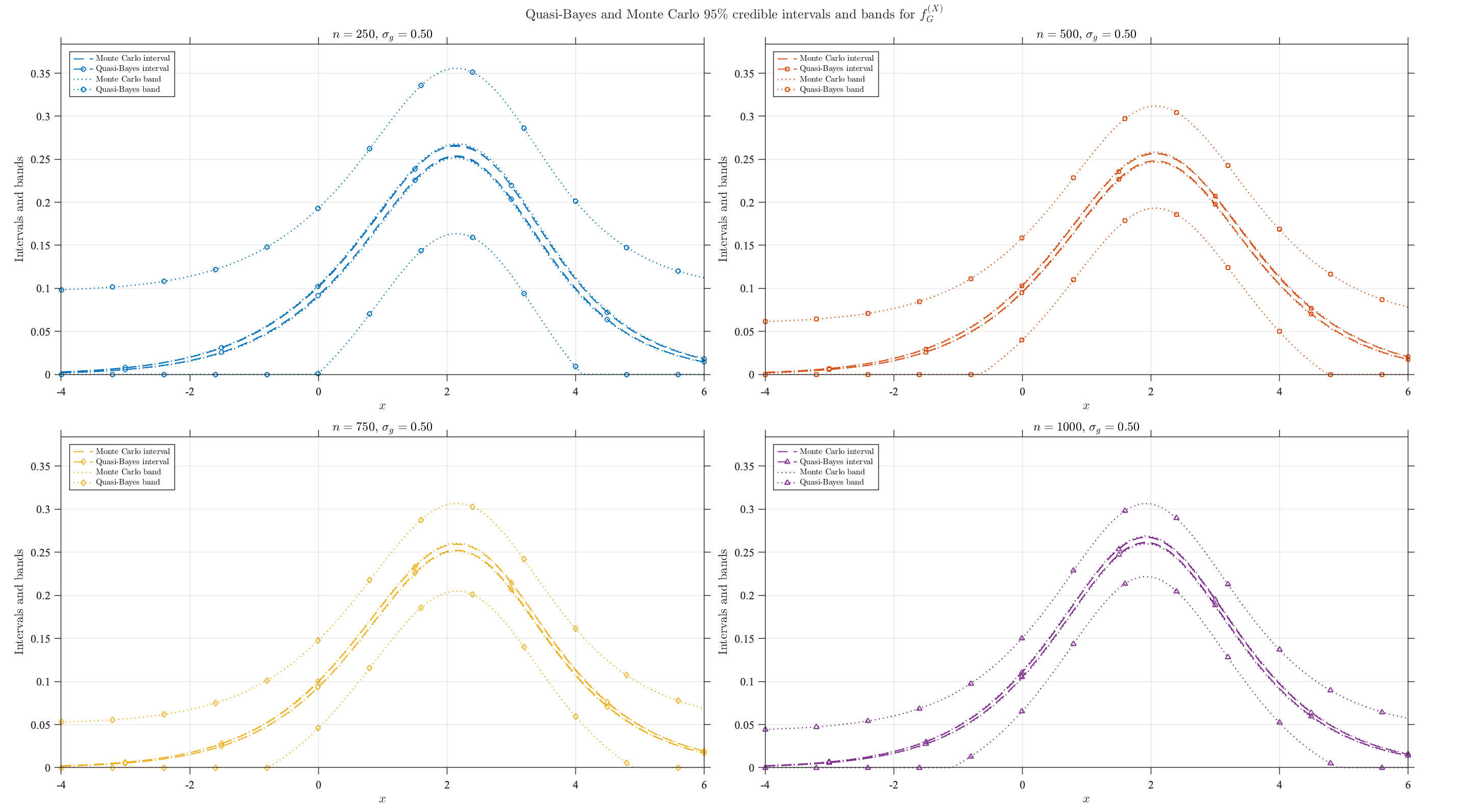}
 \caption{\footnotesize{Unimodal example, with Gaussian noise ($\sigma_{g}=0.50$): quasi-Bayes versus Monte Carlo credible intervals and bands.}}
 \label{fig:supp-unimodal-mc-gaussian-050}
\end{figure}

\paragraph{Dirichlet process mixture model benchmarks}\label{sec:supp-unimodal-bayes}

We compare the quasi-Bayesian approach with a Bayesian approach in which the mixing distribution $G$ is endowed with a Dirichlet process prior \citep{Fer(73),Lo(84)}. Consider the Gaussian kernel $k(\cdot\mid\theta)=\phi(\cdot\mid\mu,\sigma^2)$, where $\theta=(\mu,\sigma^2)\in\mathbb{R}\times\mathbb{R}^{+}$, and assume the hierarchical model
\begin{equation}\label{eq:supp-unimodal-dp-model}
 \begin{split}
 Y_i&=X_i+Z_i,\qquad i=1,\ldots,n\\[-0.2cm]
 X_i\mid\theta_i&\sim k(\cdot\mid\theta_i),\\[-0.2cm]
  \theta_i\mid G&\stackrel{\mathrm{iid}}{\sim}G,\\[-0.2cm]
  G\mid M&\sim\mathrm{DP}(M,H),\\[-0.2cm]
  M&\sim\mathrm{Gamma}(1,1),
  \end{split}
\end{equation}
where $\text{Gamma}(\cdot,\,\cdot)$ is the Gamma distribution, and $DP(\cdot,\cdot)$ is the law of a Dirichlet process with strength parameter $M>0$ and non-atomic (diffuse) base probability measure $H$ \citep{Fer(73)}. We set $H$ to be the Uniform distribution on $\Theta=[-20,20]\times[0.1,5]$.  Inference is based on the closed-form Gaussian--Gaussian or Gaussian--Laplace convolution  kernel $\widetilde{k}(\cdot\mid\theta_{i})=(f_{Z}\ast k(\cdot\mid\theta_{i}))(\cdot)$, depending on the choice of the noise distribution with density $f_{Z}$.

For the Dirichlet process Gaussian-mixture model \eqref{eq:supp-unimodal-dp-model}, we perform sequential posterior inference using the sequential Monte Carlo (SMC) algorithm of \citet{Fea(04)}; see also \citet[Algorithm 1]{Gri(17)} and references therein. The algorithm uses $P=500$ particles. At time $i$, the $p$-th particle is characterized by $K_{i-1}^{(p)}$ occupied atoms, their corresponding counts, the strength parameter $M_{i-1}^{(p)}$, and the normalized weight $W_{i-1}^{(p)}$. The particle first draws a candidate atom $\theta_{\mathrm{new},i}^{(p)}\sim H$, such that the allocation weights for the occupied components are $a_{i,j}^{(p)} =n_{j,i-1}^{(p)} \widetilde k(Y_i\mid\theta_{j,i-1}^{(p)})$, for $j=1,\ldots,K_{i-1}^{(p)}$, and the allocation weight for the candidate is $a_{i,0}^{(p)}=M_{i-1}^{(p)}\widetilde k(Y_i\mid\theta_{\mathrm{new},i}^{(p)})$. The observation is assigned to one of these components with probabilities proportional to the corresponding weights. Selecting the proposed atom creates a new occupied component with count one. Letting $A_i^{(p)}=a_{i,0}^{(p)} +\sum_{j=1}^{K_{i-1}^{(p)}}a_{i,j}^{(p)}$ the normalizing constant, the incremental importance-weight update is $\widetilde W_i^{(p)}=W_{i-1}^{(p)}\frac{A_i^{(p)}}{M_{i-1}^{(p)}+i-1}$ after which the particle weights are normalized. Systematic resampling is performed whenever $\operatorname{ESS}_i = \{\sum_{p=1}^{P}(W_i^{(p)})^2\}^{-1}<250$. Following resampling, the weights are reset to $1/P$ and $M_i^{(p)}$ is updated once in every particle using the Escobar--West step \citep{Esc(95)}. One additional Escobar--West step is performed at the end of the sequential pass. For the $p$-th terminal particle, the posterior estimate of the mixing distribution $G$ is
\begin{displaymath}
 \widehat G_n^{(p)}(\cdot)=\frac{M_n^{(p)}H(\cdot) + \sum_{j=1}^{K_n^{(p)}}n_{j,n}^{(p)}\delta_{\theta_{j,n}^{(p)}}(\cdot)}{M_n^{(p)}+n},
\end{displaymath}
such that, by letting $m_H(x)=\int_\Theta k(x\mid\theta)H(\ddr\theta)$, the induced posterior estimate of $f_{G}^{(X)}$ is given by
\begin{displaymath}
f_{p,n}^{(X)}(x)=\int_{\Theta} k(x\mid\theta) \widehat G_n^{(p)}(\ddr\theta)\\=\frac{M_n^{(p)}m_H(x)+ \sum_{j=1}^{K_n^{(p)}}n_{j,n}^{(p)}
k(x\mid\theta_{k,n}^{(p)})}{M_n^{(p)}+n}.
\end{displaymath}
Therefore, the reported estimate of $f_G^{(X)}$ is the average $f_{\mathrm{SMC},n}^{(X)}(x)=\sum_{p=1}^{P}W_n^{(p)} f_{p,n}^{(X)}(x)$, for $x\in\mathcal X$.

For the hierarchical model \eqref{eq:supp-unimodal-dp-model}, we also perform Markov chain Monte Carlo (MCMC) posterior inference using the batch implementation of Algorithm~8 in \citet{Nea(00)}. At each MCMC sweep, the observations $Y_{1:n}$ are grouped into $K$ currently occupied mixture components, with distinct atoms $\theta_1,\ldots,\theta_K$. Use $c_i=j$ to indicate that $Y_i$ is assigned to component $j$, whose size is $n_j=\sum_{i=1}^{n}I(c_i=j)$. To update $c_i$, the $i$-th observation is temporarily removed and reassigned either to an occupied component, with weight $n_{-i,j}\widetilde k(Y_i\mid\theta_j)$, or to one of $m=2$ auxiliary components, each with weight $\frac{M}{2}\widetilde k(Y_i\mid\theta)$. If removing $i$ eliminates a singleton component, its former atom is used as one of the two auxiliary atoms; otherwise, both auxiliary atoms are drawn from $H$. After updating all allocations, each occupied atom is updated by one random-walk Metropolis--Hastings step, using proposal standard deviations $0.20$ for the location and $0.10$ for the variance; proposals outside $\Theta$ are rejected. Finally, the strength parameter $M$ is updated by one Escobar--West step. We run $5{,}000$ MCMC sweeps, discard the first $2{,}500$, and retain every tenth subsequent sweep, obtaining $S=250$ states. For the $s$-th retained state, the posterior estimate of the mixing distribution $G$ is
\begin{displaymath}
 \widehat G^{(s)}_{n}(\cdot)= \frac{M^{(s)}H(\cdot)+ \sum_{j=1}^{K^{(s)}}n_j^{(s)} \delta_{\theta_j^{(s)}}(\cdot)}{M^{(s)}+n},
 \end{displaymath}
such that, by letting $m_H(x)=\int_\Theta k(x\mid\theta)H(\ddr\theta)$, the induced posterior estimate of $f_{G}^{(X)}$ is given by
 \begin{displaymath}
f_{s,n}^{(X)}(x)=\int_{\Theta}k(x\mid\theta)\widehat G^{(s)}_{n}(\ddr\theta)=\frac{M^{(s)}m_H(x)+ \sum_{j=1}^{K^{(s)}}n_j^{(s)} k(x\mid\theta_j^{(s)})}{M^{(s)}+n}.
 \end{displaymath}
Therefore, the reported estimate of $f_{G}^{(X)}$ is the average $f_{\mathrm{MCMC},n}^{(X)}(x)=\frac{1}{S}\sum_{s=1}^{S}f_{s,n}^{(X)}(x)$, for $x\in\mathcal{X}$.

The comparison between the quasi-Bayesian and Bayesian approaches is performed for sample size $n\in\{250,500,750,1000\}$, separately for Laplace and Gaussian noise distributions with $\sigma_{l}=\sigma_{g}=0.25$. To account for order dependence in the observations $Y_{i}$'s, results obtained with Newton's and SMC algorithms are averaged over the same $R_n=n/10$ permutations.  Instead Algorithm~8 is run once on each full dataset, as it is designed to work in batch.

In addition to quasi-Bayes and Bayes estimates, we record total computational work and CPU time.  A unit of work is one evaluation of the convolution kernel.  Newton's algorithm evaluates the kernel at all $20,050$ grid points for every incoming observation, and therefore has exactly $20,050n$ work units per pass and constant work per update. For the SMC algorithm, the work required by the $i$-th observation is $\sum_{p=1}^{500}\bigl(K_{i-1}^{(p)}+1\bigr)$, where $K_{i-1}^{(p)}$ is the number of occupied components in particle $p$ before processing the $i$-th observation.  For Algorithm~8, the work count includes the convolved-likelihood evaluations used in both the allocation updates and the Metropolis--Hastings updates, accumulated over all $5,000$ sweeps.  For Newton's and SMC algorithms the plotted total work and CPU time correspond to one sequential pass, which is averaged over all the permutations considered, rather than to the aggregate cost of computing the order-averaged estimate; for Algorithm~8 they correspond to the complete batch run of the algorithm. CPU time is measured around the updating operations, excluding file input/output and figure creation. Figures~\ref{fig:supp-unimodal-bayes-laplace}--\ref{fig:supp-unimodal-bayes-gaussian} summarize the comparison between the quasi-Bayesian and Bayesian approaches. The first row of these figures displays the estimates of $f_{G}^{(X)}$ obtained with Newton's algorithm, the SMC algorithm, and Algorithm~8; the second and third rows report total work and total CPU time, respectively.

\begin{figure}[t]
 \centering
 \includegraphics[
  width=1\textwidth,
  trim=20 15 20 15,
  clip
]{unimodal_bayes_comparison_laplace.png}
 \caption{\footnotesize{Unimodal example, with Laplace noise ($\sigma_l=0.25$): comparison between  quasi-Bayesian and Bayesian approaches.}}
 \label{fig:supp-unimodal-bayes-laplace}
\end{figure}

\begin{figure}[t]
 \centering
  \includegraphics[
  width=1\textwidth,
  trim=20 15 20 15,
  clip
]{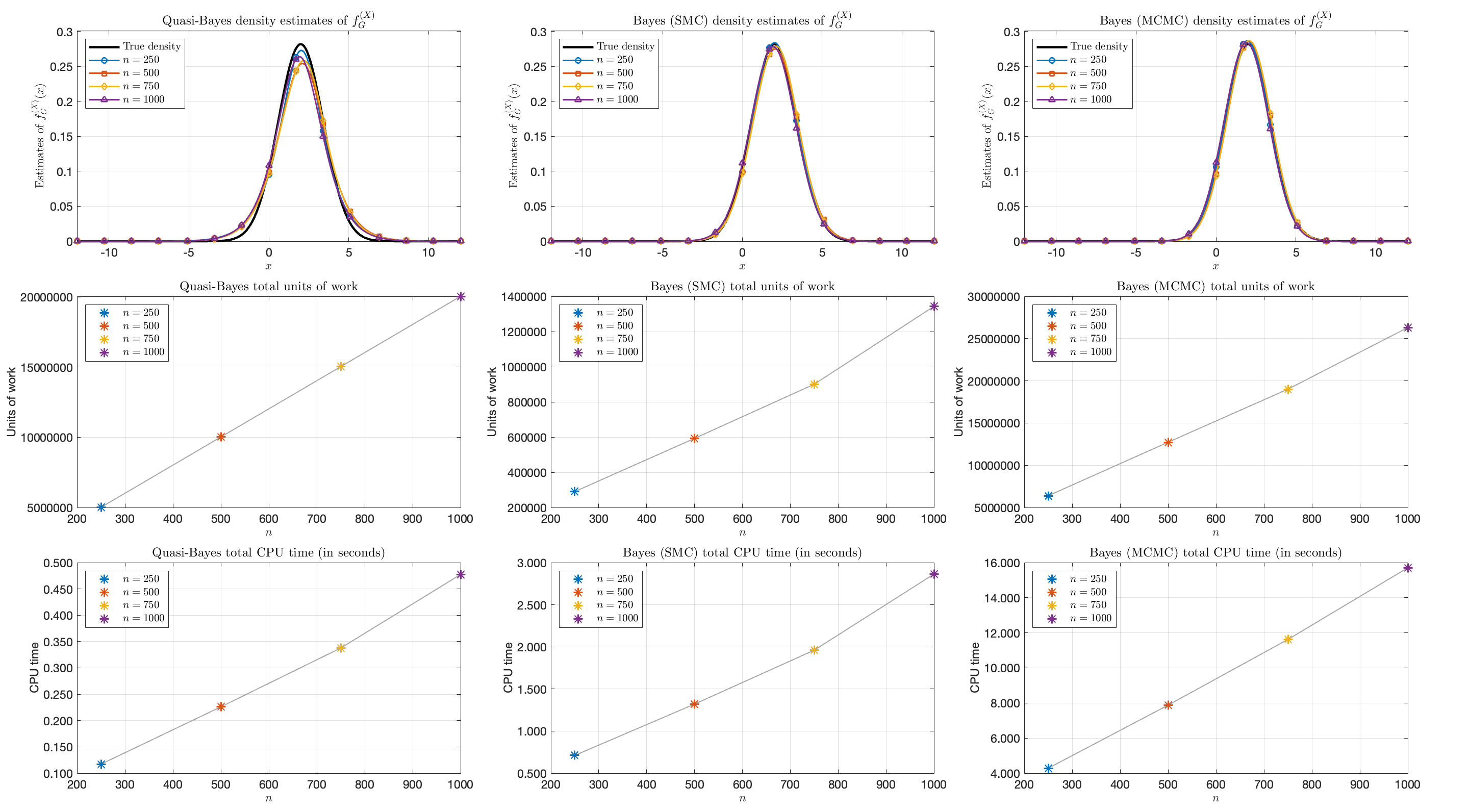}
 \caption{\footnotesize{Unimodal example, with Gaussian noise ($\sigma_g=0.25$): comparison between quasi-Bayesian and Bayesian approaches.}}
 \label{fig:supp-unimodal-bayes-gaussian}
\end{figure}

\subsubsection{Bimodal example I}\label{sec:supp-bimodal1}

\paragraph{Synthetic-data generation.}
For each $n\in\{250,500,750,1000\}$, we generate random variables $X_1,\ldots,X_n$ i.i.d. according to the density $f^{(X)}(\cdot)
 =0.3\,\phi(\cdot\mid-1,2)+0.7\,\phi(\cdot\mid3,1.5)$, where $\phi(\cdot\mid\mu,\sigma^2)$ denotes the Gaussian density with mean $\mu$ and variance $\sigma^2$. Using the mixture representation \eqref{eq:mixture}, we write $f^{(X)}$ as $f_{G^*}^{(X)}$, with $G^*=0.3\,\delta_{(-1,2)}+0.7\,\delta_{(3,1.5)}$ the true mixing distribution. We consider ordinary-smooth and super-smooth distributions for the noise random variables $Z_{i}$'s. In the ordinary-smooth case, $Z_i\stackrel{\mathrm{iid}}{\sim}\mathrm{Laplace}(0,b_l)$ with $b_l=\frac{\sigma_l}{\sqrt{2}}$ and standard deviation $\sigma_l\in\{0.25,0.50\}$; in the super-smooth case, $Z_i\stackrel{\mathrm{iid}}{\sim}N(0,\sigma_g^2)$ with standard deviation $\sigma_g\in\{0.25,0.50\}$. The  $Z_{i}$'s are independent of the $X_{i}$'s, and the observations are modeled as follows:
\begin{displaymath}
 Y_i=X_i+Z_i,\qquad i=1,\ldots,n.
\end{displaymath}
For each $n$, the same realization of $Y_{1:n}=(Y_1,\ldots,Y_n)$ is used for all methods under comparison.

\paragraph{Quasi-Bayesian estimation and uncertainty quantification}
We use the same implementation of Section~\ref{sec:supp-unimodal-newton}: the parameter space, numerical grids, initial distribution, learning rate, trapezoidal quadrature, renormalization, number of permutations, and random seed are unchanged. For the $r$-th permutation, let $\widetilde G_{n}^{(r)}$ denote the resulting Newton's mixing-distribution estimate, $r=1,\ldots,R_{n}$. Then, the order-averaged estimate is defined as
\begin{equation}\label{mode_estim_bim1}
\overline  G_{n}=\frac{1}{R_n}\sum_{r=1}^{R_n}\widetilde G_{n}^{(r)},
\end{equation}
which, following \eqref{eq:est_unim}, induces the quasi-Bayes estimate $f_{\overline  G_n}^{(X)}$ of $f_{G}^{(X)}$. The quasi-Bayes asymptotic credible intervals and bands for $f_{G}^{(X)}$ are also computed directly from $\overline  G_{n}$, following~\eqref{eq:ci_unim}--\eqref{eq:cb_unim}. Figures~\ref{fig:supp-bimodal1-newton-laplace-small}--\ref{fig:supp-bimodal1-newton-gaussian-small} report the quasi-Bayes estimates of $f_{G}^{(X)}$, and the asymptotic credible intervals and bands. The figures consist of two columns, which correspond to the noise standard deviations $0.25$ and $0.50$, respectively.  The first row of each figure compares the true density $f_{G^*}^{(X)}$ with the quasi-Bayes density estimates, while the second row displays the credible intervals, and the third row displays the credible bands on the interval $I=[-4,6]$.

\begin{figure}[t]
 \centering
\includegraphics[
  width=1\textwidth,
  trim=20 15 20 0,
  clip
]{bimodal1_laplace_small_n.png}
 \caption{\footnotesize{Bimodal example I, with Laplace noise: estimates, credible intervals and bands.}}
 \label{fig:supp-bimodal1-newton-laplace-small}
\end{figure}

\begin{figure}[t]
 \centering
 \includegraphics[
  width=1\textwidth,
  trim=20 15 20 0,
  clip
]{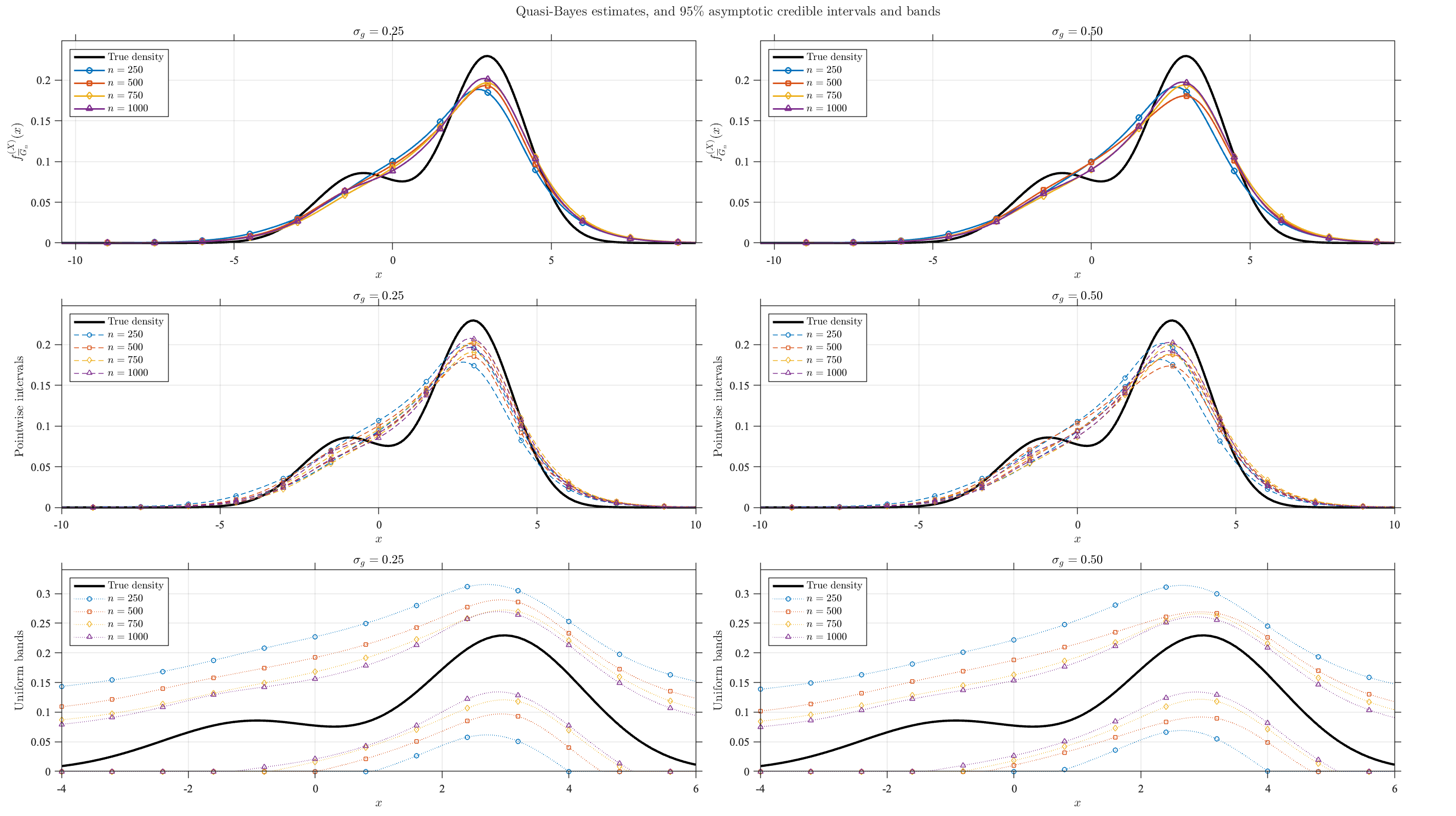}
 \caption{\footnotesize{Bimodal example I, with Gaussian noise: estimates, credible intervals and bands.}}
 \label{fig:supp-bimodal1-newton-gaussian-small}
\end{figure}

\paragraph{Monte Carlo uncertainty quantification}
The Monte Carlo procedure is the same as the procedure presented in Section~\ref{sec:supp-unimodal-mc}; it relies on the continuation of the quasi-Bayesian learning process \eqref{eq:Y}--\eqref{eq:newton}, initialized at the order-averaged mixing distribution $\overline  G_{n}$ in \eqref{mode_estim_bim1}.  The continuation length, number of Monte Carlo replications, and random seed are unchanged. Further, the Monte Carlo estimates, credible intervals and bands are constructed as in \eqref{eq:unimodal-mc-est}--\eqref{eq:supp-unimodal-mc-band}. Figures~\ref{fig:supp-bimodal1-mc-laplace-025}--\ref{fig:supp-bimodal1-mc-gaussian-050} display Monte Carlo credible intervals and bands, which are compared with the corresponding quasi-Bayes asymptotic credible intervals and bands.

\begin{figure}[t]
 \centering
 \includegraphics[
  width=0.9\textwidth,
  trim=20 15 20 0,
  clip
]{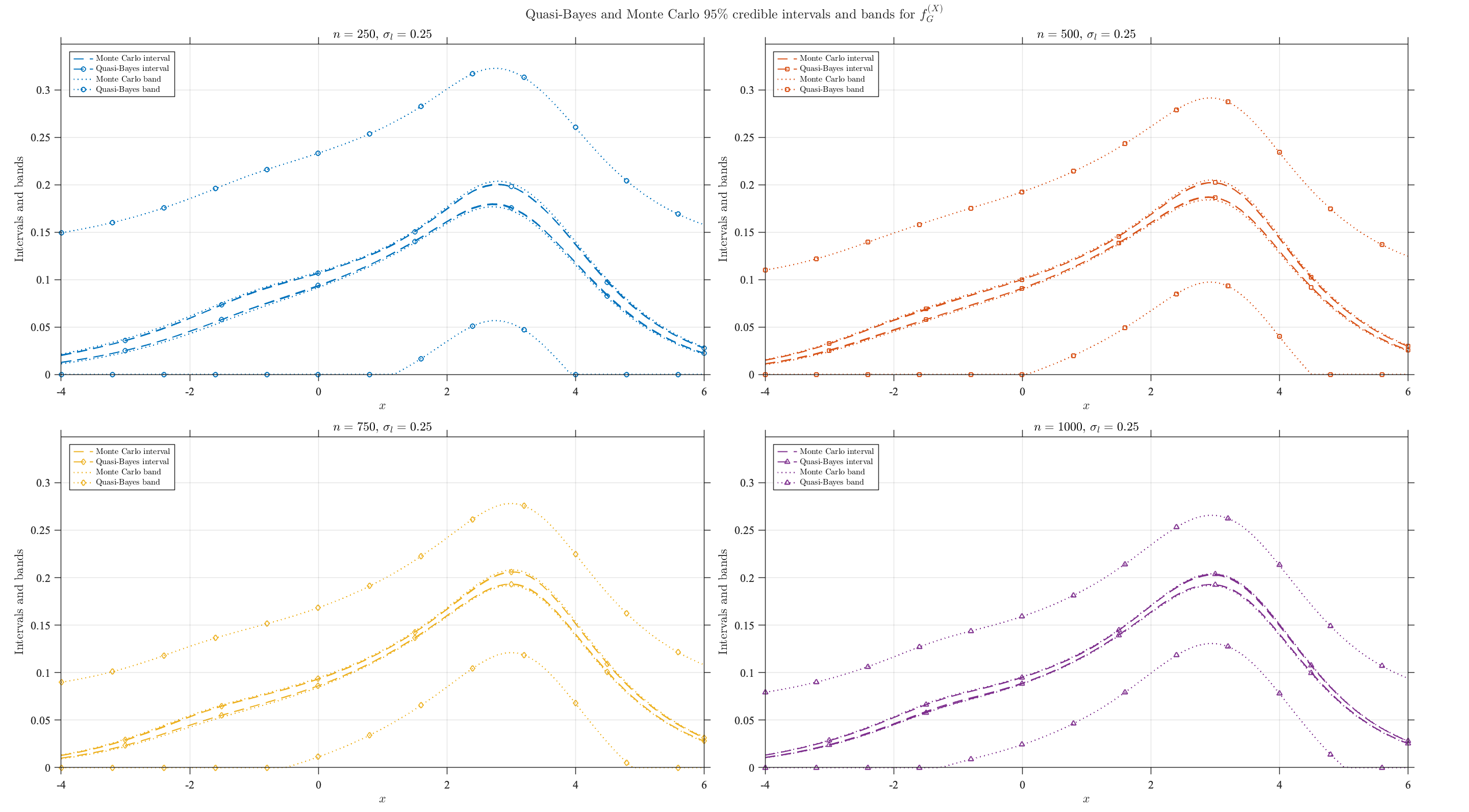}
 \caption{\footnotesize{Bimodal example I, with Laplace noise ($\sigma_{l}=0.25$): quasi-Bayes versus Monte Carlo credible intervals and bands.}}
 \label{fig:supp-bimodal1-mc-laplace-025}
\end{figure}

\begin{figure}[t]
 \centering
  \includegraphics[
  width=0.9\textwidth,
  trim=20 15 20 0,
  clip
]{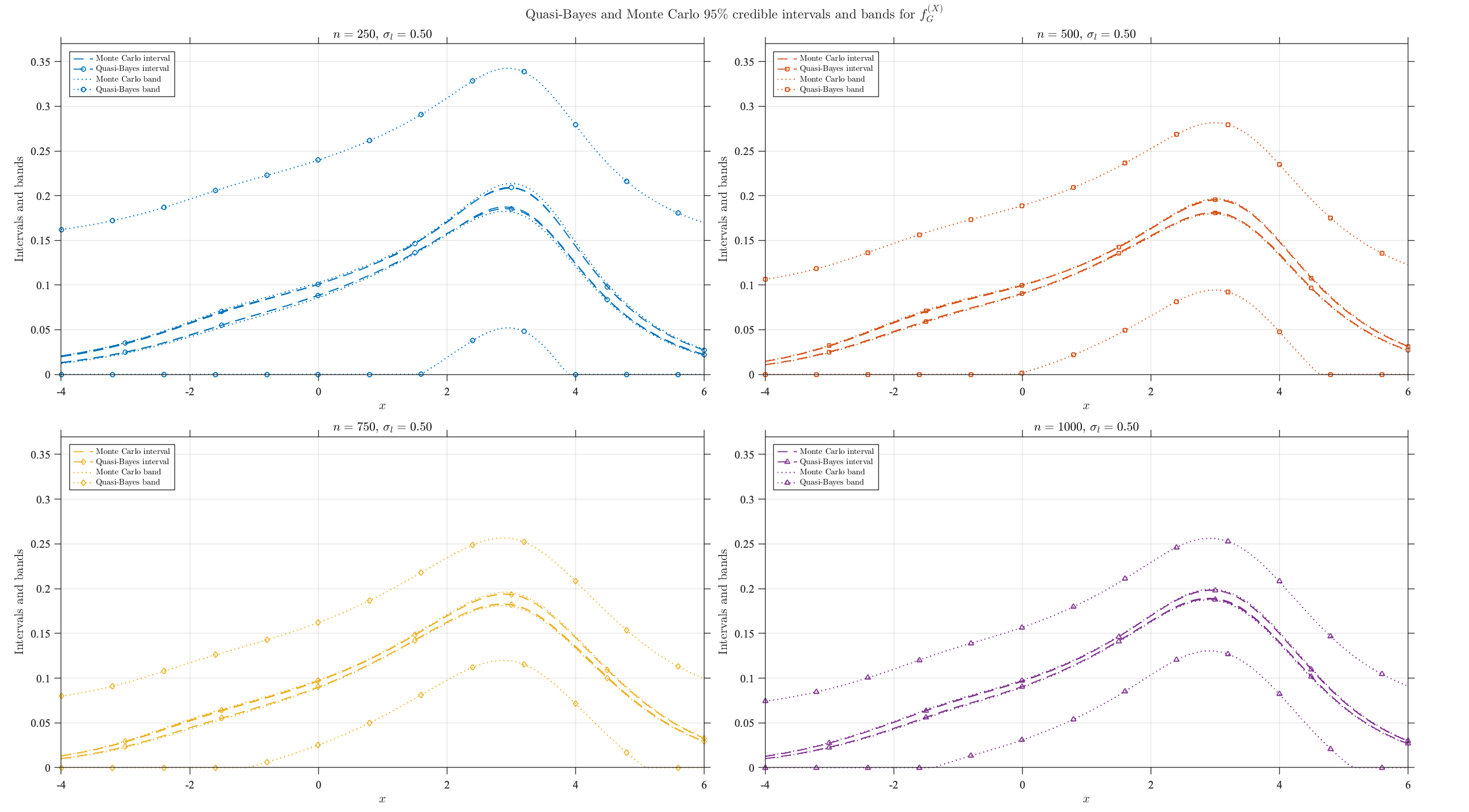}
 \caption{\footnotesize{Bimodal example I, with Laplace noise ($\sigma_{l}=0.50$): quasi-Bayes versus Monte Carlo credible intervals and bands.}}
 \label{fig:supp-bimodal1-mc-laplace-050}
\end{figure}

\begin{figure}[t]
 \centering
 \includegraphics[
  width=0.9\textwidth,
  trim=20 15 20 0,
  clip
]{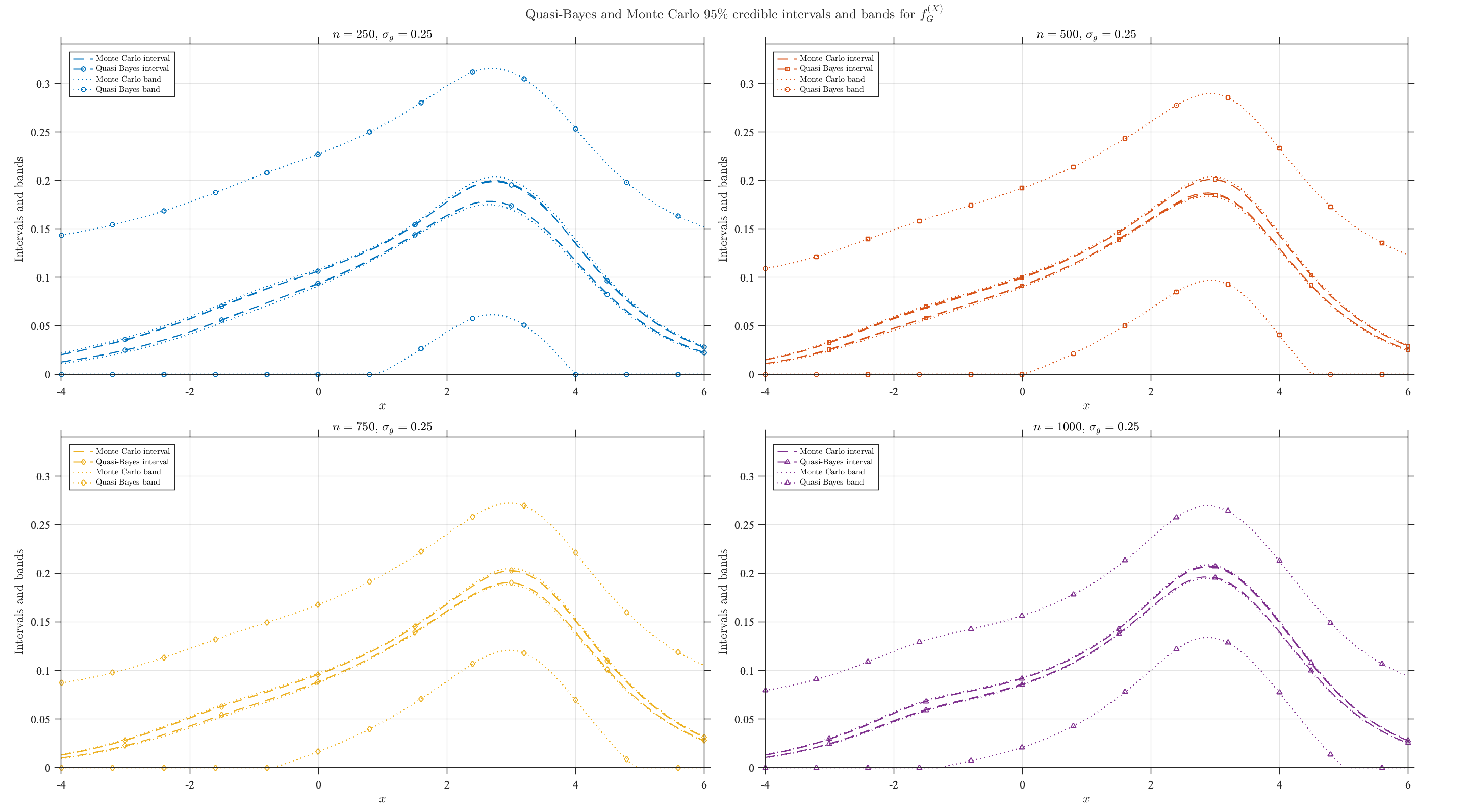}
 \caption{\footnotesize{Bimodal example I, with Gaussian noise ($\sigma_{g}=0.25$): quasi-Bayes versus Monte Carlo credible intervals and bands.}}
 \label{fig:supp-bimodal1-mc-gaussian-025}
\end{figure}

\begin{figure}[t]
 \centering
 \includegraphics[
  width=0.9\textwidth,
  trim=20 15 20 0,
  clip
]{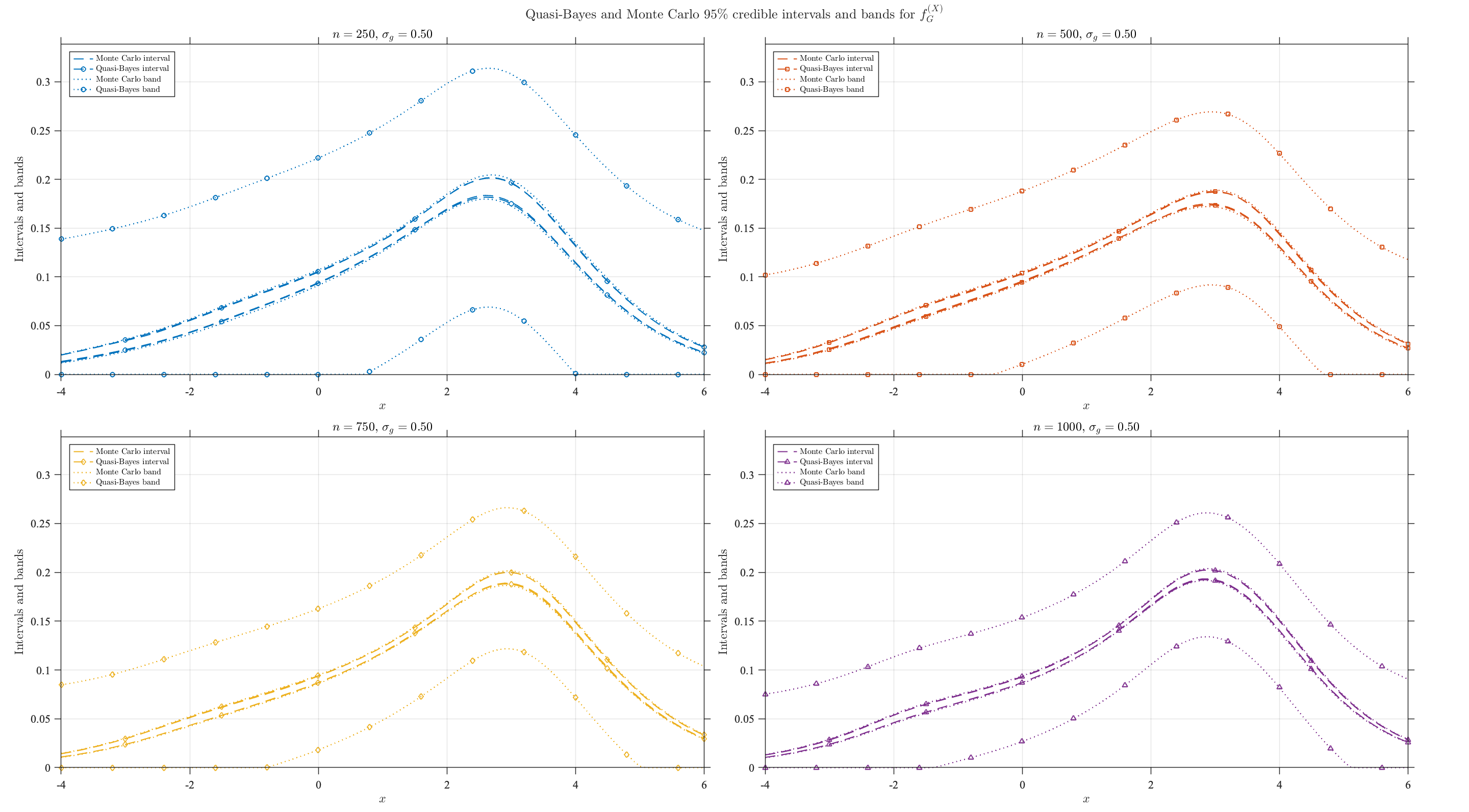}
 \caption{\footnotesize{Bimodal example I, with Gaussian noise ($\sigma_{g}=0.50$): quasi-Bayes versus Monte Carlo credible intervals and bands.}}
 \label{fig:supp-bimodal1-mc-gaussian-050}
\end{figure}

\paragraph{Dirichlet process mixture model benchmarks}
The Dirichlet process Gaussian-mixture model \eqref{eq:supp-unimodal-dp-model}, the SMC algorithm, Algorithm~8, tuning parameters, and computational-work definitions are identical to those described in Section~\ref{sec:supp-unimodal-bayes}.  Furthermore, as in the unimodal example, Newton's and SMC algorithms are averaged over the same $R_{n}$ permutations, whereas Neal's Algorithm~8 is applied once to each complete dataset.  Figures~\ref{fig:supp-bimodal1-bayes-laplace}--\ref{fig:supp-bimodal1-bayes-gaussian} summarize the comparison between the quasi-Bayesian and Bayesian approaches. Their first row displays the estimates of $f_{G}^{(X)}$ obtained with Newton's algorithm, the SMC algorithm, and Algorithm~8; the second and third rows report total work and total CPU time, respectively.

\begin{figure}[t]
 \centering
 \includegraphics[
  width=1\textwidth,
  trim=20 15 20 15,
  clip
]{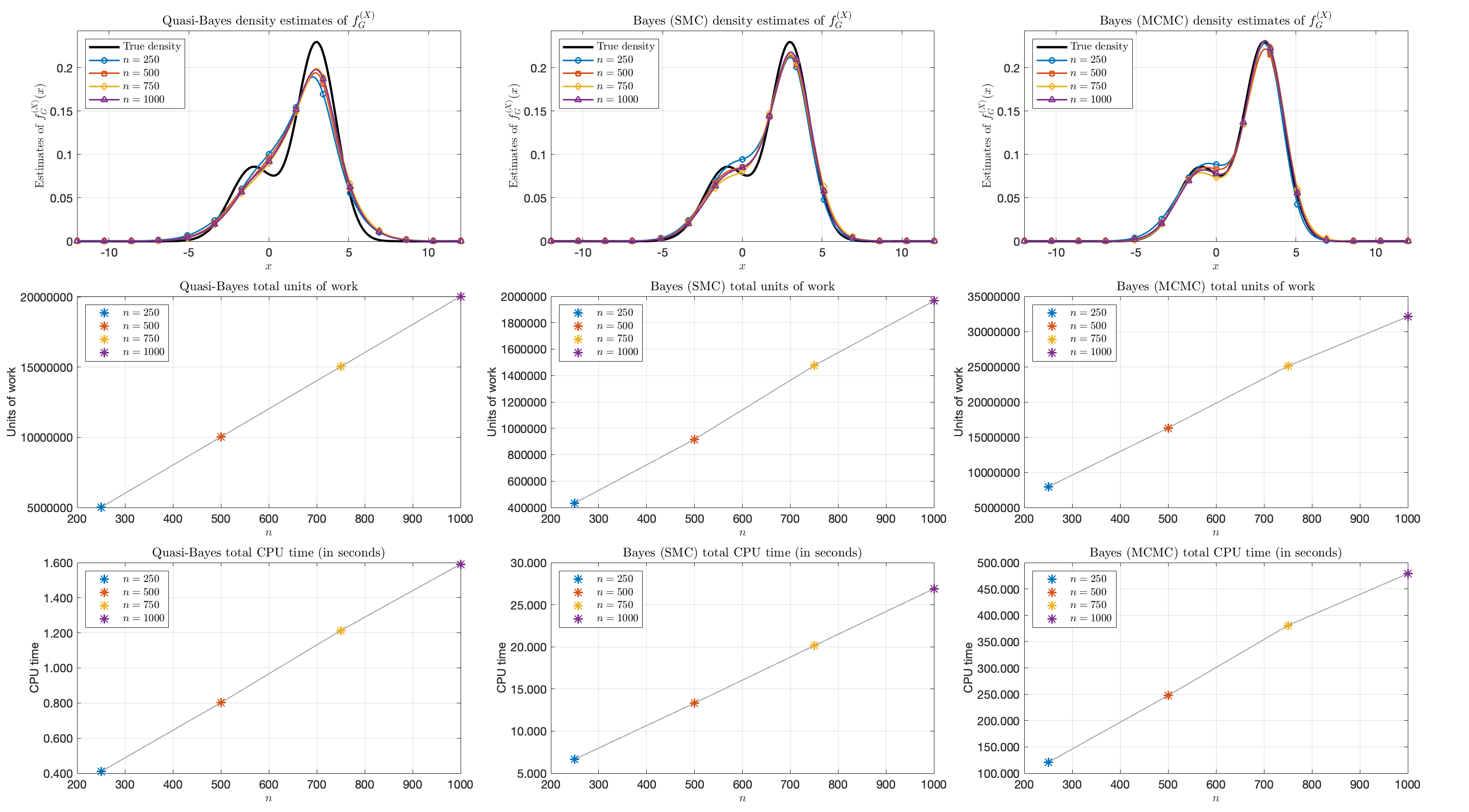}
 \caption{\footnotesize{Bimodal example I, with Laplace noise ($\sigma_l=0.25$): comparison between quasi-Bayesian and Bayesian approaches.}}
 \label{fig:supp-bimodal1-bayes-laplace}
\end{figure}

\begin{figure}[t]
 \centering
  \includegraphics[
  width=1\textwidth,
  trim=20 15 20 15,
  clip
]{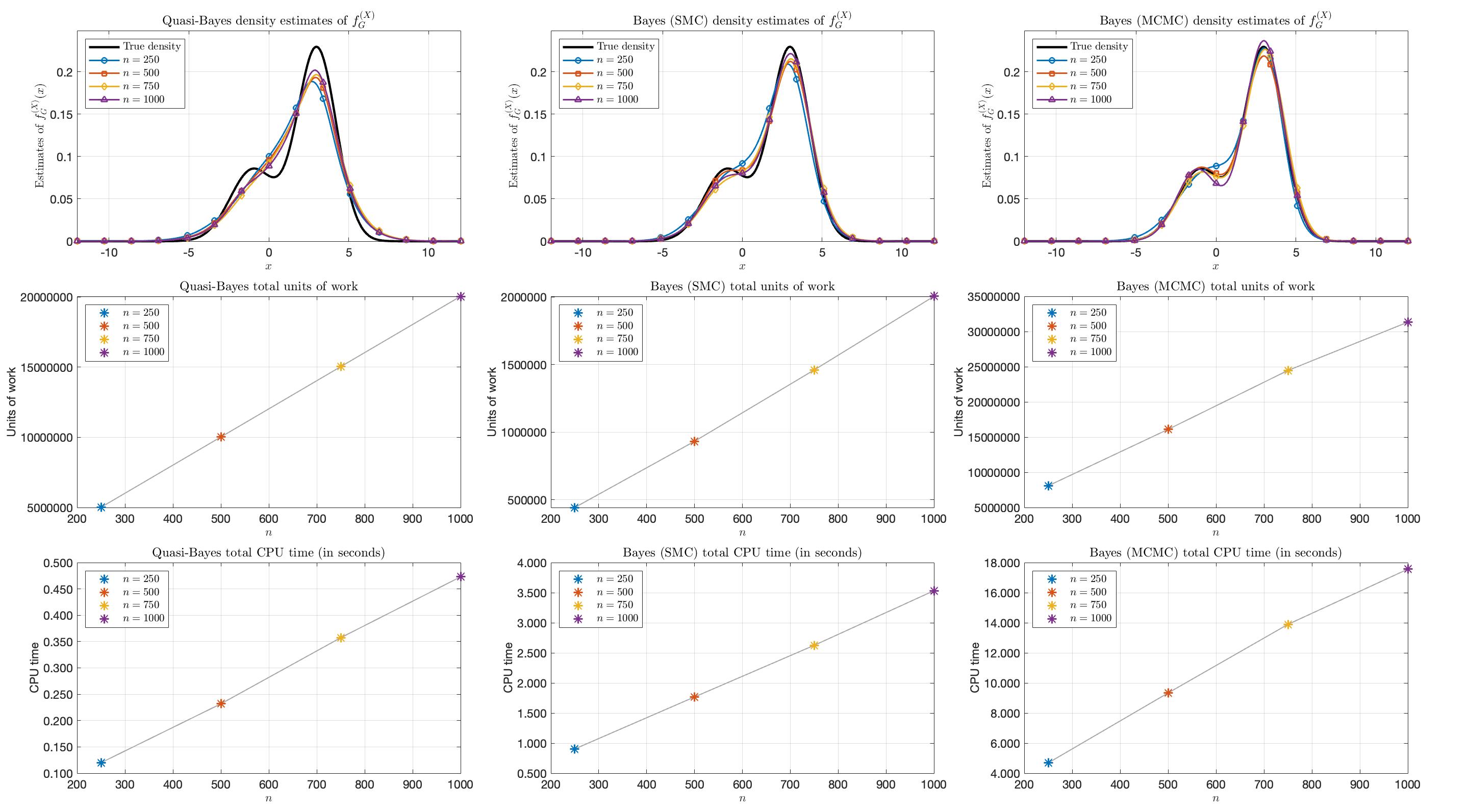}
 \caption{\footnotesize{Bimodal example I, with Gaussian noise ($\sigma_g=0.25$): comparison between quasi-Bayesian and Bayesian approaches.}}
 \label{fig:supp-bimodal1-bayes-gaussian}
\end{figure}

\subsubsection{Bimodal example II}\label{sec:supp-bimodal2}

\paragraph{Synthetic-data generation}
For each $n\in\{250,500,750,1000\}$, we generate random variables $X_1,\ldots,X_n$ i.i.d. according to the density $f^{(X)}(\cdot)=0.5\,\phi(\cdot\mid-2,1)+0.5\,\phi(\cdot\mid4,1)$, where $\phi(\cdot\mid\mu,\sigma^2)$ denotes the Gaussian density with mean $\mu$ and variance $\sigma^2$. Using the mixture representation \eqref{eq:mixture}, we write $f^{(X)}$ as $f_{G^*}^{(X)}$, with $G^*=0.5\,\delta_{(-2,1)}+0.5\,\delta_{(4,1)}$ the true mixing distribution. We consider ordinary-smooth and super-smooth distributions for the noise random variables $Z_{i}$'s. In the ordinary-smooth case, $Z_i\stackrel{\mathrm{iid}}{\sim}\mathrm{Laplace}(0,b_l)$ with $b_l=\frac{\sigma_l}{\sqrt{2}}$ and standard deviation $\sigma_l\in\{0.25,0.50\}$; in the super-smooth case, $Z_i\stackrel{\mathrm{iid}}{\sim}N(0,\sigma_g^2)$ with standard deviation $\sigma_g\in\{0.25,0.50\}$. The  $Z_{i}$'s are independent of the $X_{i}$'s, and the observations are modeled as follows:
\begin{displaymath}
 Y_i=X_i+Z_i,\qquad i=1,\ldots,n.
\end{displaymath}
For each $n$, the same realization of $Y_{1:n}=(Y_1,\ldots,Y_n)$ is used for all methods under comparison.

\paragraph{Quasi-Bayesian estimation and uncertainty quantification}
We use the same implementation of Section~\ref{sec:supp-unimodal-newton}: the parameter space, numerical grids, initial distribution, learning rate, trapezoidal quadrature, renormalization, number of permutations, and random seed are unchanged. For the $r$-th permutation, let $\widetilde G_{n}^{(r)}$ denote the resulting Newton's mixing-distribution estimate, $r=1,\ldots,R_{n}$.  Then, the order-averaged estimate is defined as
\begin{equation}\label{mode_estim_bim2}
 \overline  G_{n}=\frac{1}{R_n}\sum_{r=1}^{R_n}\widetilde G_{n}^{(r)},
\end{equation}
which, following \eqref{eq:est_unim}, induces the quasi-Bayes estimate $f_{\overline  G_n}^{(X)}$ of $f_{G}^{(X)}$. The quasi-Bayes asymptotic credible intervals and bands for $f_{G}^{(X)}$ are also computed directly from $\overline  G_{n}$, following~\eqref{eq:ci_unim}--\eqref{eq:cb_unim}. Figures~\ref{fig:supp-bimodal2-newton-laplace-small}--\ref{fig:supp-bimodal2-newton-gaussian-small} report the quasi-Bayes estimates of $f_{G}^{(X)}$, and the asymptotic credible intervals and bands. The figures consist of two columns, which correspond to the noise standard deviations $0.25$ and $0.50$, respectively.  The first row of each figure compares the true density $f_{G^*}^{(X)}$ with the quasi-Bayes density estimates, while the second row displays the credible intervals, and the third row displays the credible bands on the interval $I=[-4,6]$.

\begin{figure}[t]
 \centering
\includegraphics[
  width=1\textwidth,
  trim=20 15 20 0,
  clip
]{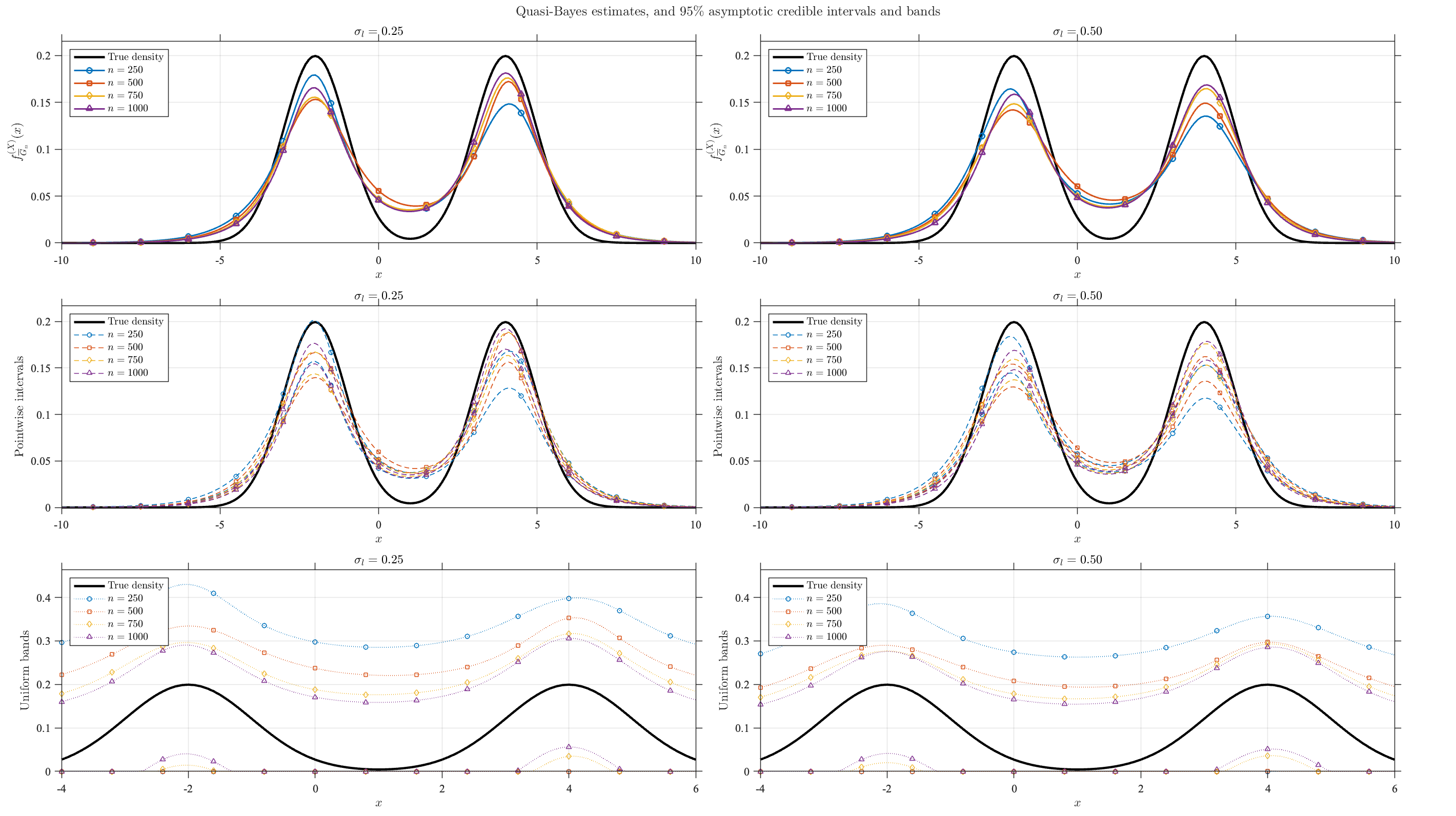}
 \caption{\footnotesize{Bimodal example II, with Laplace noise: estimates, credible intervals and bands.}}
 \label{fig:supp-bimodal2-newton-laplace-small}
\end{figure}

\begin{figure}[t]
 \centering
 \includegraphics[
  width=1\textwidth,
  trim=20 15 20 0,
  clip
]{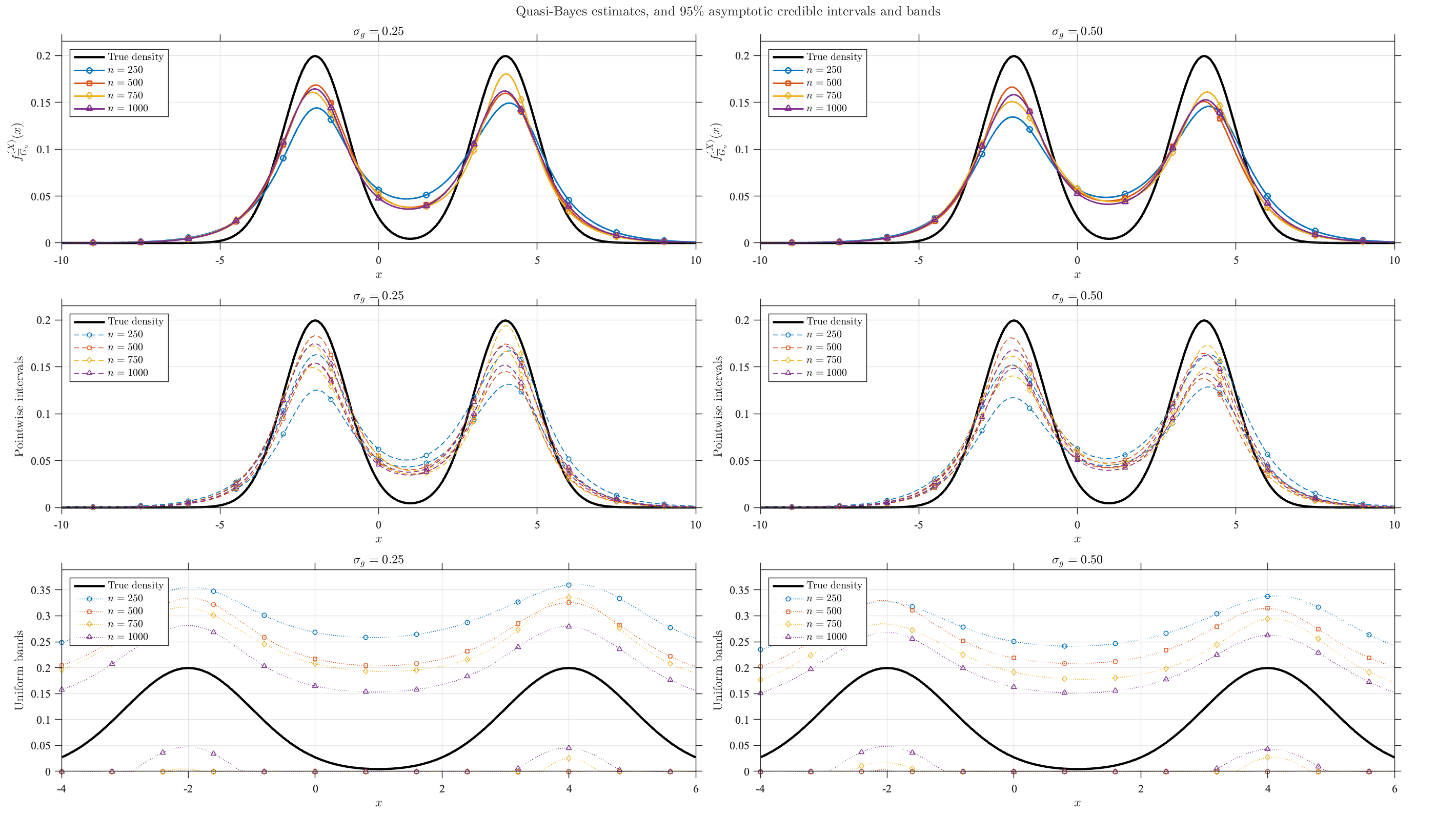}
 \caption{\footnotesize{Bimodal example II, with Gaussian noise: estimates, credible intervals and bands.}}
 \label{fig:supp-bimodal2-newton-gaussian-small}
\end{figure}

\paragraph{Monte Carlo uncertainty quantification}
The Monte Carlo procedure is the same as the procedure presented in Section~\ref{sec:supp-unimodal-mc}; it relies on the continuation of the quasi-Bayesian learning process \eqref{eq:Y}--\eqref{eq:newton}, initialized at the order-averaged mixing distribution $\overline  G_{n}$ in \eqref{mode_estim_bim2}. The continuation length, number of Monte Carlo replications, and random seed are unchanged. Further, the Monte Carlo estimates, credible intervals and bands are constructed as in \eqref{eq:unimodal-mc-est}--\eqref{eq:supp-unimodal-mc-band}. Figures~\ref{fig:supp-bimodal2-mc-laplace-025}--\ref{fig:supp-bimodal2-mc-gaussian-050} display Monte Carlo credible intervals and bands, which are compared with the corresponding quasi-Bayes asymptotic credible intervals and bands.

\begin{figure}[t]
 \centering
 \includegraphics[
  width=0.9\textwidth,
  trim=20 15 20 0,
  clip
]{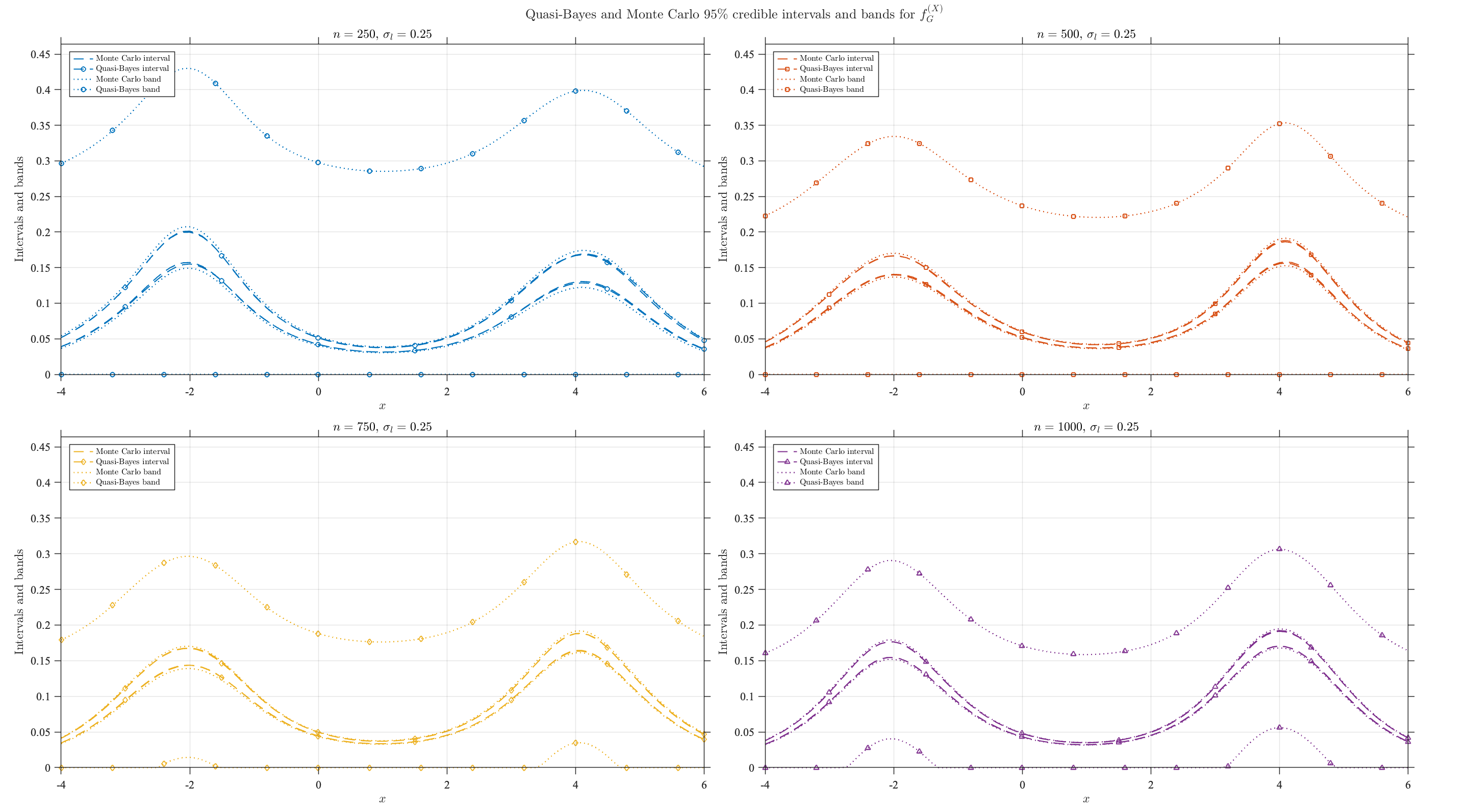}
 \caption{\footnotesize{Bimodal example II, with Laplace noise ($\sigma_l=0.25$): quasi-Bayes versus Monte Carlo credible intervals and bands.}}
 \label{fig:supp-bimodal2-mc-laplace-025}
\end{figure}

\begin{figure}[t]
 \centering
  \includegraphics[
  width=0.9\textwidth,
  trim=20 15 20 0,
  clip
]{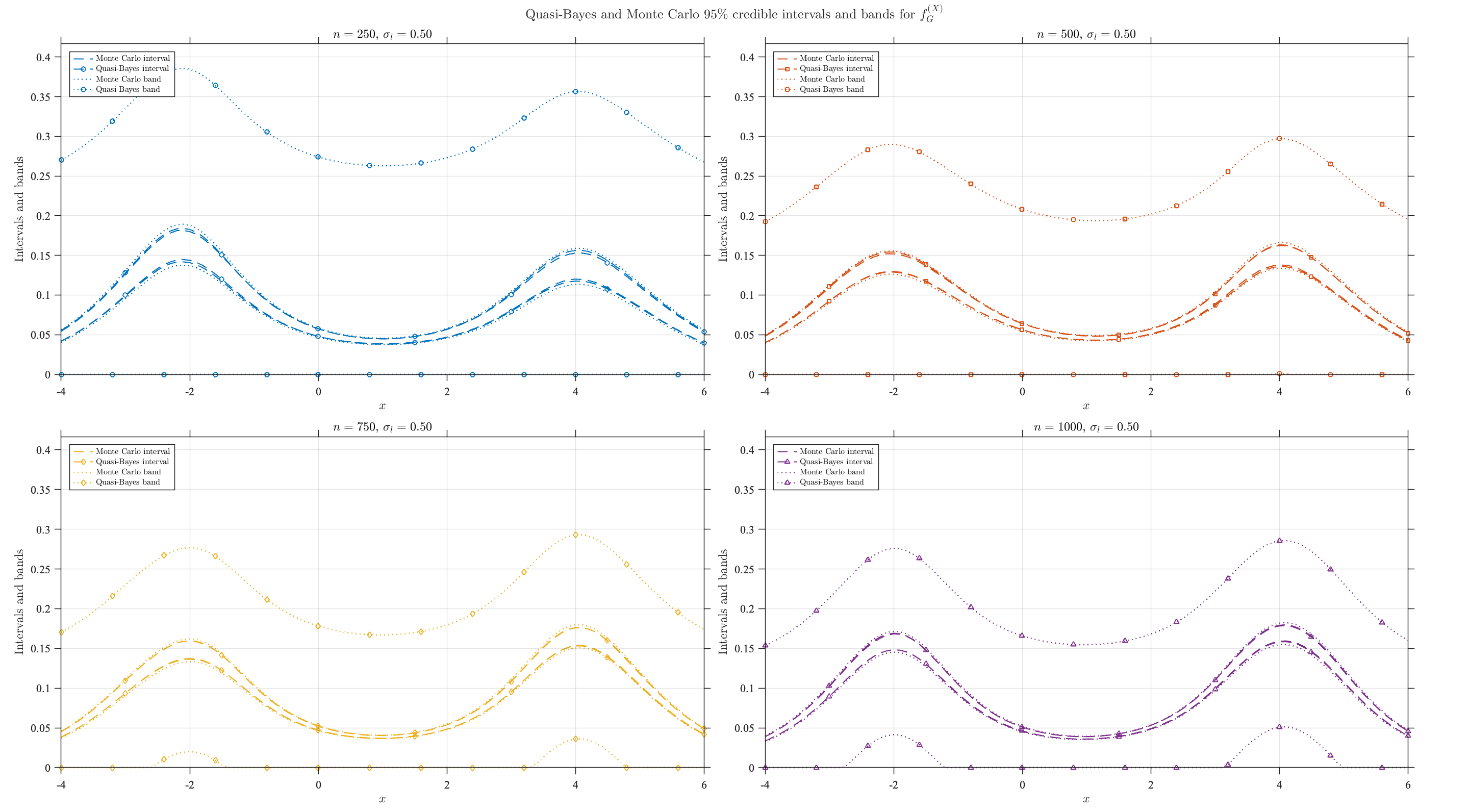}
 \caption{\footnotesize{Bimodal example II, with Laplace noise ($\sigma_l=0.50$): quasi-Bayes versus Monte Carlo credible intervals and bands.}}
 \label{fig:supp-bimodal2-mc-laplace-050}
\end{figure}

\begin{figure}[t]
 \centering
 \includegraphics[
  width=0.9\textwidth,
  trim=20 15 20 0,
  clip
]{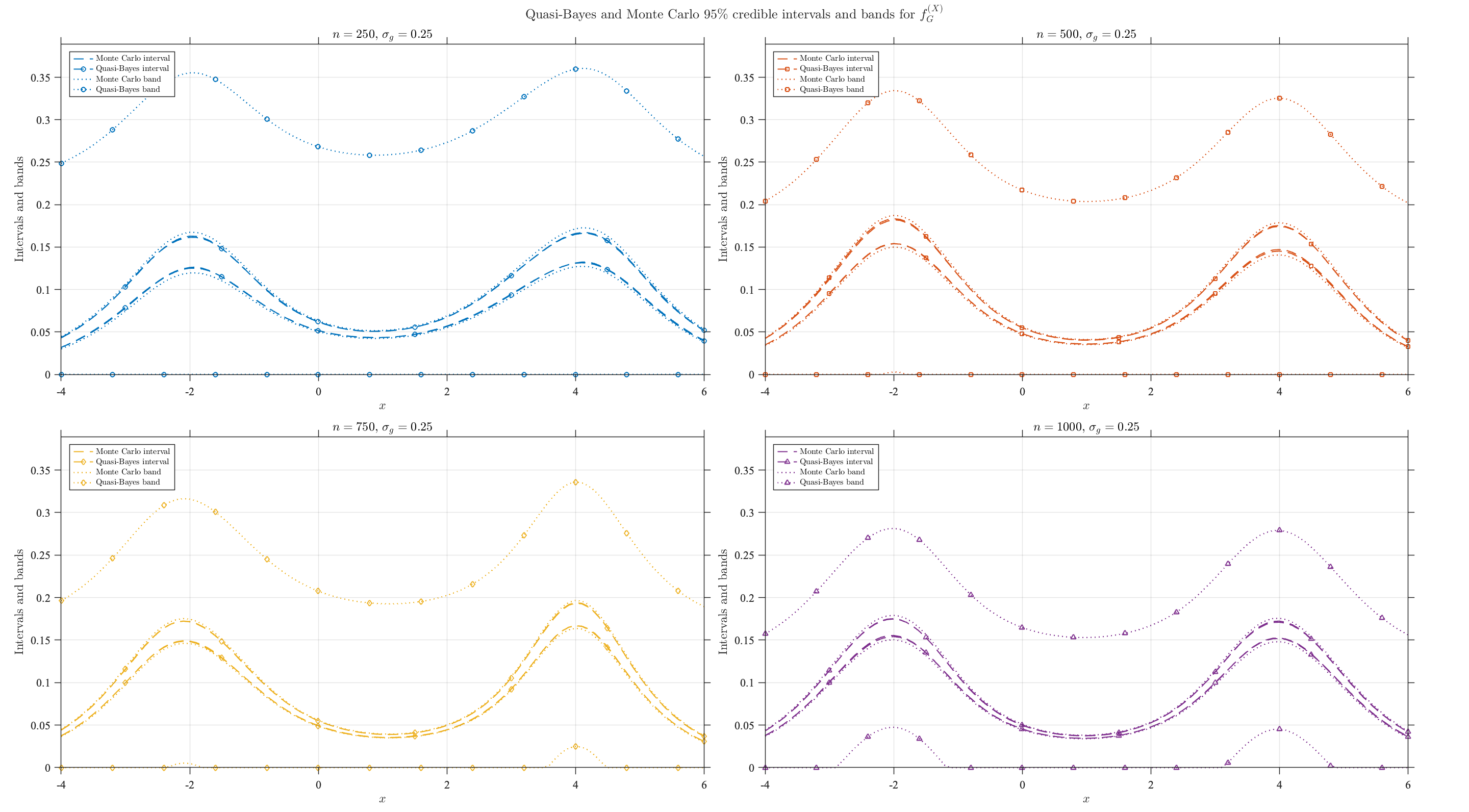}
 \caption{\footnotesize{Bimodal example II, with Gaussian noise ($\sigma_g=0.25$): quasi-Bayes versus Monte Carlo credible intervals and bands.}}
 \label{fig:supp-bimodal2-mc-gaussian-025}
\end{figure}

\begin{figure}[t]
 \centering
 \includegraphics[
  width=0.9\textwidth,
  trim=20 15 20 0,
  clip
]{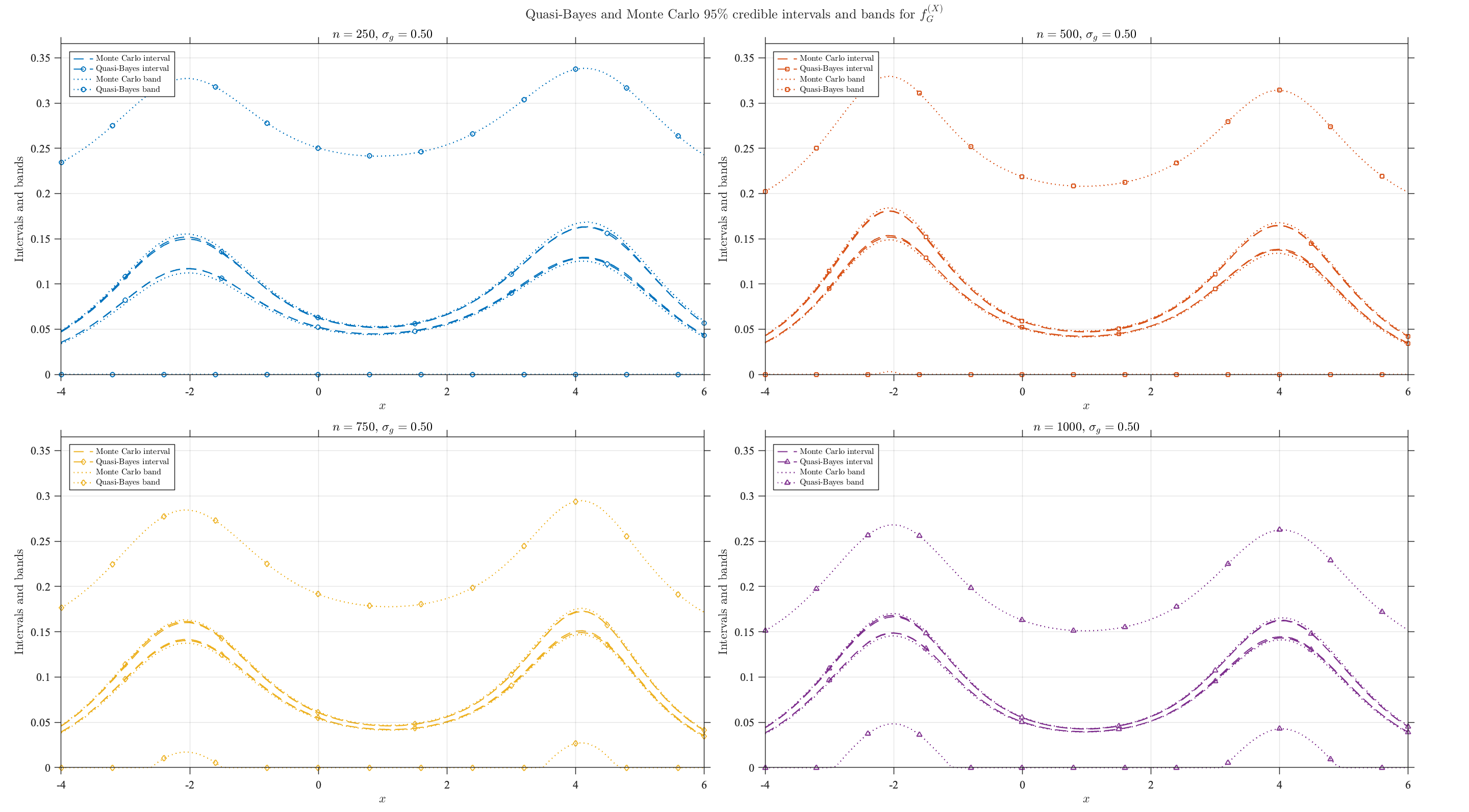}
 \caption{\footnotesize{Bimodal example II, with Gaussian noise ($\sigma_g=0.50$): quasi-Bayes versus Monte Carlo credible intervals and bands.}}
 \label{fig:supp-bimodal2-mc-gaussian-050}
\end{figure}

\paragraph{Dirichlet process mixture model benchmarks}
The Dirichlet process Gaussian-mixture model \eqref{eq:supp-unimodal-dp-model}, the SMC algorithm, Algorithm~8, tuning parameters, and computational-work definitions are identical to those described in Section~\ref{sec:supp-unimodal-bayes}.  Furthermore, as in the unimodal example, Newton's and SMC algorithms are averaged over the same $R_{n}$ permutations, whereas Neal's Algorithm~8 is applied once to each complete dataset.  Figures~\ref{fig:supp-bimodal2-bayes-laplace}--\ref{fig:supp-bimodal2-bayes-gaussian} summarize the comparison between the quasi-Bayesian and Bayesian approaches. Their first row displays the estimates of $f_{G}^{(X)}$ obtained with Newton's algorithm, the SMC algorithm, and Algorithm~8; the second and third rows report total work and total CPU time, respectively.

\begin{figure}[t]
 \centering
 \includegraphics[
  width=1\textwidth,
  trim=20 15 20 15,
  clip
]{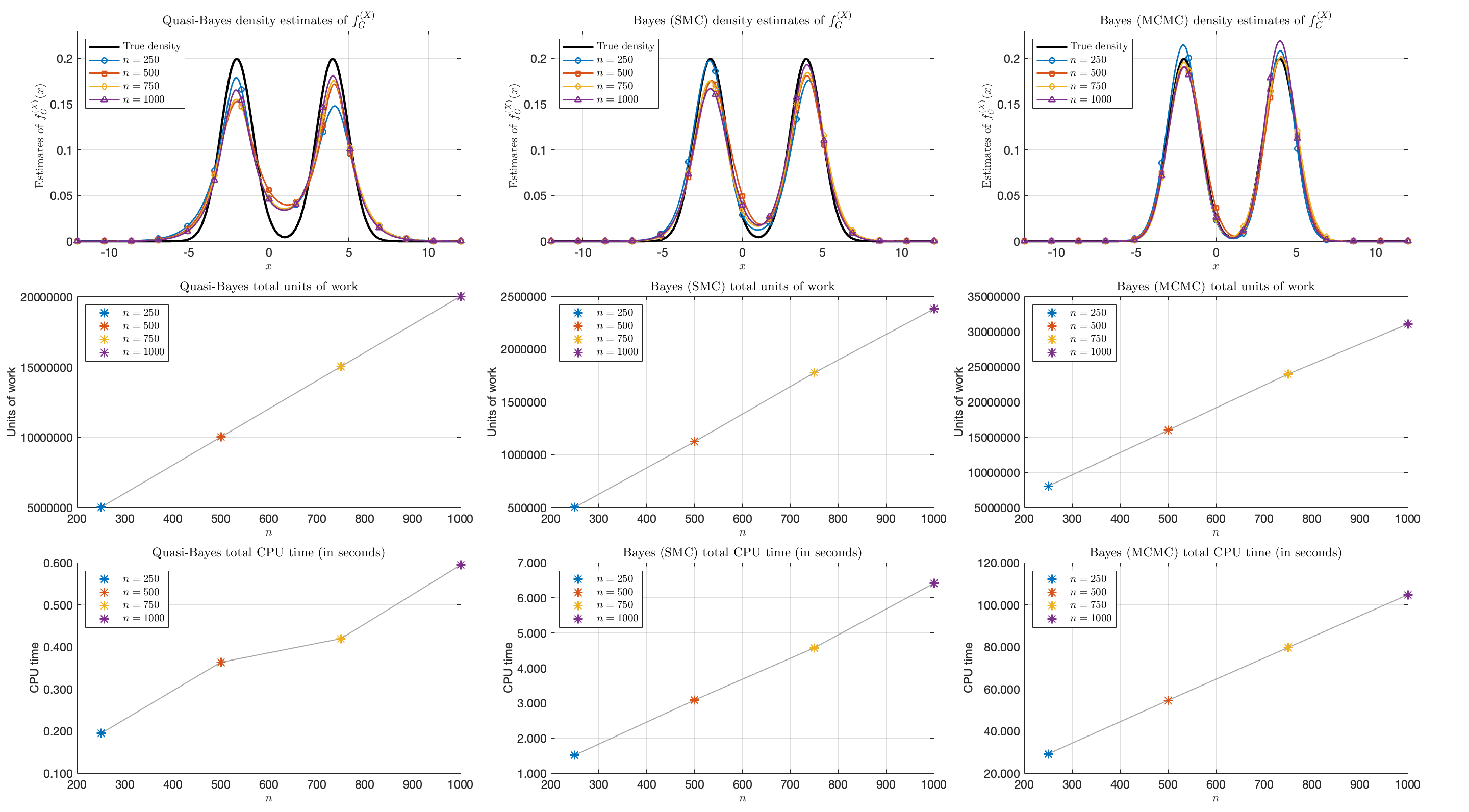}
 \caption{\footnotesize{Bimodal example II, with Laplace noise ($\sigma_l=0.25$): comparison between quasi-Bayesian and Bayesian approaches.}}
 \label{fig:supp-bimodal2-bayes-laplace}
\end{figure}

\begin{figure}[t]
 \centering
  \includegraphics[
  width=1\textwidth,
  trim=20 15 20 15,
  clip
]{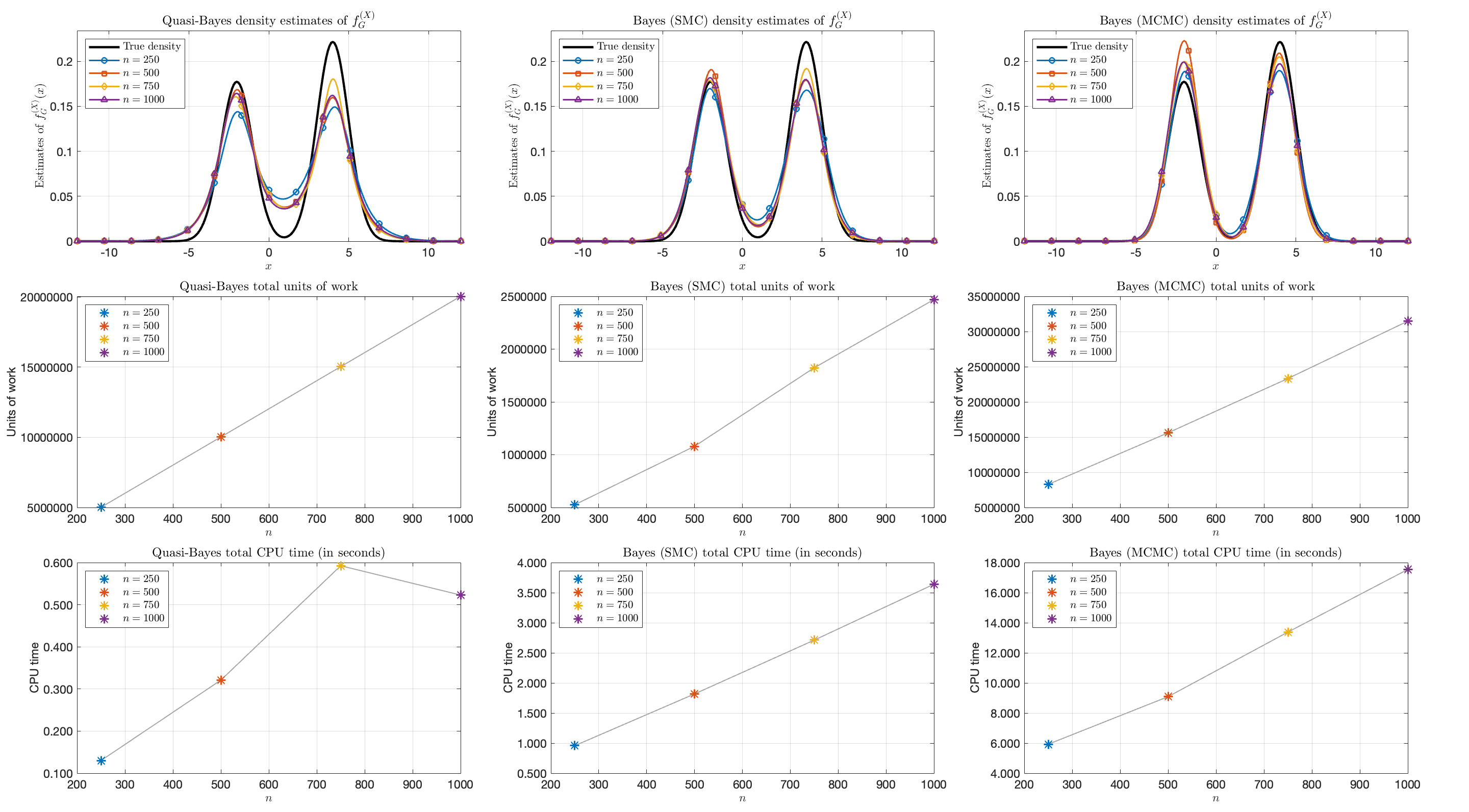}
 \caption{\footnotesize{Bimodal example II, with Gaussian noise ($\sigma_g=0.25$): comparison between quasi-Bayesian and Bayesian approaches.}}
 \label{fig:supp-bimodal2-bayes-gaussian}
\end{figure}


\subsection{Real-data analysis}\label{sec:real-data}

\subsubsection{Flow-cytometry data}\label{sec:supp-flow-cytometry}

In Section~\ref{sec61} we analyzed flow-cytometry data from the study of \citet{Tor(23)} on the expression of Brachyury in differentiating mouse embryonic stem cells. Brachyury is a mesodermal transcription factor whose expression varies across cells during differentiation. Flow cytometry provides a large number of single-cell fluorescence measurements by recording the optical signal generated as cells pass individually through the instrument. Each recorded event therefore corresponds to a distinct cell and is associated with both a fluorescence intensity and an acquisition time. The fluorescence measured on a labeled cell contains the reported signal associated with Brachyury expression together with autofluorescence and other background contributions. To characterize the latter, the experiment also includes a separate population of unlabeled cells. The labeled and unlabeled measurements are consequently obtained from different cells and cannot be matched event-by-event. In particular, an observation from the unlabeled population does not represent the specific background realization affecting any individual labeled-cell measurement; rather, the unlabeled population provides an independent sample from the common background distribution.

The data are available at \url{https://github.com/dsb-lab/scBayesDeconv.jl} and include fluorescence intensities and acquisition-time measurements. Following the organization of the dataset in \citet{Tor(23)}, the first column is used as the sample from the labeled population, whereas the sixth column is used as the autofluorescence control sample. The acquisition-time measurements are used to restore the order in which the labeled cells were recorded. We then process the observations sequentially and examine the estimates over the nested prefixes corresponding to the first $n\in\{250,\,1{,}000,\,5{,}000,\,10{,}000\}$ events. Let $Y_i$ denote the fluorescence intensity of the $i$-th labeled cell in acquisition order and assume the additive model $Y_i=X_i+Z_i$ for $i\geq1$, where $X_i$ is the reporter fluorescence and $Z_i$ is the autofluorescence and background contribution. The control measurements consist of $m\geq1$ independent samples, say $\widetilde Z_1,\ldots,\widetilde Z_m$, measured on unlabeled cells and not paired with the $Y_i$'s. 

The distributional assumptions on the $X_{i}$'s and $Z_{i}$'s, as well as on  $\widetilde Z_1,\ldots,\widetilde Z_m$,  are those specified in Section~\ref{sec61}. In particular, in the analysis of Section~\ref{sec61}, the random variables $Z_{i}$'s are i.i.d. according to a Gaussian distribution, i.e. $f_Z(z)=\phi(z\mid\mu_Z,\sigma_Z^2)$. Here, we investigate the sensitivity of the results to this specification of $f_{Z}$, by considering two additional working models for the noise distributions. More precisely, we consider a Laplace noise distribution, namely
\begin{displaymath}
f_Z(z)=\frac{1}{2b_Z}\exp\left\{-\frac{|z-\mu_Z|}{b_Z}\right\},
\end{displaymath}
and the four-component Gaussian-mixture noise distribution \citep{Tor(23)}, namely
\begin{displaymath}
f_Z(z)=\sum_{\ell=1}^{4}q_\ell\,\phi(z\mid\eta_\ell,\tau_\ell^2).
\end{displaymath}
With regards to the use of the control samples $\widetilde Z_1,\ldots,\widetilde Z_m$, within each noise distribution, the parameters of $f_Z$ are estimated once from the same control samples, and the resulting fitted density is kept fixed over the nested sample sizes $n\in\{250,\,1{,}000,\,5{,}000,\,10{,}000\}$. Moreover, the quasi-Bayesian, Bayesian, and ridge-regularized Fourier deconvolution approaches use exactly the same fitted density. Figures~\ref{fig:supp-fluo-laplace}--\ref{fig:supp-fluo-4gauss} report the estimates of $f_G^{(X)}$ obtained by the quasi-Bayesian, Bayesian, and ridge-regularized Fourier deconvolution approaches.

\begin{figure}[t]
 \centering
 \includegraphics[
   width=\textwidth,
   trim=20 15 20 0,
   clip
 ]{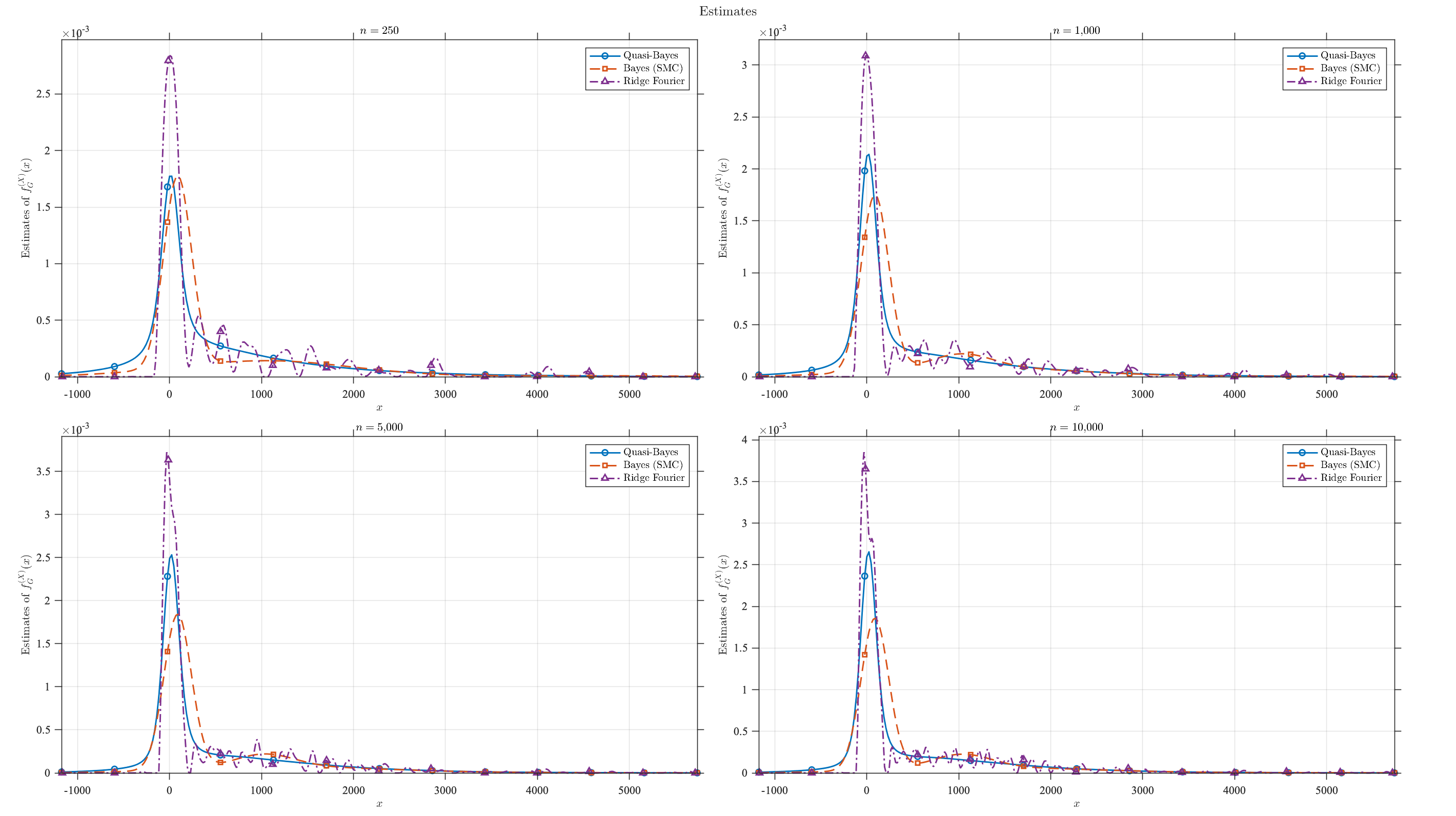}
  \caption{\footnotesize{Flow-cytometry data, with Laplace noise: comparison between quasi-Bayesian, Bayesian and ridge-regularized Fourier deconvolution approaches.}} \label{fig:supp-fluo-laplace}
\end{figure}

\begin{figure}[t]
 \centering
 \includegraphics[
   width=\textwidth,
   trim=20 15 20 0,
   clip
 ]{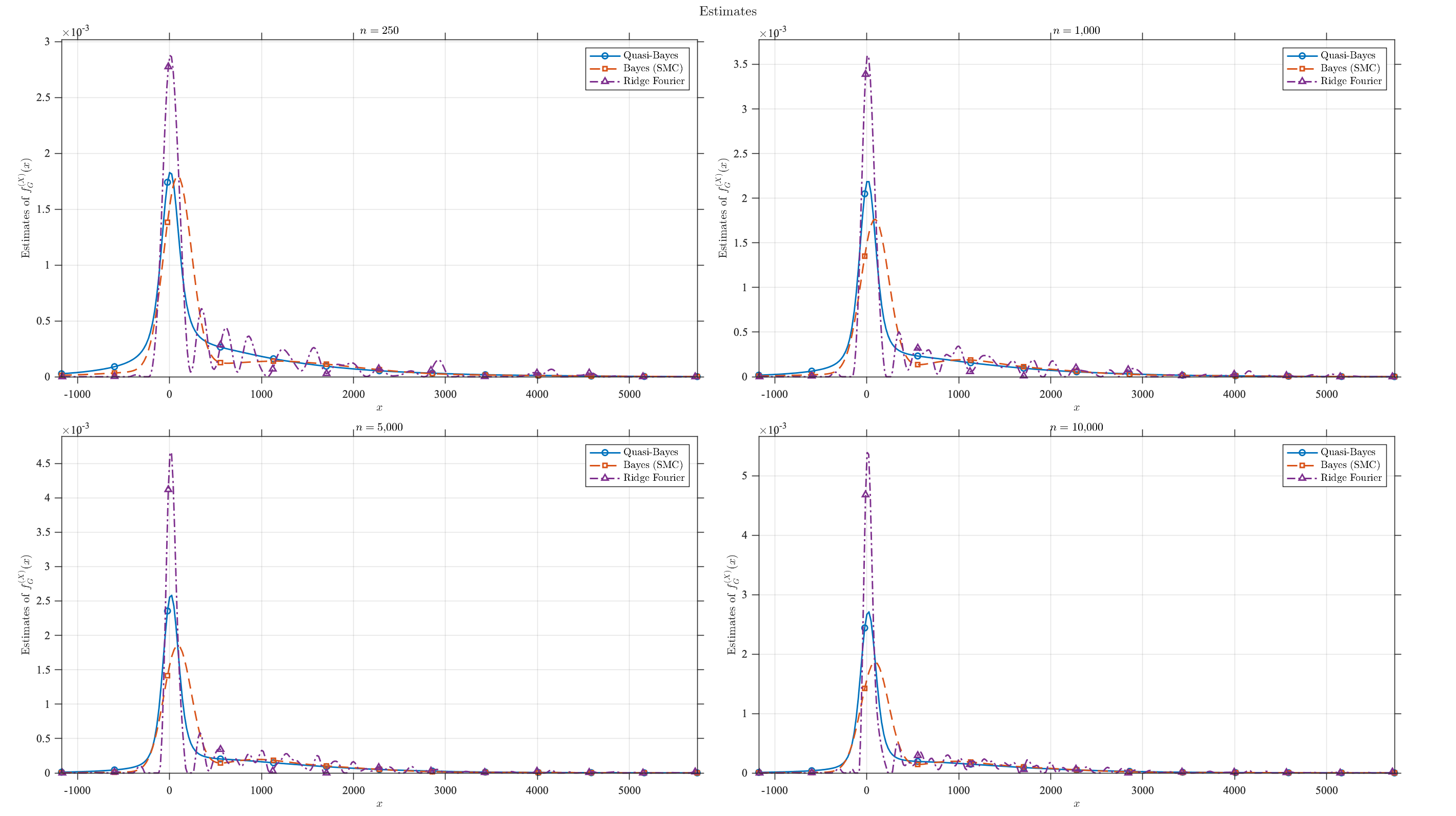}
  \caption{\footnotesize{Flow-cytometry data, with four-component Gaussian-mixture noise: comparison between quasi-Bayesian, Bayesian and ridge-regularized Fourier deconvolution approaches.}} \label{fig:supp-fluo-4gauss}
\end{figure}

To provide an additional graphical representation, we also augment the preceding figures with histograms based on a reconstructed (proxy) version of the reporter fluorescence  samples. Since the $X_{i}$'s are not observed, for each  $n\in\{250,\,1{,}000,\,5{,}000,\,10{,}000\}$ and each fitted noise specification, we make use the quasi-Bayes estimate $f_{\widetilde G_n}^{(X)}$ to define the
conditional density
\begin{displaymath}
\widehat p_n(x\mid Y_i)=\frac{f_{\widetilde G_n}^{(X)}(x)\,\widehat f_Z(Y_i-x)}{\displaystyle\int_{\mathbb R}f_{\widetilde G_n}^{(X)}(u)\,\widehat f_Z(Y_i-u)\,du},\qquad i=1,\ldots,n,
\end{displaymath}
where $\widehat f_Z$ denotes the density of the noise distribution estimated from the control samples $\widetilde Z_1,\ldots,\widetilde Z_m$. We then generate one reconstructed value $\widehat X_i^{(n)}$ from $\widehat p_n(\cdot\mid Y_i)$ for each observation and display the normalized histogram of
$\widehat X_1^{(n)},\ldots,\widehat X_n^{(n)}$. These reconstructed values should not be interpreted as observations of the $X_i$'s. Their distribution depends on both the quasi-Bayes estimate and the selected noise specification, and the resulting histogram is therefore intended only as a descriptive visualization, rather than as an independent
benchmark for comparing the estimates. The corresponding plots are reported in Figures~\ref{fig:supp-fluo-gauss_complete}--\ref{fig:supp-fluo-4gauss_complete}.

\begin{figure}[t]
 \centering
 \includegraphics[
   width=\textwidth,
   trim=20 15 20 0,
   clip
 ]{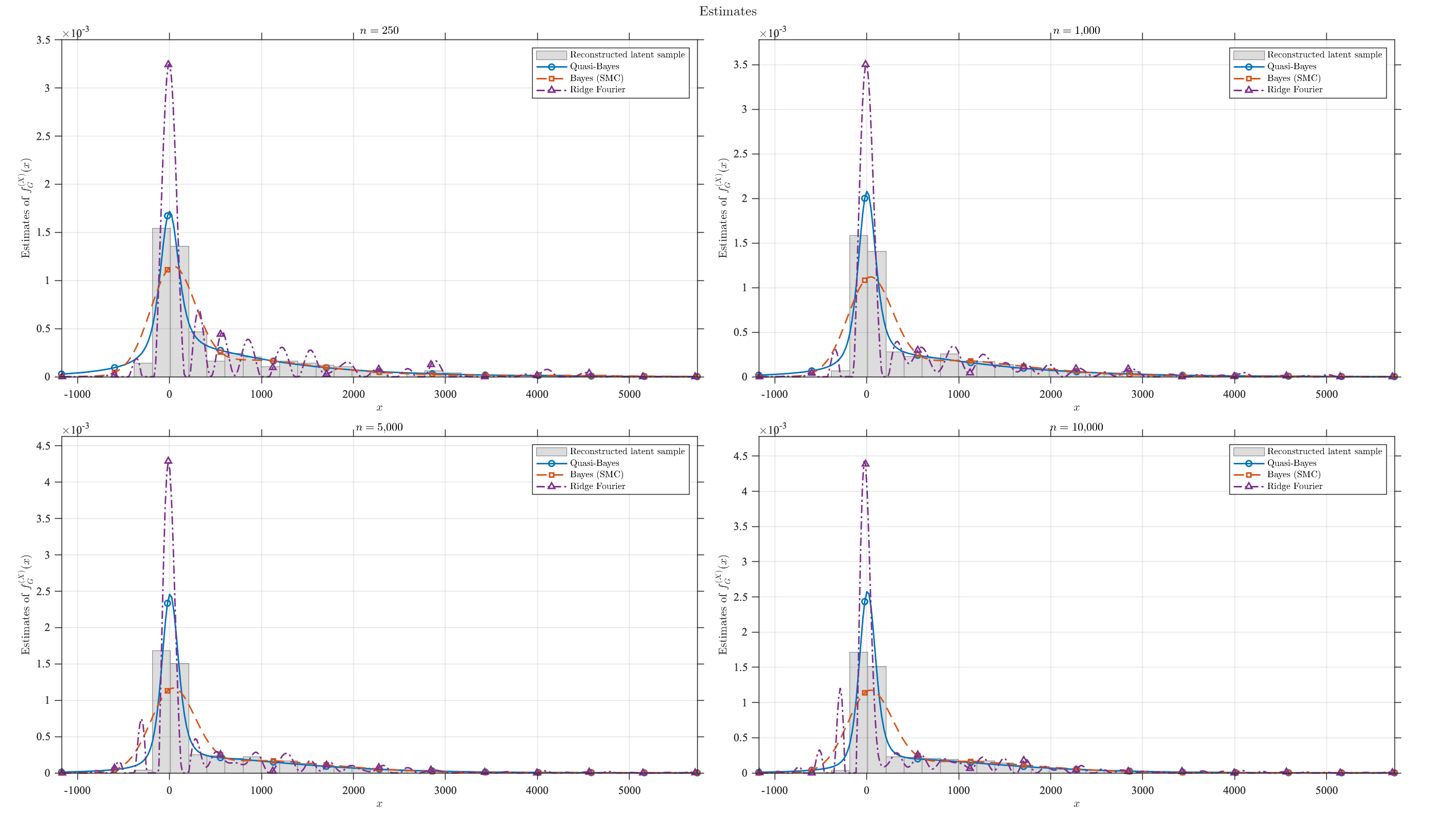}
  \caption{\footnotesize{Flow-cytometry data, with Gaussian noise: comparison between quasi-Bayesian, Bayesian and ridge-regularized Fourier deconvolution approaches.}} \label{fig:supp-fluo-gauss_complete}
\end{figure}

\begin{figure}[t]
 \centering
 \includegraphics[
   width=\textwidth,
   trim=20 15 20 0,
   clip
 ]{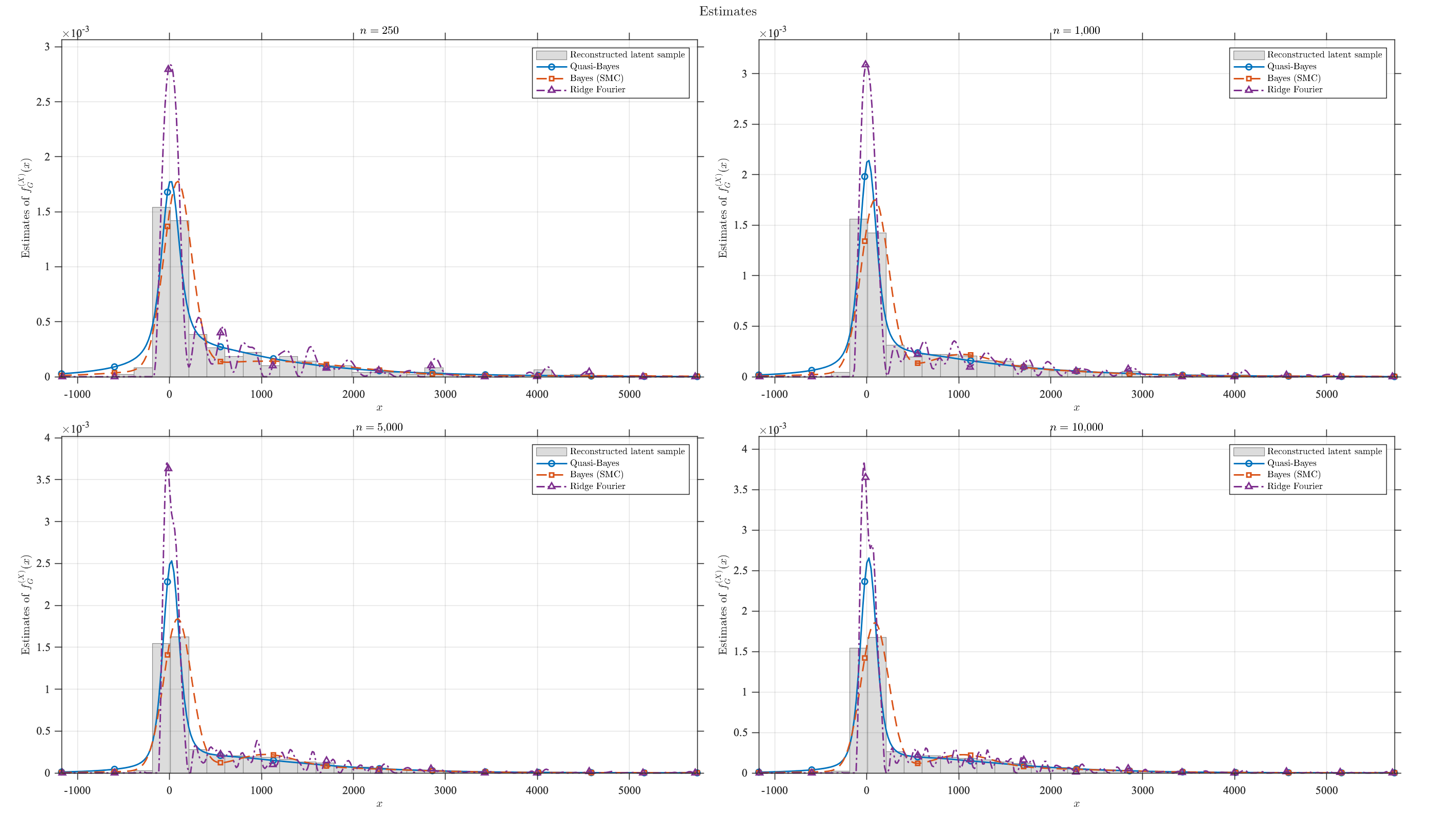}
  \caption{\footnotesize{Flow-cytometry data, with Laplace noise: comparison between quasi-Bayesian, Bayesian and ridge-regularized Fourier deconvolution approaches.}} \label{fig:supp-fluo-laplace_complete}
\end{figure}

\begin{figure}[t]
 \centering
 \includegraphics[
   width=\textwidth,
   trim=20 15 20 0,
   clip
 ]{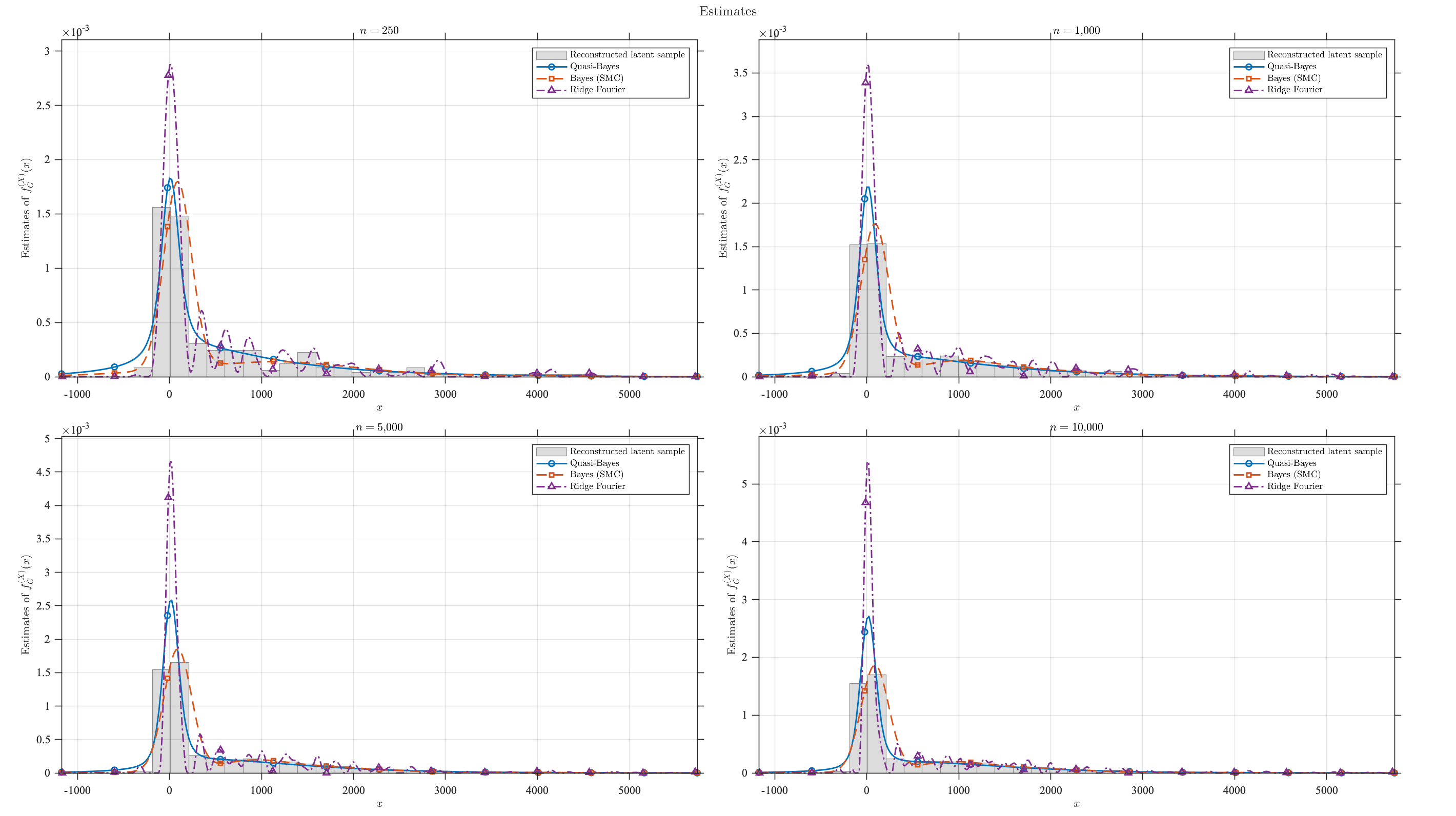}
  \caption{\footnotesize{Flow-cytometry data, with four-component Gaussian-mixture noise: comparison between quasi-Bayesian, Bayesian and ridge-regularized Fourier deconvolution approaches.}} \label{fig:supp-fluo-4gauss_complete}
\end{figure}

\subsubsection{Stellar metallicity data}\label{sec:supp-stellar}

We consider a dataset of $16{,}377$ stellar metallicity measurements obtained from the Geneva--Copenhagen Survey, which provides kinematic properties, ages, and metallicities for $F$- and $G$-type dwarf stars in the solar neighborhood; see \citet{Nor(04)} for details. Data from the same survey were previously analyzed by \citet{Bis(07)} using kernel-based deconvolution methods \citep{Ste(90),Hal(07)}. Metallicity, expressed as $[\mathrm{Fe}/\mathrm{H}]$, is the logarithmic abundance of iron relative to hydrogen and provides a measure of the fraction of heavy elements in a star, which is important for understanding stellar-formation mechanisms. Stellar metallicity cannot be measured directly, but is inferred from a star's brightness in selected spectral bands through suitable calibration procedures. The resulting calibration error is assumed to be largely stochastic and can be assessed from the dispersion between metallicity estimates obtained for the same star using different calibrations. For these data, a typical value of this dispersion is approximately $0.1$ \citep{Nor(04)}.

More precisely, let $Y_i$ denote the observed metallicity of the $i$-th star and assume the additive model $Y_i=X_i+Z_i$, for $i=1,\ldots,n$ where $X_i$ is the physical metallicity
$[\mathrm{Fe}/\mathrm{H}]$ and $Z_i$ is the corresponding calibration error. We assume that the random variables $X_i$'s are i.i.d. according to an unknown density $f_X$, that the noise variables $Z_i$'s are i.i.d. according to a density $f_Z$, and that the random variables $X_i$'s and $Z_i$'s are mutually independent. The inferential target is the density $f_X$, which is represented as the Gaussian location--scale mixture
\begin{displaymath}
f_G^{(X)}(x)=\int_{\mathbb R\times\mathbb R^{+}}\phi(x\mid\mu,\sigma^2)\,G(d\mu,d\sigma^2),\qquad x\in\mathbb R,
\end{displaymath}
where $G$ is the unknown mixing distribution. Following the analysis of \citet{Bis(07)}, for the noise distribution with density $f_{Z}$ we consider a centered (zero-mean) Laplace distribution with standard deviation $\sigma_l\in\{0.08,0.10,0.12\}$. As a sensitivity specification, we also consider a centered (zero-mean) Gaussian distribution with standard deviation $\sigma_g=0.10$. 

To implement the quasi-Bayesian approach for estimating $f_G^{(X)}$, we apply Newton's algorithm~\eqref{eq:newton} with Gaussian kernel $k(\cdot\mid\theta)=\phi(\cdot\mid\mu,\sigma^2)$, where $\theta=(\mu,\sigma^2)\in\mathbb R\times\mathbb R^+$. The parameter space $\mathbb{R}\times\mathbb R^+$ is restricted to $\Theta=[-3,2]\times[0.1,0.5]$.  We use the learning rate $\widetilde\alpha_i=(1+i)^{-1}$, for $i\geq1$, and set $\widetilde G_0$ equal to the Uniform distribution on $\Theta$. All integrals with respect to the mixing distribution are evaluated using the two-dimensional trapezoidal quadrature, and $\widetilde G_i$ is numerically renormalized after each update. For numerical implementation, $\Theta$  is discretized on a regular $100\times100$ grid.

As additional benchmarks, we also consider a Bayesian approach under the Dirichlet-process Gaussian-mixture model \eqref{eq:supp-unimodal-dp-model}, with posterior inference performed sequentially using the SMC algorithm, and the ridge-regularized Fourier deconvolution approach \citep{Ste(90),Hal(07)}. Figure~\ref{fig:supp-met_complete} reports the estimates of $f_G^{(X)}$ obtained by the quasi-Bayesian, Bayesian, and ridge-regularized Fourier deconvolution approaches.

\begin{figure}[t]
 \centering
 \includegraphics[
   width=\textwidth,
   trim=20 15 20 0,
   clip
 ]{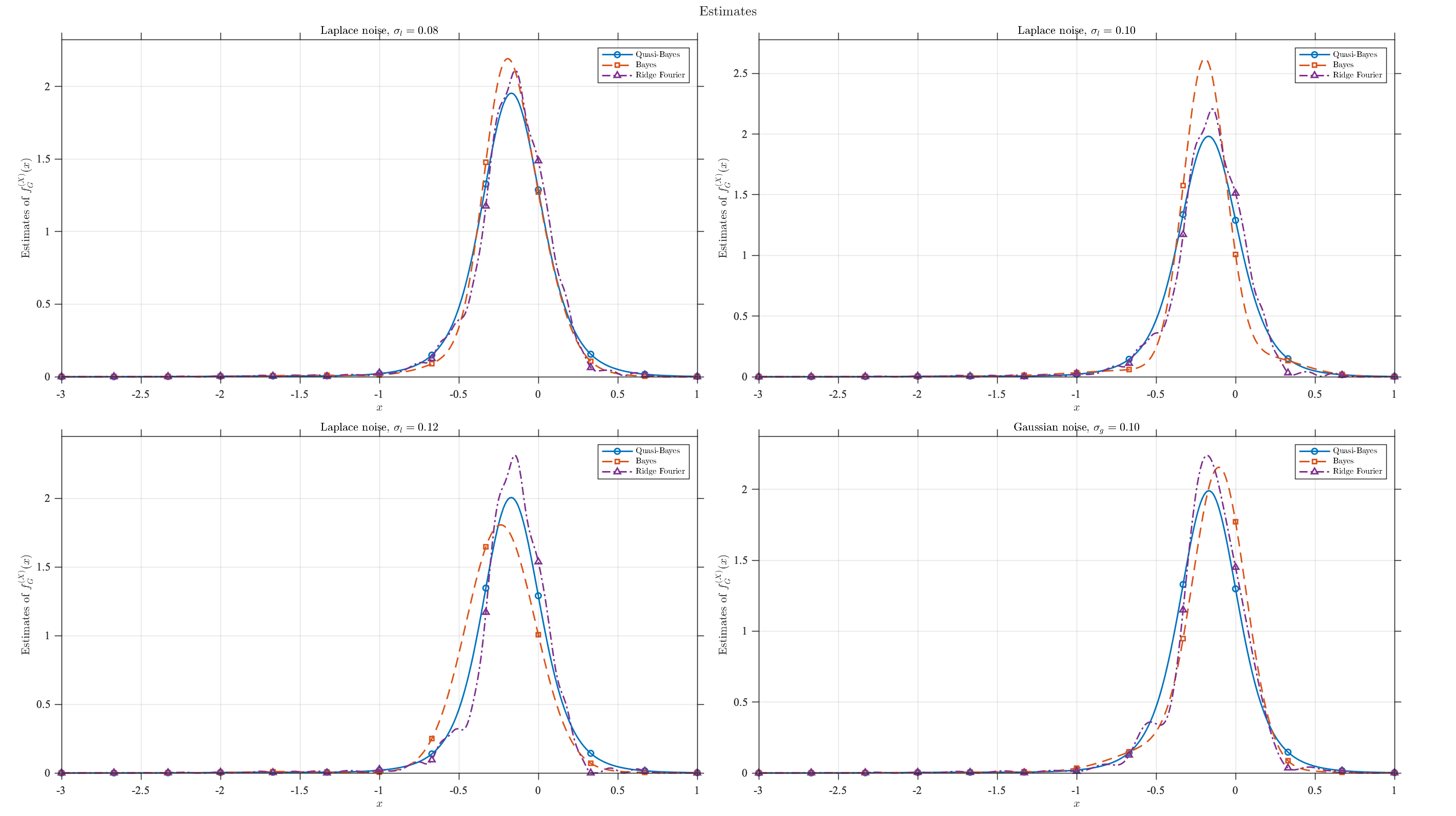}
  \caption{\footnotesize{Stellar metallicity data, with Laplace and Gaussian noise: comparison between quasi-Bayesian, Bayesian and ridge-regularized Fourier deconvolution approaches.}} \label{fig:supp-met_complete}
\end{figure}

\subsubsection{Active power-output data}\label{sec:supp-power}

We consider a semi-synthetic dataset of active power outputs, expressed in megawatts, from a 68-bus power network with 16 generators. Using GridSTAGE, a platform for emulating power-system data, we obtain a sequence of $4{,}000$ noise-free outputs at a sampling rate of $50$ observations per second; see \citet{Nan(20)} for details. The observations are retained in their original acquisition order, which, however, does not amount to assuming a time-series dependence structure; it only reflects the order in which the observations become available to the sequential procedure. Since the outputs generated by GridSTAGE are noise-free, we construct a semi-synthetic deconvolution experiment by adding independent random perturbations. This allows us to evaluate the procedures under controlled noise distributions and levels, while retaining the original GridSTAGE outputs as a benchmark for the latent signal.

More precisely, let $X_i$ denote the noise-free active power output at the $i$-th acquisition time and define the perturbed observation $Y_i=X_i+Z_i$, for $i=1,\ldots,n$, where $Z_i$ is the artificially added measurement error. We assume that the random variables $X_{i}$'s are i.i.d. according to an unknown density $f_X$, that the noise random variables $Z_{i}$'s are i.i.d. according to a density $f_Z$, and that the random variables $X_{i}$'s and $Z_{i}$'s are mutually independent. Then, the inferential target is the density $f_X$, which is represented as the Gaussian location-scale mixture,
\begin{displaymath}
f_G^{(X)}(x)=\int_{\mathbb R\times\mathbb R^{+}}\phi(x\mid\mu,\sigma^2)\,G(d\mu,d\sigma^2),\qquad x\in\mathbb R,
\end{displaymath}
where $G$ is the unknown mixing distribution. For the noise distribution with density $f_{Z}$, which is fully controlled in the experiment, we consider either a centered (zero-mean) Laplace distribution with standard deviation $\sigma_l\in\{0.5,1,2,4\}$ or a centered (zero-mean) Gaussian distribution with standard deviation $\sigma_g\in\{0.5,1,2,4\}$. Since the experiment is semi-synthetic, the values $X_1,\ldots,X_n$ are available and can be used as a benchmark to assess the estimated $f_{G}^{(X)}$.

To implement the quasi-Bayesian approach for estimating $f_G^{(X)}$, we apply Newton's algorithm~\eqref{eq:newton} with Gaussian kernel $k(\cdot\mid\theta)=\phi(\cdot\mid\mu,\sigma^2)$, where $\theta=(\mu,\sigma^2)\in\mathbb R\times\mathbb R^+$. The parameter space $\mathbb{R}\times\mathbb R^+$ is restricted to $\Theta=[50,150]\times[1,5]$.  We use the learning rate $\widetilde\alpha_i=(1+i)^{-1}$, for $i\geq1$, and set $\widetilde G_0$ equal to the Uniform distribution on $\Theta$. All integrals with respect to the mixing distribution are evaluated using the two-dimensional trapezoidal quadrature, and $\widetilde G_i$ is numerically renormalized after each update. For numerical implementation $\Theta$ is discretized on a regular $100\times100$ grid.

As additional benchmarks, we also consider a Bayesian approach under the Dirichlet-process Gaussian-mixture model \eqref{eq:supp-unimodal-dp-model}, with posterior inference performed sequentially using the SMC algorithm, and the ridge-regularized Fourier deconvolution approach \citep{Ste(90),Hal(07)}. Figures~\ref{fig:supp_power_laplace_complete}--\ref{fig:supp_power_gauss_complete} report the estimates of $f_G^{(X)}$ obtained by the quasi-Bayesian,
Bayesian, and ridge-regularized Fourier deconvolution approaches, together with the (benchmark) histogram of the noise-free power outputs $X_1,\ldots,X_n$.

\begin{figure}[t]
 \centering
 \includegraphics[
   width=\textwidth,
   trim=20 15 20 0,
   clip
 ]{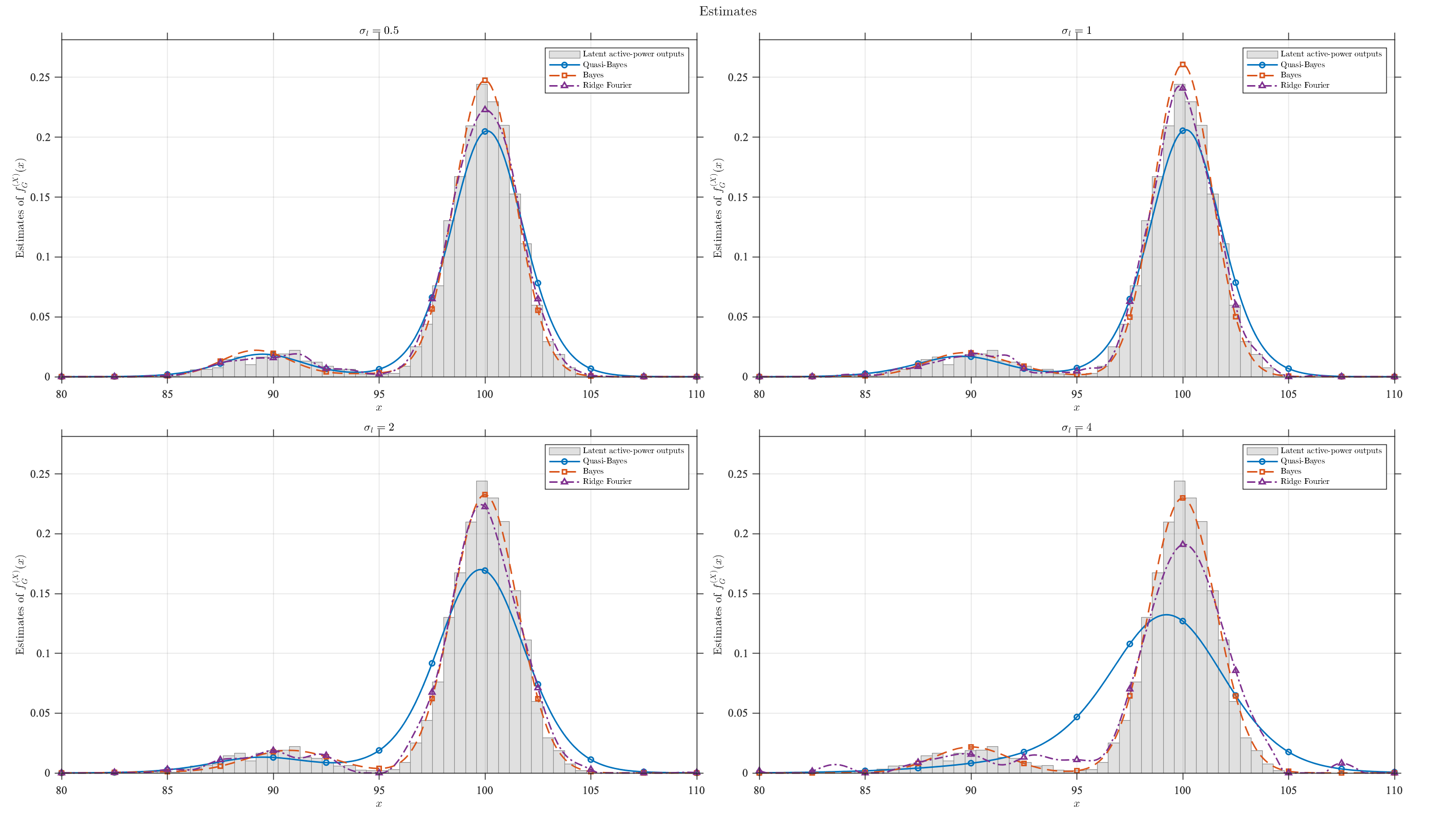}
  \caption{\footnotesize{Active power-output data, with Laplace noise: comparison between quasi-Bayesian, Bayesian and ridge-regularized Fourier deconvolution approaches.}} \label{fig:supp_power_laplace_complete}
\end{figure}

\begin{figure}[t]
 \centering
 \includegraphics[
   width=\textwidth,
   trim=20 15 20 0,
   clip
 ]{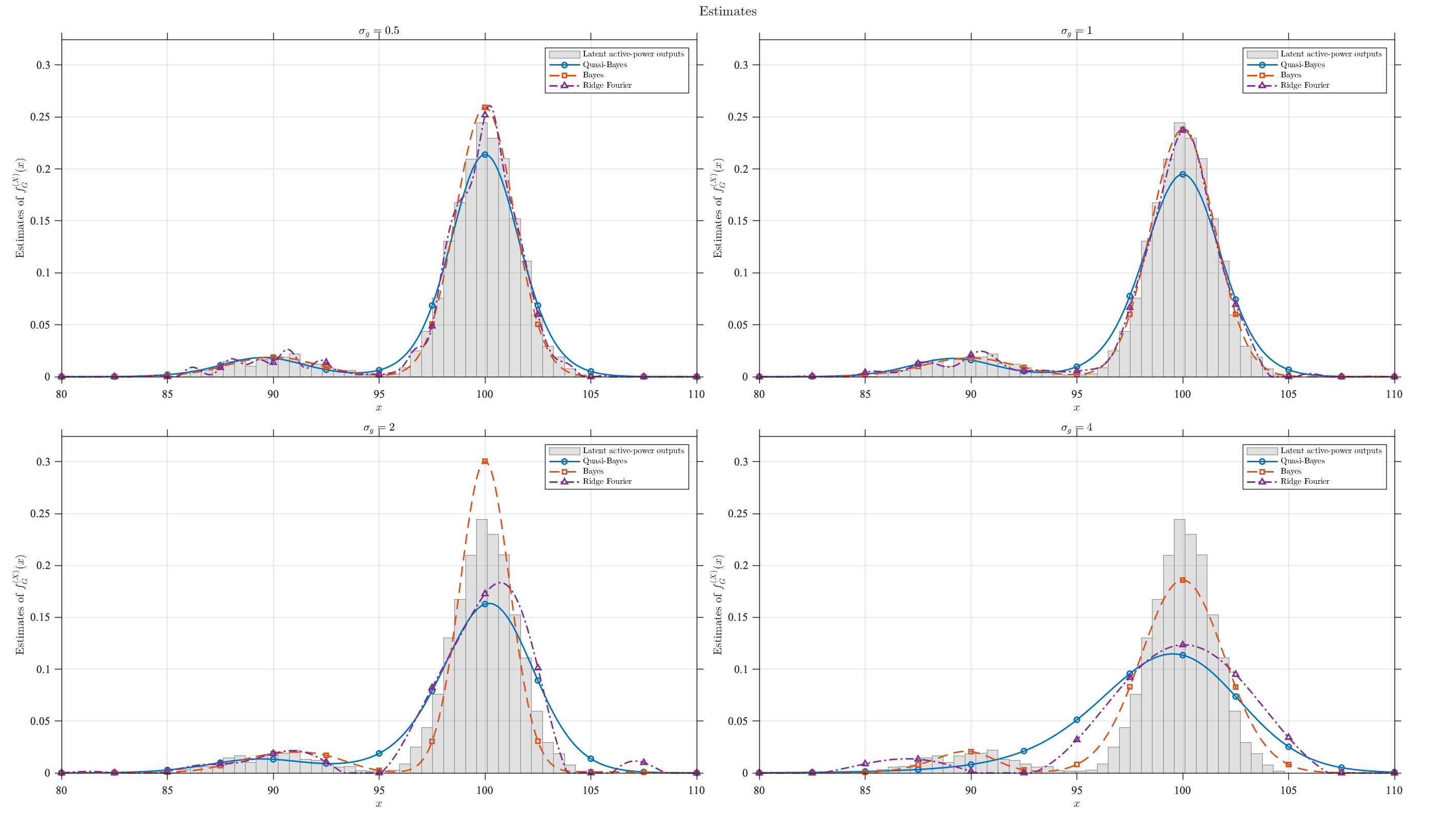}
  \caption{\footnotesize{Active power-output data, with Gaussian noise: comparison between quasi-Bayesian, Bayesian and ridge-regularized Fourier deconvolution approaches.}} \label{fig:supp_power_gauss_complete}
\end{figure}

\section{Auxiliary results}
\subsection{Learning-rate calibration}\label{sec:lrates}
In this section we propose a data-driven calibration of the learning rates $\widetilde\alpha_n$ in the quasi-Bayesian learning model \eqref{eq:Y}--\eqref{eq:newton}. The proposed calibration is based on 
comparing the evolution of the recursive estimate based on the noisy observations $(Y_n)$ with the evolution that would be obtained if Newton's algorithm were applied directly to the latent variables $(X_n)$, that is, in the absence of measurement noise. The frequentist merging result in Corollary \ref{teo_merg} provides the theoretical justification for calibrating $\widetilde\alpha_n$ by matching the updates of the two recursive procedures.

For Newton's algorithm based on the $X_{n}$'s, we assume the standard learning rate $\alpha_{n}=(\alpha+n)^{-1}$, for $\alpha>0$. Denote by $g_n(\theta)$ and $\widetilde g_n(\theta)$ the densities of $G_n$ and $\widetilde G_n$ with respect to the dominating measure $\mu$. The updates or innovations of $(g_{n})_{n\geq0}$ are
\begin{equation}\label{up1}
 I_i(\theta)=g_i(\theta)-g_{i-1}(\theta)=\frac{1}{\alpha+i}(g_{i-1}(\theta\mid X_i)- g_{i-1}(\theta))\qquad i\geq1,
\end{equation}
where $g_{i-1}(\theta\mid X_i)$ is the density with respect to $\mu$ and at the point $\theta$ of $G_{i-1}(\cdot\mid X_i)$.
Instead, for Newton's algorithm based on the $Y_{n}$'s, we assume that $\widetilde{\alpha}_{n}$ is in a class of standard learning rates defined as $\{\left(\alpha+n\right)^{-\gamma}: 2^{-1} <\gamma\leq 1\}$. Then, the updates or innovations of $(\widetilde{g}_{n})_{n\geq0}$ are of the form
\begin{equation}\label{up2}
\widetilde I_i(\theta,\gamma)=\widetilde{g}_i(\theta)-\widetilde{g}_{i-1}(\theta)=(\alpha+i)^{-\gamma}(\widetilde g_{i-1,\gamma}(\theta\mid Y_i)-\widetilde g_{i-1,\gamma}(\theta))\qquad i\geq1.
\end{equation}
Other choices for $\alpha_{n}$ and $\widetilde{\alpha}_{n}$ are possible; we refer to \citet{For(20)} and references therein for details.
 
Based on \eqref{up1}-\eqref{up2}, we calibrate the parameter $\gamma$ by matching the sizes of the updates. For $i\geq1$, we set
\begin{displaymath}
\Delta_i(\theta,\gamma)=\log|g_{i-1}(\theta\mid X_i)-g_{i-1}(\theta)|-\log|\widetilde g_{i-1,\gamma}(\theta\mid Y_i)-\widetilde g_{i-1,\gamma}(\theta)|,
\end{displaymath}
such that $\log|\widetilde I_i(\theta,\gamma)|-\log|I_i(\theta)|=\Delta_i(\theta,\gamma)-(1-\gamma)\log\left(\alpha+i\right)$. In particular, to achieve the matching we need to find the value of the parameter $\gamma\in(\frac 1 2,1]$ that minimizes, for some choice of $M\in\mathbb N$,
\begin{displaymath}
\sum_{i=1}^M \int_{\Theta}  E\left(\left(\Delta_i(\theta,\gamma)-(1-\gamma)\log\left(\alpha+i\right)\right)^2\,\middle|\,  \mathcal G_{i-1}\right)\widetilde g_{i-1,\gamma}(\theta)\mu(\ddr\theta).
\end{displaymath}
We solve the problem through Monte Carlo simulation, for $\gamma$ in a grid, by means of random sampling $(\theta_0^{(\gamma)},\dots,\theta_M^{(\gamma)})$. See Algorithm \ref{alg1} for the details on the algorithm for the calibration of the learning rate.

\begin{algorithm}[H]
  \caption{Calibrating the learning rate $\widetilde{\alpha}_{n}=(\alpha+n)^{-\gamma}$.}\label{alg1}
\SetAlgoLined
\KwIn{$N$, $L$; $M$; $B=\{b_1,\dots,b_L\}$; $g_0$; $k$; $\widetilde{k}$;}

\For{$l=1$ to $L$}{

Set $\widetilde{g}_0=g_0$

\For{$i = 1$ to $M$}{

Sample $\theta_i\sim\widetilde g_{i-1}(\cdot)$; $X_i\sim k(\cdot\mid\theta_i)$; $Z_i\sim f_Z(z)$

Compute $Y_i=X_i+Z_i$

Compute $\widetilde g_{i-1}(\ddr\theta\mid Y_i)=\frac{\widetilde k(Y_i\mid\theta)\widetilde g_{i-1}(\ddr\theta)}{\int_{\Theta} \widetilde k(Y_i\mid \theta)\widetilde g_{i-1}(\ddr\theta)}$

Compute $g_{i-1}(\ddr\theta\mid X_i)=\frac{k(X_i\mid \theta)g_{i-1}(\ddr\theta)}{\int_{\Theta} k(X_i\mid \theta)g_{i-1}(\ddr\theta)}$

Compute $\widetilde g_i(\ddr\theta)=\widetilde g_{i-1}(\ddr\theta)+\frac{\left(\widetilde g_{i-1}(\ddr\theta\mid Y_i)-\widetilde g_{i-1}(\ddr\theta)\right)}{\left(1+\frac{i}{\alpha}\right)^{\gamma_l} }$

Compute $g_i(\theta)=g_{i-1}(\theta)+\frac{\left(g_{i-1}(\theta\mid X_i)- g_{i-1}(\theta)\right)}{1+\frac{i}{\alpha}}$

Return $\Delta_i(\theta,\gamma_l)=\log\frac{|g_{i-1}(\theta\mid X_i)-g_{i-1}(\theta)|}{|\widetilde g_{i-1}(\theta\mid Y_i)-\widetilde g_{i-1}(\theta)|}$}}

\KwOut{$\hat \gamma=\mbox{argmin}_{l=1,\dots,L}\sum_{i=1}^M\left(\Delta_i(\theta,\gamma_l)-(1-\gamma_l)\log\left(\alpha+i\right)\right)^2$}
\end{algorithm}

\subsection{Rate of convergence for finite mixtures}\label{app:frequentist}

In Section \ref{sec5}  we proved that, under regularity assumptions,  the quasi-Bayesian sequential learning estimate of $f_{G^*}^{(X)}$ is consistent.  In this section, we derive an upper bound on the rate of convergence in case of finite mixtures with known support points.

Let $\Theta$ be a known finite set and let $\mu$ denote the counting measure on it. Assume that the  $X_n$ are i.i.d. with density function 
\begin{equation}\label{eq:fg*discr}
f_{g^*}^{(X)}(x)=\int k(x\mid\theta)g^*(\theta)\mu(d\theta),
\end{equation}
with respect to a dominating measure $\lambda$. The quasi-Bayesian learning process for the probability distribution of the $X_n$, based on the observation of the $Y_n$ is given by
\begin{equation}\label{eq:appfgn}
    f_{\widetilde g_n}^{(X)}(x)=\int k(x\mid\theta)\widetilde g_n(\theta)\mu(\ddr\theta)
\end{equation}
where
\begin{equation}
\label{eq:appgn}
    \widetilde g_{n+1}(\theta)=(1-\widetilde \alpha_{n+1} )\widetilde g_n(\theta)+\frac{(f_Z*k)(Y_{n+1}\mid\theta)}{\int_\Theta (f_Z*k)(Y_{n+1}\mid\theta)\widetilde g_n(\theta)\mu(\ddr\theta)}\widetilde g_n(\theta),
\end{equation}
given an initial guess $\widetilde{g}_{0}$ and a sequence $(\widetilde{\alpha}_{n})_{n\geq1}$ of real numbers in $(0,1)$ with $\sum_{n\geq1}\widetilde{\alpha}_{n}=+\infty$ and $\sum_{n\geq1}\widetilde{\alpha}_{n}^{2}<+\infty$.
To underline that the $X_n$'s are i.i.d., we denote the probability measure on $(\Omega,\mathcal F)$ by $P^*$. 

\begin{thm}\label{th:frequentist}
Let $(X_n)_{n\geq1}$, $(Y_n)_{n\geq1}$ and $(Z_n)_{n\geq1}$ be sequences of random variables defined on $(\Omega,\mathcal F,P^*)$, such that the $X_n$'s are i.i.d. with density $f_{g^*}^{(X)}$ in \eqref{eq:fg*discr}, the $Z_n$'s are i.i.d. with density $f_Z$ and independent of the $X_n$'s, and $Y_n=X_n+Z_n$ for $n\geq 1$. Let $f_{\widetilde g_n}^{(X)}$ be in \eqref{eq:appfgn}, where $\widetilde g_n$ satisfies \eqref{eq:appgn} with $\widetilde\alpha_n=(\alpha+n)^{-\gamma}$ for $\alpha>0$ and $\gamma\in (1/2,1)$ and $g_0(\theta)>0$ for every $\theta\in\Theta$. 
Under the assumptions A1)--A5) and F3)
it holds $P^*$-a.s. that, for $\delta<1-1/(2\gamma)$,
\begin{displaymath}
|f_{\widetilde g_n}^{(X)}(x)-f_{g^{\ast}}^{(X)}(x)|=o(n^{-\delta})\qquad x\in\mathbb R,
\end{displaymath}
and
\begin{displaymath}
\int_{\mathbb R}|f_{\widetilde g_n}^{(X)}(x)-f_{g^*}^{(X)}(x)|\lambda(dx)=o(n^{-\delta}).
\end{displaymath}
\end{thm}

Note that Theorem \ref{th:frequentist} specifies a learning rate of the form $\widetilde{\alpha}_n = (\alpha + n)^{-\gamma}$ with $\gamma \in (1/2, 1)$. This is a standard choice for the learning rate, and the exclusion of the value $\gamma = 1$ is not particularly restrictive; indeed calibrating $\widetilde \alpha_n$ as  in Section \ref{sec:lrates} typically results in $\gamma < 1$.

Arguing as in the proof of Lemma~\ref{lemma:rate2}, one can show that
\[
W_1\!\left(
F_{\widetilde G_n}^{(X)},
F_{G^\star}^{(X)}
\right)
=
o_{\mathbb P^\star}\!\left(
n^{-\delta}
\right),
\]
for every $\delta<1-\frac{1}{2\gamma}$.
Under the assumptions of Proposition \ref{th:rateNewton}, with
\(\gamma\in(2/3,1)\), one also has
\[
W_1\!\left(
F_{\widetilde G_n}^{(X)},
F_{G^\star}^{(X)}
\right)
=
o_{\mathbb P^\star}\!\left(
n^{-\frac{(1-\gamma)(\alpha+1)}{2(\alpha+2)}}
\right).
\]
Since
\[
1-\frac{1}{2\gamma}
>
\frac{(1-\gamma)(\alpha+1)}{2(\alpha+2)},
\qquad
\gamma\in(2/3,1),
\]
the convergence rate established in Theorem~\ref{th:frequentist} for finite mixtures
is strictly faster than that of Proposition \ref{th:rateNewton} for general mixtures under Laplace error distribution.


\subsubsection{Proof of Theorem \ref{th:frequentist}}
The proof of Theorem \ref{th:frequentist} is based on the following Lemma \ref{lem:stochapp1} and on the rate of convergence of $\widetilde g_n$ to $g^*$. This last result is known in the literature \citep{Mar(12)}, but we argue that the proof is not complete. We provide for completeness the proof in Lemma \ref{lem:stochapp3}. 

Let $d$ be the cardinality of $\Theta$.
We can identify each density function with respect to the counting measure on $\Theta$ with a vector in $\mathbb R^d$. More precisely, let $\Theta=\{\vartheta_1,\dots,\vartheta_d\}$. Then $g=[g(\vartheta_1),\dots,g(\vartheta_{d})]^T$.
We also write $\widetilde k_y$ for the $d$-dimensional vector  $\widetilde k(y\mid\theta)$ with $\theta\in \Theta$, and define, for each density function $g$ with respect to $\mu$, $f_g^{(Y)}(y)=\int_\Theta \widetilde k(y\mid\theta)g(\theta)\mu(d\theta)$.

\begin{lem}\label{lem:stochapp1}
  Under the assumptions of Theorem \ref{th:frequentist}, there exists, for $\lambda$-almost every $x\in\mathbb R$ a constant $C(x)$ such that for every $n\geq 1$
  $$
  |f_{\widetilde g_n}^{(X)}(x)-f_{g^*}^{(X)}(x)|\leq C(x) ||\widetilde g_n-g^*||_2,
  $$
 where $\widetilde g_n$ and $g^*$ are interpreted as vectors in $\mathbb R^d$ and $||\cdot||_2$ is the Euclidean norm. Furthermore, there exists a constant $C$ such that
  $$
 \int |f_{\widetilde g_n}^{(X)}(x)-f_{g^*}^{(X)}(x)|\lambda(dx)\leq C ||\widetilde g_n-g^*||_2,
  $$
  \end{lem}
\begin{proof}
By the assumption A3), there exists for $\lambda$-almost every $x$, a constant $C(x)$ such that $\sup_\theta k(x\mid\theta)\leq C(x)$. Thus
\begin{align*}
     |f_{\widetilde g_n}^{(X)}(x)-f_{g^*}^{(X)}(x)|&\leq  \int_\Theta k(x\mid\theta)|\widetilde g_n(\theta)-g^*(\theta)|\mu(d\theta)\\
     &
       \leq C(x)\int_{\Theta}|\widetilde g_n(\theta)-g^*(\theta)|\mu(d\theta)\\&
     \leq C(x) ||\widetilde g_n-g^*||_2.
\end{align*}
 Moreover, again by assumption A3), the exists a constant $C$ such that $\int \sup_{\theta\in\Theta}k(x\mid\theta)\lambda(\ddr x)\leq C$. Hence,
    \begin{align*}
       \int |f_{\widetilde g_n}^{(X)}(x)-f_{g^*}^{(X)}(x)|\lambda(dx)&\leq
       \iint_\Theta k(x\mid\theta)|\widetilde g_n(\theta)-g^*(\theta)|\mu(d\theta)\lambda(dx)\\&
       \leq \int \sup_{\theta\in\Theta}k(x\mid\theta)\lambda(\ddr x) \int_{\Theta}|\widetilde g_n(\theta)-g^*(\theta)|\mu(d\theta)\\&\leq C ||\widetilde g_n-g^*||_2
         \end{align*}
         which concludes the proof.
         \end{proof}



\begin{lem}\label{lem:stochapp3}
Let $g_n$ and $g^*$ be the representations of $g_n$ ($n\geq 0$) and $g^*$ in $\mathbb R^d$.     Under the assumptions of Theorem \ref{th:frequentist}, for every $\delta<1-1/(2\gamma)$
    $$
    ||g_n-g^*||_2=o(n^{-\delta})\quad P^*-a.s.
    $$
\end{lem}
\begin{proof}
Since $\Theta$ is finite, then the assumptions F1)--F2) hold. Therefore we can apply \textcolor{blue}{Corollary} \ref{teo_cons} and deduce that the random vector $\widetilde g_n$ converges to $g^*$, $P^*$-a.s.
Moreover, $(\widetilde g_n)$ satisfies the 
 stochastic approximation in $\mathbb R^d$:
\begin{align*}
\widetilde g_{n+1}&=\widetilde g_{n}+\widetilde \alpha_{n+1} \widetilde g_{n}\circ \left(\frac{\widetilde k_{Y_{n+1}}}{f_{\widetilde g_{n}}^{(Y)}(Y_{n+1})}-\bf 1\right)\\
&=\widetilde g_{n}+\widetilde \alpha_{n+1} h(\widetilde g_{n})+\widetilde \alpha_{n+1}\epsilon_{n+1},
\end{align*}
with initial value $\widetilde g_0$, where
$$
h(g)=g\circ \left(
\int \frac{\widetilde k_y}{f_g^{(Y)}(y)}
f_{g^*}^{(Y)}(y)\lambda(dy)-\bf 1
\right),
$$
and
$$
\epsilon_{n+1}=g\circ \left(
\frac{\widetilde k_{Y_{n+1}}}{f_g^{(Y)}(Y_{n+1})}
f_{g^*}^{(Y)}(Y_{n+1})-
\int \frac{\widetilde k_y}{f_g^{(Y)}(y)}
f_{g^*}^{(Y)}(y)\lambda(dy)
\right).
$$
    The thesis follows from Theorem 3.1.1 in \cite{Chen(05)} if we can prove that the following conditions hold:\begin{itemize}
        \item[F3.1] $\widetilde \alpha_n\rightarrow 0$, $\sum_{n=1}^\infty \widetilde \alpha_n=\infty$, $\widetilde \alpha_{n+1}^{-1}-\widetilde \alpha_n^{-1}\rightarrow 0$;
        \item[F3.2] $\sum_{n=1}^\infty \widetilde \alpha_n^{1-\delta/2}\epsilon_n$ converges, $P^*$-a.s.;
        \item[F3.3] $h$ is measurable and locally bounded, and is differentiable at $g^*$. Let $H$ be a matrix such that, as $g\rightarrow g^*$,
        $$
        h(g)=H(g-g^*)+r(g),\quad r(g^*)= 0,\quad ||r(g)||_2=o(||g-g^*||_2).
        $$
        All the eigenvalues of the matrix $H$ have negative real parts.
    \end{itemize}
\begin{remark}
        Theorem 3.1.1 in \cite{Chen(05)} also requests  the following condition:
  {\em  F3.4    There exists a continuously differentiable function $\mathcal{I}:\Delta^d\rightarrow \mathbb R$ such that $$\sup_{\delta_1\leq {\rm d}(g,g^*)\leq \delta_2}h(g)^T\nabla_{g}\mathcal{I}<0,$$
    for every $\delta_2>\delta_1>0$.}   
   However, a close inspection of the proof of Theorem 3.1.1 in Chen  shows that F3.4 is employed only to ensure convergence of $\widetilde g_n$ towards $g^*$. In our case, the convergence has been already proved.
\end{remark}
    F3.1 is obvious. To prove F3.2, notice that $(\sum_{k=1}^n\widetilde \alpha_k^{1-\delta/2}\epsilon_k)_{n\geq 1}$ is a martingale. Moreover, since $\Theta$ is finite and 
    $$
    \int\int_\Theta g(\theta)\frac{\widetilde k(y\mid\theta)}{f_g^{(Y)}(y)}f_{g^*}^{(Y)}(y)\lambda(dy)\mu(d\theta)=1,
    $$
    then $|\epsilon_{n,i}|\leq 1$ for every $i$. Together with $\gamma(2-\delta)>1$, this implies that, for every $i=1,\dots,d$,
    \begin{align*}
        \sup_nE^*((\sum_{k=1}^n\widetilde \alpha_k^{1-\delta/2}\epsilon_{k,i})^2)
        &= \sup_n\sum_{k=1}^n\widetilde \alpha_k^{2-\delta}E^*(\epsilon_{k,i})^2\\
        &\leq  \sum_{k=1}^\infty\left(\frac{1}{\alpha+k}\right)^{\gamma(2-\delta)}<+\infty.
        \end{align*}
        Thus, $(\sum_{k=1}^n \widetilde{\alpha}_k^{1-\delta/2} \epsilon_k)_{n \geq 1}$ is a uniformly integrable martingale, and therefore it converges $P^*$-a.s.
    We now prove F3.3. The function $h$ is continuously differentiable at $g^*$ and
        $$
        H(\theta,\theta')=-g^*(\theta)\int\frac{\widetilde k_y(\theta)\widetilde k_y(\theta')}{f_{g^*}^{(Y)}(y)}\lambda(dy).
        $$
        The matrix $H$ can be seen as the product of the diagonal matrix ${\rm diag}(g^*)$, which is positive definite, times $-M$, with 
        $$M(\theta,\theta')=\int \frac{\widetilde k_y(\theta)\widetilde k_y(\theta')}{f_{g^*}^{(Y)}(y)}\lambda(dy).$$ We now prove that $M$ is positive definite. Given a vector $u=[u(\theta)]_{\theta\in\Theta}\in \mathbb R^d$ with $||u||=1$, we can write that
        \begin{align*}
        u^T Mu=&\iint_{\Theta^2}u(\theta)u(\theta')\int \frac{\widetilde k_y(\theta)\widetilde k_y(\theta')}{f_{g^*}^{(Y)}(y)}\lambda(dy)\mu(d\theta)\mu(d\theta')\\&=
        \int \frac{1}{f_{g^*}^{(Y)}(y)}\left(\int_\Theta u(\theta) \widetilde k_y(\theta)\mu(d\theta)\right)^2\lambda(dy)\geq 0,
        \end{align*}
        with equality if and only if $\int_\Theta u(\theta)\widetilde k_y(\theta)\mu(d\theta)=0$, $\lambda$-almost everywhere. The only solution of the above equation in $u$ is $u(\theta)=0$ for every $\theta\in\Theta$, which is inconsistent with $||u||=1$. Since the function $u^TMu$ is continuous on the compact set $||u||=1$, then the minimum is attained, and by the above reasoning it is strictly positive. It follows that $M$ is positive definite.
    \end{proof}

    \begin{proof}[Proof of Theorem \ref{th:frequentist}]
        The proof is an immediate consequence of Lemma \ref{lem:stochapp1} and Lemma \ref{lem:stochapp3}.
    \end{proof}


\end{document}